\documentclass[journal,twoside,web]{ieeecolor}
\usepackage{generic}
\usepackage{palatino,epsfig,fleqn,cite,color}
\usepackage{amsmath,amssymb,amsfonts}
\usepackage{bm,cases,subcaption} %
\usepackage{graphicx}
\usepackage{algorithm,algorithmic}
\usepackage{textcomp}

\newtheorem{prop}{Proposition}
\newtheorem{lemma}{Lemma}
\newtheorem{theorem}{Theorem}
\newtheorem{remark}{Remark}

\newtheorem{defn}{Definition}
\def\QED{~\rule[-1pt]{5pt}{5pt}\par}
\DeclareMathAlphabet{\matheur}{U}{eur}{m}{n}
\DeclareMathAlphabet{\matheurb}{U}{eur}{b}{n}
\DeclareMathAlphabet{\matheus}{U}{eus}{m}{n}
\DeclareMathAlphabet{\matheuf}{U}{euf}{m}{n}

\newcommand{\bfThe}{\mathbf{\Theta}}
\newcommand{\bfdel}{\mbox{\boldmath$\delta$}}
\newcommand{\bfDel}{{\bf\Delta}}

\newcommand{\hs}{\hspace{4mm}}

\newcommand{\IC}{\mathbb{C}}
\newcommand{\IR}{\mathbb{R}}
\newcommand{\IS}{\mathbb{S}}

\newcommand{\IU}{\mathbb{U}}

\newcommand{\II}{\mathbb{I}}

\newcommand{\diag}{{\rm diag}}

\newcommand{\phio}{\phi_p}
\newenvironment{mat}{\left[\begin{array}}{\end{array}\right]}

\newcommand{\red}{\color{black}}%
\newcommand{\blue}{\color{black}}%

\definecolor{bg}{rgb}{0,0.5,0.5}		%
\newcommand{\ti}{\color{black}}%
\definecolor{rg}{rgb}{0.4,0.6,0}		%

\newcommand{\rhoL}{{\ti\varrho_+}}		%

\DeclareMathAlphabet{\matheur}{U}{eur}{m}{n}
\DeclareMathAlphabet{\matheurb}{U}{eur}{b}{n}
\DeclareMathAlphabet{\matheus}{U}{eus}{m}{n}
\DeclareMathAlphabet{\matheuf}{U}{euf}{m}{n}

\newcommand{\LFT}{\mathcal{F}_\ell}

\newcommand{\Cp}{\mathcal{C}_p}
\newcommand{\Rp}{\mathcal{R}_p}
\newcommand{\Linf}{\mathcal{L}_{\infty}}
\newcommand{\Hinf}{\mathcal{H}_{\infty}}
\newcommand{\CLinf}{\mathcal{CL}_\infty} 	
\newcommand{\CHinf}{\mathcal{CH}_\infty}
\newcommand{\RLinf}{\mathcal{RL}_\infty} 	
\newcommand{\RHinf}{\mathcal{RH}_\infty}
\newcommand{\G}{\mathcal{G}}

\title
{Exact Robust Instability  Analysis 
for Networked Dynamical Systems with Biological Application}

\author{Shinji Hara$^{1}$~\IEEEmembership{Fellow, IEEE}, Yutaka Hori$^{2}$~\IEEEmembership{Senior Member, IEEE}, Tetsuya Iwasaki$^{3}$~\IEEEmembership{Fellow, IEEE}, Chung-Yao Kao$^{4}$~\IEEEmembership{Member, IEEE} and Sei Zhen Khong$^{4}$~\IEEEmembership{Senior Member, IEEE}
\thanks{This work was supported in part by the National Science and Technology Council of Taiwan (grant numbers: 
113-2221-E-110-048-MY3, 113-2918-I-110-003, 113-2222-E-110-002-MY3, 114-2622-8-110-00, 115-2218-E-007-003, and 115-2221-E-110-058-MY2).
The authors' names are listed in alphabetical order. Corresponding author: C.-Y. Kao.}
\thanks{$^{1}$ Shinji Hara is with 
Institute of Integrated Research, Institute of Science Tokyo, Tokyo 152-8550, Japan. {\tt shinjihara5202@gmail.com}}
\thanks{$^{2}$ ~Y.~Hori is with Applied Physics and Physico-Informatics, Keio University, 
3-14-1 Hiyoshi, Kohoku-ku, Yokohama, Kanagawa 223-8522, Japan.
{\tt yhori@keio.jp}. } 
\thanks{$^{2}$ T.~Iwasaki is with Mechanical and Aerospace Engineering, University of California Los Angels, 
420 Westwood Plaza, Los Angeles, CA 90095, USA. 
{\tt tiwasaki@ucla.edu}, }
\thanks{$^{4}$ Chung-Yao Kao and Sei Zhen Khong are with the Department of Electrical Engineering, National Sun Yat-sen University, 
          Kaohsiung 804201, Taiwan. {\tt \{cykao, szkhong\}@mail.nsysu.edu.tw} }
}

\begin{document}

\maketitle

\begin{abstract}
This paper investigates robust instability in nominally unstable uncertain networked dynamical systems, where all nominal agents share an identical single-input-single-output (SISO) linear time-invariant (LTI) system and each agent is subject to independent perturbations. This setting is motivated by the problem of sustaining periodic oscillations in nonlinear dynamics, for which exact analysis is generally intractable. We identify three classes of network structures—including cyclic and certain rank-deficient networks—for which the robust instability problem can be reduced to the analysis of a single representative SISO system. We derive sufficient conditions that exactly characterize the robust instability radius for these network classes. Finally, we demonstrate the practical utility of the proposed results by analyzing oscillatory behavior in a genetic regulatory network.
\end{abstract}

\begin{IEEEkeywords}
Robust instability radius, marginal stabilization, cyclic network 
\end{IEEEkeywords}

\section{Introduction} 
\label{sec:Intro}

Periodic oscillations arising from regulatory functions are fundamental and ubiquitous phenomena 
in biological systems. Prominent examples include synthetic repressilators~\cite{Elowitz2000}, 
spike-type periodic signaling in neuronal dynamics~\cite{FHNmodel,ijspeert:08}, and periodic 
pattern generation via Turing instability~\cite{YMKH:MBMC2015}. 
Some of these phenomena are related to instability {\ti of} %
stationary equilibria. 
Because exact mathematical models for biological systems are generally difficult to derive, 
approximate models are often utilized for analysis and synthesis. Consequently, much like robust 
stability analysis, robust instability analysis, which accounts for unmodeled dynamics and uncertainties, 
is essential to theoretically guarantee the preservation of desirable oscillatory behaviors.

Continuing authors' previous research on a simple feedback loop with scalar 
subsystems~\cite{hara2022instability, KHHIK:IFAC23, hara2023exact}, this paper investigates the robust 
instability of uncertain networked dynamical systems. {\ti The problem is to find a structured 
perturbation of the minimum norm among those that stabilize the nominally unstable network.} 
While the problem shares some conceptual ground with standard robust 
stability analysis, in particular, using $H_\infty$-norm as the measure of the size of uncertainties, they  
are fundamentally different. Mathematically, robust instability problem is equivalent to the strong 
stabilization problem~\cite{Youla:Automatica1974} with an additional task of minimizing the norm
of the controller. Despite the well-known Parity Interlacing Property (PIP)~\cite{Youla:Automatica1974}, which provides a 
necessary and sufficient condition for strong stabilization, finding a solution to the standard 
strong stabilization problem remains a challenge, let {\ti alone} a solution with minimum-norm. This 
clearly indicates that it is in general extremely difficult to find an exact solution for the robust 
instability problem. As such, our earlier work \cite{hara2023exact} {\ti focused} on 
exploring conditions under which the $H_\infty$ norm remains a valid measure for robust instability. 
This led to the development of the "Phase Change Rate (PCR) Maximization Problem," which offers 
a more tractable optimization framework for these complex systems.

The network version of robust instability analysis is the counterpart of the robust stability analysis discussed
in a recent paper~\cite{hara:19}. By the same reason mentioned before, it is far more difficult to derive
exact solutions to these problems compared to their stability analysis counterpart. In this paper, we primarily focus on certain cyclic 
networks, which are commonly seen in many biological applications, exploring conditions under which exact solutions can be found. 
The cyclic networks are special in that their circulant structure leads to homogeneous worst-case perturbation, 
which in turn allows for identifying one single-input-single-output (SISO) representative in regard to instability 
analysis. Namely, if this critical SISO agent is stabilized by a (SISO) perturbation, the same perturbation may be 
used to stabilize all other (unstable) agents in the network, leading consequently to a homogeneous stabilizing 
perturbation for the network. Such a feature simplifies instability analysis of a network to that of a SISO feedback 
loop with agent dynamics represented by transfer functions with complex coefficients, where the exact solutions are 
available from our previous work~\cite{hara2023exact}. The same feature is observed in some rank-deficient networks, 
where SISO representatives can be identified and utilized to derive exact solutions. The preliminary results 
necessary for establishing these exact solutions are developed in a companion paper~\cite{NetworkRIR_short}.

The main contributions of this paper are as follows. 
We identify three classes of network systems where robust instability analysis of such networks reduces to that of 
a representative SISO system. Further, we provide (sufficient) conditions for such systems to have 
{\ti the robust instability radius characterized exactly by a small gain condition}. 
One such class consists of cyclic networks where the inverse of the nominal agent dynamics is characterized by 
Hurwitz polynomials with specific constraints on the location of their roots. This particular class is found in biological
applications, serving as models for analyzing oscillatory behaviors of biomolecular reactions. Analysis of these models is performed
in this paper to illustrate our theoretical results. 

This paper is organized as follows. Notation and terminologies 
are summarized
{\ti right after this paragraph}.
Section~\ref{sec:PF} is devoted to the general problem formulation for robust 
instability analysis and some preliminary results. 
Section~\ref{sec:MarginalSS} provides marginal stability and 
stabilizability conditions for SISO 
{\ti complex} rational functions. 
The main focus of this paper, analysis of cyclic networks, 
is developed in Sections~\ref{sec:Cyclic1} and~\ref{sec:CycricExactRIR}. Section~\ref{sec:Cyclic1} describes the problem setup and 
important preliminary properties, while Section~\ref{sec:CycricExactRIR} develops conditions under which the robust  
instability radius of a cyclic network can be obtained exactly. 
Section~\ref{sec:LowRank} 
{\ti is devoted} to two special classes of rank-deficient networks. 
To demonstrate the practical relevance, we apply our cyclic case result  to analyze oscillatory biomolecular reactions 
of a genetic regulatory network (GRN) with a cyclic network structure in Section~\ref{sec:BioAppli}. 
Section~\ref{sec:Concl} summarizes the contributions of this paper and addresses some future research directions.
Proofs of certain technical results are provided in the appendix. 
\noindent 
\subsection*{Notation}
The set of integers is denoted by $\II$, of which the finite subset consisting of $1,\ldots,n$ is 
denoted by $\II_n$.
The set of real (complex) numbers is denoted by $\IR$ ($\IC$). Given a complex number $s$, 
$\Re(s)$ and $\Im(s)$ denote the real and imaginary parts of $s$, respectively. The open (closed) 
left and right half complex planes are abbreviated as OLHP (CLHP) and ORHP (CRHP), respectively.
We adopt the following notation regarding proper transfer functions:
\begin{itemize}
\item
$\Cp$ ($\Rp$): 
the set of proper rational transfer functions with complex (real) coefficients. 
\item
$\Linf$ ($\Hinf$): 
the set of transfer functions that are bounded on the imaginary axis $j\IR$ (analytic and bounded in the CRHP). 
\item 
$\mathcal{CL}_\infty$ ($\mathcal{RL}_\infty$): 
the set $\Cp$ ($\Rp$) $\cap \ \Linf$.
\item
$\mathcal{CH}_\infty$ ($\mathcal{RH}_\infty$): 
the set $\Cp$ ($\Rp$) $\cap \ \Hinf$.
\end{itemize}
{\red Superscripts are occasionally used on these symbols to clarify the dimensions of the intended quantities.}
The norms on $\Linf$ and $\Hinf$ are both denoted by
$\| \cdot \|_{\infty}$, since they coincide. In this paper, we will not distinguish a system from its transfer function
representation. It is well-known that a finite-dimensional {\ti linear time-invariant} system has a transfer function 
representation in $\Cp$. Such a system is referred to as ``stable'' if all its poles are in the OLHP, ``unstable'' if at least one of its poles is in the ORHP, 
and ``marginally stable'' if all its poles lie in the CLHP with {\ti at least one} pole on the imaginary axis {\ti and all imaginary-axis poles} are simple. In the case of a marginally stable system with a 
single pole at $j\omega_c$, we will refer to the system as $\omega_c$-marginally stable. It is also well known that stable systems from $\Cp$ form the 
subset $\CHinf$. 

For a complex function $f:\IC\rightarrow\IC$, the phase 
angle at point $s\in\IC$ where $f(s)\neq0$ is denoted by $\angle f(s)$, and 
$\theta_f(\omega):=\angle f(j\omega)$ denotes the phase of $f$ when restricted to the imaginary axis. Note that the angle $\angle f(s)$ of the complex number 
$f(s)$ is not uniquely determined but is chosen so that $\theta_f(\omega)$ is 
continuous on $\omega\in\IR$; such a choice is possible for $f\in{\RLinf}$ 
with no zeros on the imaginary axis.
The phase change rate (PCR) of $f(j\omega)$ with respect to $\omega$ is denoted by 
$\theta_f'$; i.e., $\theta_f'(\omega):=\frac{d}{d\omega}\theta_f(\omega)$.  

Let $\mathcal{F}_u(\cdot,\cdot)$ and $\mathcal{F}_\ell(\cdot,\cdot)$ denote the upper and lower linear 
fractional transformations {\red (LFT)}, respectively. {\ti The Kronecker product is denoted by $\otimes$.}

We use $[\![\star, \bullet]\!]$ to denote the transfer function
$
\begin{bmatrix} I & \star \\ \bullet & I 
\end{bmatrix}^{-1}.
$

\section{Problem Formulation}
\label{sec:PF}

\subsection{Uncertain Network Structure}
\label{subsec:UncertainDNS}

Consider the multi-agent system with $n$ uncertain dynamic SISO agents 
shown in Fig.~\ref{fig: LFT_UncertainNetwork} and described by 
\begin{equation} \label{sys}
\begin{mat}{c} z \\ y \end{mat}=H(s)
\begin{mat}{c} w \\ u \end{mat}, \hs
\begin{array}{l} w=\Delta(s) (z+d_\delta) + d_{h1}, \\ u=A(y+d_a) + d_{h2}, \end{array}
\end{equation}
\begin{equation} \label{sysH}
H(s) := W(s)\otimes I_n,
\end{equation}
where $z,w,y,u: \IR\mapsto\IR^n$ are internal signals in the system and 
$d_{h1}, d_{h2}, d_\delta, d_a: \IR\mapsto\IR^n$ are external signals inserted to define
internal stability of the whole system, denoted by $\Sigma(\Delta, H, A)$. 
Here $\Delta:=\diag(\delta_1,\cdots,\delta_n)$ with each $\delta_i$ denoting the dynamic 
uncertainty of the respective SISO agent, 
while $A\in\IR^{n\times n}$ is a constant matrix characterizing the interactions between agents. We will make the following standing assumption.

\begin{figure}
  \centering
\begin{picture}(172,129)(23,-25)
 \thicklines
 \put(75,25){\framebox(40,40){$H(s)$}}
 \put(160,4){$d_{h2}$}
 \put(20,4){$d_a$}
 \put(135,90){\circle{8}}
 \put(180,90){\vector(-1,0){40}}  
 \put(160,95){$d_{h1}$}
  \put(20,95){$d_\delta$}
 \put(56,90){\circle{8}}
 \put(12,90){\vector(1,0){40}}  
 \put(80,75){\framebox(30,30){$\Delta(s)$}}
  \put(35,-20){\color{blue}\framebox(120,90){}}
   \put(160,55){\color{blue} $G(s)$}
 \put(60,60){$z$}
 \put(55,55){\line(1,0){20}}  
 \put(55,55){\vector(0,1){31}}  
 \put(60,90){\vector(1,0){20}}  
 \put(123,60){$w$}
 \put(110,90){\vector(1,0){21}}  
 \put(135,55){\line(0,1){31}}  
 \put(135,55){\vector(-1,0){20}}  
 \put(80,-15){\framebox(30,30){$A$}}
 \put(60,38){$y$}
 \put(135,0){\circle{8}}
 \put(56,0){\circle{8}}
 \put(180,0){\vector(-1,0){40}} 
 \put(12,0){\vector(1,0){40}}  
 \put(135,90){\circle{8}}
 \put(55,35){\line(1,0){20}}  
 \put(55,35){\vector(0,-1){31}}  
 \put(60,0){\vector(1,0){20}}  
 \put(123,38){$u$}
 \put(110,0){\vector(1,0){21}}  
 \put(135,4){\line(0,1){31}}  
 \put(135,35){\vector(-1,0){20}}  
\end{picture}
\vspace{-2mm}
  \caption{Networked dynamical agents with uncertainties.} 
\label{fig: LFT_UncertainNetwork}
\vspace{-2mm}
\end{figure}

\vspace{0.1cm}
\noindent
{\bf [Assumption 1]}: $\Delta$ is stable (i.e., $\ti\Delta\in\RHinf^{n\times n}$).

With $W:=\begin{bmatrix} w_{11} & w_{12} \\  w_{21} & h \end{bmatrix}\in\RHinf^{2\times 2}$
The dynamics of each individual agent are characterized by the upper LFT
\begin{align*}
h_i:= \mathcal{F}_u(\delta_i, W) = h + \frac{w_{12}w_{21}\delta_i}{1 - w_{11}\delta_i},  
\end{align*}
where $h\in\RHinf$ represents the nominal dynamics shared commonly by all agents. Note that the LFT formulation 
allows us to model different types of ``perturbed'' systems; in this paper, we only focus on systems with 
the so-called \emph{multiplicative} perturbations, and make the following standing assumption.

\noindent
{\bf [Assumption 2]}: $W$ in~\eqref{sysH} has the form $\begin{bmatrix} 0 & w_m \\ h & h\end{bmatrix}$
with $h, \ w_m\in\RHinf$, $h$ is strictly proper and $w_m$ is minimum-phase. 
\vspace{0.1cm}

By Assumption 2, the agent dynamics $h_i$ have the form $h(1+w_m\delta_i)$. Such a setup allows us to consider 
heterogeneous agents while their nominal parts constitute a homogeneous network.

\begin{defn}[\cite{Vinnicombe2001}]\label{def:IS}
The uncertain network system $\Sigma(\Delta, H, A)$ 
is said to be internally stable if {\ti the} $4n \times 4n$ transfer function from the external input signal 
$[d_{h1}, d_{h2}, d_\delta, d_a]$ to the output signal $[z, y, w, u]$ is in $\RHinf^{4n \times 4n}$. 
\end{defn}

Under Assumptions~1 and~2, we may derive the following %
{\ti condition} for internal stability of $\Sigma(\Delta, H, A)$. 
\begin{prop} \label{prop:IS1}
Consider the system $\Sigma(\Delta, H, A)$ defined in~\eqref{sys} and~\eqref{sysH}. 
Suppose Assumptions 1 and 2 hold. Then $\Sigma(\Delta, H, A)$ is internally stable if and only if 
\begin{align}
\label{eq:IGIDelta}
\begin{bmatrix} I_n \\ G \end{bmatrix}
(I - \Delta G)^{-1}  \in \RHinf^{2n \times n}, \ 
\end{align} 
{\ti where $G:=\LFT(H,A)$}. %
\end{prop}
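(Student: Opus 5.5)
The plan is to compute the $4n\times 4n$ closed-loop map of Definition~\ref{def:IS} explicitly and show that every block is built from stable factors together with $(I-\Delta G)^{-1}$ and $G(I-\Delta G)^{-1}$. Both of these also appear as blocks of the map, which gives the two directions of the equivalence.

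\emph{Eliminating the signals.} By Assumption~2, $H=W\otimes I_n$ gives $z=w_m u$ and $y=h(w+u)$. Put $T:=(I-hA)^{-1}$. Then
\[
u=T\bigl(hAw+Ad_a+d_{h2}\bigr).
\]
This is proper and well defined because $h$ is strictly proper, so $T(\infty)=I$. Substituting into $G=\LFT(H,A)$ gives $G=w_m hAT=w_m ATh$. The $w$-equation then becomes
\[
(I-\Delta G)\,w=\Delta\bigl(w_m T A d_a+w_m T d_{h2}+d_\delta\bigr)+d_{h1}.
\]
Well-posedness follows from $G(\infty)=0$. Back-substitution expresses $z,y,w,u$ in terms of $(I-\Delta G)^{-1}$, $\Delta$, $h$, $w_m$, $A$, $T$ and $hAT=G/w_m$.

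\emph{Necessity.} The map $d_{h1}\mapsto w$ is exactly $(I-\Delta G)^{-1}$. The map $d_{h1}\mapsto z$ is $w_m u$ with $u=hATw$, so it equals $G(I-\Delta G)^{-1}$. Internal stability therefore forces \eqref{eq:IGIDelta}.

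\emph{Sufficiency.} This direction is the main obstacle, because $T$ and $hAT$ alone are generally unstable (the nominal network is unstable). The remaining blocks contain $T=I+hAT$ and $hAT$ multiplied by other factors, but not always premultiplied by $\Delta$ or by $w_m$. For example, the map $d_{h2}\mapsto u$ is
\[
T+hAT(I-\Delta G)^{-1}\Delta w_m T .
\]
I would regroup each block, using identities such as $(I-\Delta G)^{-1}\Delta G=(I-\Delta G)^{-1}-I$ and push-through identities, so that every block takes the form $S_1+S_2X$. Here $S_1,S_2$ are stable and $X$ is a proper rational matrix satisfying $w_mX\in\{(I-\Delta G)^{-1},\,G(I-\Delta G)^{-1}\}$ up to stable factors. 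For instance, $T(I-\Delta G)^{-1}$ should be recoverable this way; I expect it to satisfy $w_m T(I-\Delta G)^{-1}=w_m(I-\Delta G)^{-1}+G(I-\Delta G)^{-1}$ after using commutation of the scalar factors.

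\emph{The key lemma.} If $X$ is proper rational and $w_mX\in\RHinf$ with $w_m$ minimum-phase, then $X\in\RHinf$. The reason is that the CRHP poles of $X=w_m^{-1}(w_mX)$ can only come from CRHP zeros of $w_m$, and there are none; properness of $X$ rules out trouble at infinity even when $w_m$ is strictly proper. Applying this lemma to each block yields stability of the full map.

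The delicate part is the bookkeeping. I must verify that every occurrence of $T$ is absorbed into such an $X$ and that no bare $T$ survives, for example in the $d_a\mapsto y$ block. I would first handle the generic block $d_{h2}\mapsto u$ displayed above and then check the others by the same regrouping.
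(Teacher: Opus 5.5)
Your argument is correct. The paper itself gives no in-text proof of this proposition and simply defers to the companion paper \cite{NetworkRIR_short}, so your direct elimination of the $4n\times 4n$ map is a self-contained substitute. It also shows that the minimum-phase property of $w_m$ is used exactly once: to recover $hAT(I-\Delta G)^{-1}$ from $G(I-\Delta G)^{-1}$.

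The bookkeeping you postpone closes quickly. Put $P:=hAT$ and $S:=(I-\Delta G)^{-1}$, so that $T=I+P$, $G=w_mP$ and $S\Delta G=S-I$. Two reductions cut the work down:
\begin{itemize}
\item $d_a$ enters only through $Ad_a+d_{h2}$, so the $d_a$-column is the $d_{h2}$-column multiplied on the right by $A$.
\item $z=w_mu$ and $y=h(w+u)$, so only the $w$- and $u$-rows need checking.
\end{itemize}
Those rows are
\[
\begin{array}{c|ccc}
 & d_{h1} & d_\delta & d_{h2} \\ \hline
w & S & S\Delta & S\Delta w_m+S-I \\
u & PS & PS\Delta & PS\Delta w_m+PS+I
\end{array}
\]
The bare $T$ in $d_{h2}\mapsto u$ cancels against the $-P$ produced by $PS\Delta G=PS-P$. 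So no bare $T$ survives anywhere, and the $d_a\mapsto y$ block needs no separate treatment. The only object requiring your key lemma is the proper matrix $PS$, which satisfies $w_mPS=GS\in\RHinf$.

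One point to make explicit: your lemma needs $w_m$ to have no zeros in the closed right half-plane, imaginary axis included. That reading of ``minimum-phase'' is genuinely required. Take $n=1$, $A=1$, $\Delta=0$, $h=1/(s+1)$ and $w_m=s/(s+\xi)$ (the weight used in Section~\ref{sec:BioAppli}). Then $G=1/(s+\xi)$, so \eqref{eq:IGIDelta} holds, yet $d_{h2}\mapsto u$ is $(s+1)/s$, which is not stable.
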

\begin{proof}
The readers are referred to~\cite{NetworkRIR_short} for a comprehensive analysis.
\end{proof} 

In this paper, we study the {\em robust instability} of the system $\Sigma(\Delta, H, A)$. 
Specifically, assuming that the nominal network $G:=\LFT(H,A)$ is unstable, we investigate whether 
$\Sigma(\Delta, H, A)$ remains unstable for all perturbations $\Delta$ within a given collection. To this end, 
we define 
\begin{align}
\begin{split}
&\bfDel_{d}:=\{\diag(\delta_1,\cdots,\delta_n)~|~ \delta_i\in\RHinf, {\ti~} i=1,\cdots,n\},\\
&\bfDel_{h}:=\{\delta I_n~|~ \delta\in\RHinf \}
\end{split}\label{def:Delta's}
\end{align}
and introduce the following perturbation classes:   
\begin{align*}
\IS_f(G) & :=  \{ \Delta \in \  \RHinf^{n\times n} ~|~ \mbox{\eqref{eq:IGIDelta} holds.} \}, \\
\IS_d(G) & :=  \IS_f(G) \cap \bfDel_{d}, \\ 
\IS_h(G) & :=  \IS_f(G) \cap \bfDel_{h} = \IS_d(G) \cap \bfDel_{h}. 
\end{align*}

Given a network system $\Sigma(\Delta,H,A)$, {\ti its} %
robust instability radius (RIR, for short), denoted by $\rho_*(\Sigma)$,
is defined as the smallest magnitude (in terms of $\RHinf$-norm) of the stable perturbation 
{\ti $\Delta\in\bfDel_d$} that internally 
stabilizes the system. Under Assumptions~1 and~2, Proposition~\ref{prop:IS1} implies that $\rho_*(\Sigma)=\rho_*(G)$, 
where $G:=\LFT(H,A)$ and
\begin{align} 
\label{rir}
\rho_*(G) := {\displaystyle\inf_{\Delta\in\mathbb{S}_d(G)}} ~\|\Delta\|_{\infty} {\ti.}
\end{align}
Furthermore, define 
\begin{align} 
\label{rir_hf}
\rho_h(G) := {\displaystyle\inf_{\Delta\in\mathbb{S}_h(G)}} ~\|\Delta\|_{\infty}, \quad
\rho_f(G) := {\displaystyle\inf_{\Delta\in\mathbb{S}_f(G)}} ~\|\Delta\|_{\infty}. 
\end{align}
Observe that the following inequalities hold: 
$${\red \varrho_p(G):=}\|G\|_{\infty}^{-1}\le \rho_f(G) \le \rho_*(G)\le\rho_h(G).$$
The first inequality holds by the small-gain theorem~\cite{Inoue:ECC2013}, while the other two follow the fact that 
$\IS_h(G)\subseteq\IS_d(G)\subseteq\IS_f(G)$. By convention, $\rho_*$, $\rho_h$, or $\rho_f$ is set to be infinity 
if the respective constraint set is empty. {\red Readers are referred to \cite[Theorem 1]{NetworkRIR_short} for more general 
relations between $\varrho_p$, $\rho_*$, $\rho_h$, and $\rho_f$.} 
It is well-known that $\IS_f(G)$ is empty if $G$ fails to satisfy the 
so-called ``parity interlacing property'' (PIP) condition~\cite{Youla:Automatica1974}; i.e., the number of real ORHP 
poles of $G$ between any pair of real zeros in the CRHP (including the zero at infinity) is even. Also note that if $G$ 
is a SISO system, then $\rho_f(G) = \rho_*(G)=\rho_h(G)$. 

The problem addressed in this paper is to characterize $\rho_*$ for given classes {\ti of} (unstable) network systems. 
In particular, we are interested in finding conditions under which %
{\red 
$\rho_*$ attains the value of one of the lower bounds described in Theorem 1 of \cite{NetworkRIR_short}. 
Note that in general we cannot expect that $\rho_* = \|G\|_{\infty}^{-1}$ .
}

To facilitate the subsequent development, we define the following SISO system classes: 
\begin{align*}
    &\G:= \{g\in\CLinf\ |\  g\text{ is strictly proper and unstable} \}, \\ 
    &\begin{aligned}
        \G_{n} :=\{g\in\G \ |\ & g\text{ has $n$ poles in the ORHP and satisfies}\\ 
        &\text{the PIP condition} \},
    \end{aligned}\\ 
    &\G_{n}^{\omega_p}:=\{g\in\G_{n}\ |\ |g(j\omega_p)|\!>\!|g(j\omega)|,\ \forall\omega\in\IR\backslash\{\omega_p\} \},
\end{align*}
and for $\omega_p\not=0$,
\begin{align*}
    &\G_{n}^{\pm\omega_p}:=\{g\in\G_{n}\cap\RLinf \ |\ |g(j\omega_p)|\!>\!|g(j\omega)|,\\
    &\hspace{5.5cm} \forall\omega\in\IR\backslash\{\pm\omega_p\} \}.
\end{align*}
\section{Marginal Stability and Stabilizability}
\label{sec:MarginalSS}

This section establishes the marginal stability and stabilizability conditions for SISO systems represented 
by complex rational functions, {\red which may arise from the complex eigenvalues of the interconnection matrix $A$. 
Marginal stabilization is instrumental in {\ti exactly} characterizing {\ti the} %
RIR of certain SISO systems~\cite{hara2023exact}.}
The results developed here will subsequently be utilized to obtain $\rho_*$ for specific networks. 
The conditions are based on a modified Nyquist criterion, generalizing the one in~\cite{hara2023exact}.

For $L\in\CLinf$, define
${\red \nu_L}(\epsilon):={\red \nu_{L+}}(\epsilon)-{\red \nu_{L-}}(\epsilon)$, where ${\red \nu_{L+}}(\epsilon)$ and 
${\red \nu_{L-}}(\epsilon)$ are the numbers of the counterclockwise and clockwise crossings of the 
semi-infinite interval $(1,+\infty)$ by the Nyquist plot of $L(j\omega+\epsilon)$; i.e., 
the contour $\{L(j\omega+\epsilon)~|~\omega\in(-\infty,\infty)\}$ for a given $\epsilon$.

\begin{lemma} \label{lemma:MS}
Consider a positive feedback system with loop transfer function $L\in\G_n$.
The system has all its poles in the CLHP
if and only if there exists $\epsilon_+>0$ such that the following two 
equivalent conditions hold for all $\epsilon\in(0,\epsilon_+)$. 
\begin{itemize}
\item[(i)] The number of counterclockwise encirclements of the Nyquist plot of 
$L(j\omega+\epsilon)$ about $1+j0$ is $n$.
\item[(ii)] ${\red \nu_L}(\epsilon)=n$ holds for the Nyquist plot of $L(j\omega+\epsilon)$.
\end{itemize}
Moreover, the system is marginally stable 
if and only if the following two conditions hold in addition to condition (i) or (ii): 
\begin{itemize}
\item[(iii)] $L(j\omega)=1$ for some $\omega\in\IR$.
\item[(iv)] $\left. \frac{d}{ds} L(s) \right|_{s=j\omega}\neq0$ for any $\omega$ 
such that $L(j\omega)=1$.
\end{itemize}
\end{lemma}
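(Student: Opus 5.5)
The plan is to apply the argument principle to the closed-loop characteristic function $1-L(s)$ on a contour shifted by $\epsilon>0$ to the right of the imaginary axis. Write $L=N/D$ with coprime polynomials with complex coefficients. Since $L$ is strictly proper, $\deg N<\deg D$, and the closed-loop poles of the positive feedback loop are the zeros of $D-N$, i.e., the zeros of $1-L$ together with any pole–zero structure inherited from $D$; coprimeness guarantees that no cancellation occurs. Fix $\epsilon$ and consider the half-plane $\{\Re(s)>\epsilon\}$, bounded by the line $s=j\omega+\epsilon$ closed by a large semicircle. Because $L$ is strictly proper, $1-L(s)\to1$ on the semicircle, so it contributes no encirclement.

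The first step is to fix the pole count. $L\in\G_n$ has exactly $n$ ORHP poles and, lying in $\CLinf$, none on $j\IR$. Hence there is $\epsilon_0>0$ such that for all $\epsilon\in(0,\epsilon_0)$ the half-plane $\Re(s)>\epsilon$ contains exactly those $n$ poles and no other poles of $L$. The argument principle then gives the identity
\[
\#\{\text{counterclockwise encirclements of }1\text{ by }L(j\omega+\epsilon)\}=n-Z(\epsilon),
\]
where $Z(\epsilon)$ is the number of closed-loop poles in $\Re(s)>\epsilon$; the sign follows the usual Nyquist convention with $\omega$ increasing upward. Now $Z(\epsilon)$ is piecewise constant and nonincreasing in $\epsilon$, and there are finitely many closed-loop poles.

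The second step proves the equivalence with (i). All closed-loop poles lie in the CLHP if and only if $Z(\epsilon)=0$ for every small $\epsilon>0$. Indeed, a pole with $\Re(p)>0$ gives $Z(\epsilon)\ge1$ for $\epsilon<\Re(p)$. Conversely, if every pole has $\Re(p)\le0$, then $Z(\epsilon)=0$ for all $\epsilon>0$. Taking $\epsilon_+$ small enough, and in particular $\le\epsilon_0$, this is exactly condition (i). Note that this shifted-contour device is what lets imaginary-axis closed-loop poles be handled without indenting the contour.

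Equivalence of (i) and (ii) is the standard fact that the winding number about the point $1$ equals the net signed number of crossings of any ray from $1$ to infinity. I will take the ray $(1,+\infty)$. Since the curve is compact, and is closed at $0$ as $\omega\to\pm\infty$ because $L$ is strictly proper, finitely many transversal crossings can be assumed after perturbing $\epsilon$ slightly. Tangential crossings are counted with multiplicity zero. This step needs care because the Nyquist plot of a complex-coefficient $L$ is not symmetric about the real axis, so the contour must be traversed over the full range $\omega\in(-\infty,\infty)$ rather than doubled from half of it. The crossing count is stated over the full range, so no symmetry is needed.

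The third step handles marginal stability. Given CLHP stability, marginal stability means that there is at least one pole on $j\IR$ and that all such poles are simple. A closed-loop pole at $j\omega$ is a zero of $1-L$ there. Since $L$ has no $j\IR$ poles, any zero of $D-N$ at $j\omega$ satisfies $D(j\omega)\neq0$, so it is exactly a solution of $L(j\omega)=1$, which is (iii). Its multiplicity equals the multiplicity of $j\omega$ as a zero of $1-L$, and this is simple if and only if $L'(j\omega)\neq0$, which is (iv). The main obstacle I expect is bookkeeping rather than ideas: ensuring that coprimeness prevents hidden cancelled poles, and that "poles of the system" means the zeros of $D-N$. I would state this explicitly, and possibly cite the internal-stability formulation, before running the argument.
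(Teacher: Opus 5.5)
Your proposal is correct and matches the paper's approach. The paper's own proof only defers to the real-rational version of this lemma in its earlier work and notes that complex coefficients are immaterial. Your shifted-contour argument-principle proof is precisely that argument written out: exactly $n$ poles of $L$ in $\Re(s)>\epsilon$ for small $\epsilon$, encirclements $=n-Z(\epsilon)$, and imaginary-axis closed-loop poles being the roots of $L(j\omega)=1$, simple if and only if $L'(j\omega)\neq0$. The full-range traversal of $\omega$ is the one place where complex coefficients matter, and you handle it.

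The only superfluous step is perturbing $\epsilon$ for the crossing count. Since $\Im L(j\omega+\epsilon)$ is a real rational function of $\omega$, counting its sign changes at points where $\Re L(j\omega+\epsilon)>1$ gives (i)$\Leftrightarrow$(ii) directly for every admissible $\epsilon$.
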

\begin{proof}
The result is almost identical to Lemma 4 in \cite{hara2023exact},
with the only difference being that we consider $L$ with complex coefficients 
here, which is immaterial to the proof given in~\cite{hara2023exact}.
\end{proof}

Based on Lemma~\ref{lemma:MS}, conditions for marginal stability of the feedback system with loop transfer
function $L$ are given as follows.
\begin{prop} \label{prop:OmegaMS2}
Consider a positive feedback system with loop transfer function $L\in\G_n$.
Given $\omega_p$, suppose $\frac{d}{d\omega}|L(j\omega)| =0$ at $\omega=\omega_p$.
Then, the feedback system is $\omega_p$-marginally stable if and only if 
condition (a) and one of conditions (b1) and (b2), indicated below, are satisfied. 
\begin{itemize}
\item[(a)] 
The loop transfer function $L$ satisfies the following:
\begin{equation*} %
\begin{array}{rcl}
L(j\omega_p) = 1, &\;& \left. \frac{d}{ds} L(s) \right|_{s=j\omega_p} \neq 0  \\
L(j\omega) \neq 1, &\;& \forall \; \omega \in\IR\backslash\{\omega_p\}.
\end{array}
\end{equation*} 
\end{itemize}
\begin{itemize}
\item[(b1)] 
The Nyquist plot of $L(j\omega)$ satisfies
${\red \nu_L}(0)=n-1$ and the phase change rate at $\omega = \omega_p$ is positive, i.e., $\theta_L'(\omega_p) > 0$.
\item[(b2)] 
The Nyquist plot of $L(j\omega)$ satisfies ${\red \nu_L}(0)=n$, 
and the phase change rate at $\omega = \omega_p$ is negative, i.e., $\theta_L'(\omega_p) <  0$. 
\end{itemize}
\end{prop}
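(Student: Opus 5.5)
Our plan is to reduce everything to Lemma~\ref{lemma:MS}. Condition (a) is exactly (iii)--(iv) of that lemma with the extra requirement that $j\omega_p$ is the \emph{only} imaginary-axis point where $L=1$. Hence $\omega_p$-marginal stability is equivalent to (a) together with condition (ii) of the lemma, namely $\nu_L(\epsilon)=n$ for all small $\epsilon>0$. It therefore remains to show that, given (a) and $\frac{d}{d\omega}|L(j\omega)|=0$ at $\omega_p$, the requirement $\nu_L(\epsilon)=n$ for small $\epsilon$ is equivalent to (b1) or (b2).

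The key step is to relate $\nu_L(\epsilon)$ to $\nu_L(0)$ by a local perturbation argument at $\omega_p$. By (a), the curve $L(j\omega)$ meets the closed ray $[1,\infty)$ at the point $1$ only when $\omega=\omega_p$. All other crossings of $(1,\infty)$ are transversal or bounded away from $1$, so they persist for small $\epsilon$ with the same orientation. Strict properness keeps the tails near $0$, so no crossings come from infinity, and there are no imaginary-axis poles since $L\in\CLinf$. Consequently $\nu_L(\epsilon)-\nu_L(0)\in\{0,1\}$ (or $\pm1$), determined by whether the shifted curve near $\omega_p$ passes to the right of $1$, crossing $(1,\infty)$, or to the left, crossing $(0,1)$.

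To decide which, we expand
\[
L(j\omega_p+\epsilon+j\tau)\approx 1+L'(j\omega_p)(\epsilon+j\tau),
\]
where $L'=\frac{d}{ds}L\ne0$. Write $L(j\omega)=r(\omega)e^{j\theta_L(\omega)}$ with $r(\omega_p)=1$ and $r'(\omega_p)=0$ by hypothesis. Then
\[
jL'(j\omega_p)=\frac{d}{d\omega}L(j\omega)\Big|_{\omega_p}=j\theta_L'(\omega_p),
\]
so $L'(j\omega_p)=\theta_L'(\omega_p)$ is real and nonzero. The shifted curve is therefore locally $1+\theta_L'(\omega_p)\epsilon+j\theta_L'(\omega_p)\tau$, a vertical line. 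If $\theta_L'(\omega_p)>0$, it lies to the right of $1$ and moves upward (counterclockwise about the origin), contributing one new counterclockwise crossing of $(1,\infty)$, so $\nu_L(\epsilon)=\nu_L(0)+1$. If $\theta_L'(\omega_p)<0$, it lies to the left of $1$ and crosses $(0,1)$, so $\nu_L(\epsilon)=\nu_L(0)$.

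Here we must fix the convention for $\nu_L(0)$: it counts crossings of the open interval $(1,\infty)$, so the touching at $1$ itself is excluded. With this convention, $\nu_L(\epsilon)=n$ is equivalent to (b1) when $\theta_L'>0$ and to (b2) when $\theta_L'<0$. Since $\theta_L'(\omega_p)\neq0$, exactly one of the two cases applies, which gives the claimed ``if and only if''.

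The main obstacle is making the persistence and counting rigorous. We must justify that higher-order terms do not create extra crossings near $\omega_p$ (uniformly in small $\epsilon$), and that crossings far from $\omega_p$, including possibly tangential ones on $(1,\infty)$, do not change their net count. For the latter we would argue via winding-number continuity: the net crossing count of a ray equals the winding number about $1$ whenever the curve avoids $1$, and the winding is stable away from the local defect at $\omega_p$.
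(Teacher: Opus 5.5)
Your proposal is correct and is essentially the argument the paper invokes by deferring to Proposition~2 of \cite{hara2023exact}. You reduce $\omega_p$-marginal stability to (a) plus condition (ii) of Lemma~\ref{lemma:MS}, then use $\frac{d}{d\omega}|L(j\omega)|=0$ at $\omega_p$ to get $L'(j\omega_p)=\theta_L'(\omega_p)$ real (the Cauchy--Riemann link between gain and phase), so that under the shift $s=j\omega+\epsilon$ the single passage through $1$ becomes a counterclockwise crossing of $(1,\infty)$ when $\theta_L'(\omega_p)>0$ and a crossing of $(0,1)$ when $\theta_L'(\omega_p)<0$, giving $n-1$ versus $n$ (one crossing instead of the pair at $\pm\omega_p$ in the real-rational case); the rigor issue you flag is settled exactly as you propose, by comparing winding numbers about $1+\eta$ for small $\eta>0$, with (a), strict properness and compactness keeping the rest of the curve uniformly away from $1$.
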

\begin{proof} 
This result extends Proposition~2 in \cite{hara2023exact}, where $L$ is real-rational. Here, we allow $L$ to have complex coefficients. 
When $L$ is real-rational, the complex poles of the closed-loop system appear in complex conjugate pairs, resulting in symmetric critical frequencies 
at $\pm\omega_p$. Conversely, for systems with complex coefficients, such feature is generally lost; thus, we consider a single critical frequency 
$\omega_p$, which can be any real number. The proof of Proposition~\ref{prop:OmegaMS2} follows arguments analogous to those used for Proposition~2 
in \cite{hara2023exact}.
\end{proof}

Proposition~\ref{prop:OmegaMS2} yields the following conditions for the marginal stabilizability of a system $g$ by a controller $f$. Recall that 
internal stability requires all four transfer functions in the feedback connection $[\![g,f]\!]$ to be stable, which rules out any CRHP pole-zero 
cancellations between $g$ and $f$.

\begin{theorem}
\label{thm:MS}
We have the following necessary and sufficient conditions regarding marginal stabilizability of $g$ from $\G_n^{\omega_p}$
and $\G_n^{\pm\omega_p}$ (for $\G_n^{\pm\omega_p}$, $\omega_p\not=0$ by definition): 
\begin{itemize}
\item[(I)] Let $g\in\G_n^{\omega_p}$. Then $g$ can be marginally stabilized by $f \in \mathcal{RH}_\infty$ with
  $\|f\|_{\infty} = 1/\|g\|_{\infty}$ if and only if $n = 1$ and one of the following conditions holds:
  \begin{enumerate} \renewcommand{\theenumi}{\textup{(\roman{enumi})}}\renewcommand{\labelenumi}{\theenumi}
    \item $\omega_p = 0$, $g(0)$ is real, and $\theta_g'(0) > 0$;
    \item $\omega_p \ne 0$ and $\theta_g'(\omega_p) > |\sin(\theta_g(\omega_p))/\omega_p|$.
   \end{enumerate}
\item[(II)] Let $g\in\G_n^{\pm\omega_p}$. Then $g$ can be marginally stabilized by $f \in \mathcal{RH}_\infty$ with
  $\|f\|_{\infty} = 1/\|g\|_{\infty}$ if and only if $n = 2$ and $\theta_g'(\omega_p) > |\sin(\theta_g(\omega_p))/\omega_p|$. 
\end{itemize}
\end{theorem}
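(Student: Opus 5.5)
The plan is to reduce marginal stabilization with norm exactly $1/\|g\|_\infty$ to the $\omega_p$-marginal stability conditions of Proposition~\ref{prop:OmegaMS2} applied to the loop $L=gf$.

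\textbf{Necessity.} Suppose $f\in\RHinf$ with $\|f\|_\infty=1/\|g\|_\infty$ marginally stabilizes $g$. Because $g\in\G_n^{\omega_p}$ attains its peak magnitude only at $\omega_p$, we have $|L(j\omega)|\le |g(j\omega)|/\|g\|_\infty<1$ for $\omega\neq\omega_p$. Hence condition (iii) of Lemma~\ref{lemma:MS} can hold only at $\omega_p$, with $L(j\omega_p)=1$. This forces $|f(j\omega_p)|=\|f\|_\infty$ and $f(j\omega_p)=1/g(j\omega_p)$.

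Moreover, $|L(j\omega)|$ has a maximum equal to $1$ at $\omega_p$, so $\frac{d}{d\omega}|L(j\omega)|=0$ there and Proposition~\ref{prop:OmegaMS2} applies.

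Since $|L|<1$ away from $\omega_p$, the plot touches $(1,\infty)$ only at the point $1$, and $\nu_L(0)=0$. Condition (b2) would then need $n=0$, which is excluded because $g$ is unstable. So (b1) must hold, giving $n=1$ and $\theta_L'(\omega_p)=\theta_g'(\omega_p)+\theta_f'(\omega_p)>0$.

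The key step is to bound $\theta_f'(\omega_p)$ from below for a real-rational stable $f$ whose magnitude peaks at $\omega_p$ with prescribed phase $\phi:=\theta_f(\omega_p)=-\theta_g(\omega_p)$. I would use a Blaschke / Nevanlinna–Pick type argument on the scaled function $\hat f:=f/\|f\|_\infty$, which is a Schur function in the right half-plane and has unimodular boundary value at $j\omega_p$. By the Julia–Carathéodory lemma, the boundary phase derivative of $\hat f$ is bounded below by that of the extremal interpolant.

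For real-rational $f$ the interpolation is symmetric at $\pm j\omega_p$. The extremal functions are then first-order all-pass factors $\pm(s-a)/(s+a)$ (or constants when $\omega_p=0$). Computing their phase derivatives should give $-\theta_f'(\omega_p)\le|\sin\phi|/\omega_p$. This yields (ii) for $\omega_p\neq0$. When $\omega_p=0$, the fact that $f(0)$ is real forces $g(0)$ real, and the lower bound is $\theta_f'(0)\ge 0$, which yields (i).

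I expect this derivative bound to be the main obstacle: it needs care with reality of coefficients and with the sign conventions for phase.

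\textbf{Sufficiency.} When $n=1$ and the inequality holds, I would construct $f=\|g\|_\infty^{-1}\,\sigma\,(s-a)/(s+a)$ with $\sigma=\pm1$ and $a>0$ chosen to match the phase $-\theta_g(\omega_p)$. Then I would verify conditions (a) and (b1) of Proposition~\ref{prop:OmegaMS2}:
\begin{itemize}
\item $L(j\omega)\neq1$ for $\omega\neq\omega_p$, because $|L|<1$ there;
\item the derivative condition, which follows from $\theta_L'(\omega_p)>0$;
\item $\nu_L(0)=0=n-1$.
\end{itemize}
Internal stability otherwise holds because $f$ is stable and minimum-phase issues do not arise beyond the zero at $a$, which can be chosen away from the poles of $g$.

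\textbf{Case (II).} Here $g$ is real and peaks at both $\pm\omega_p$, so $L(\pm j\omega_p)=1$. The same argument applies at each of the two crossings, with symmetric contributions. Condition (b1), applied per crossing with $\nu_L(0)=0$, now requires $n-2=0$, so $n=2$, and the phase inequality is the same as before.
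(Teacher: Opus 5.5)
\paragraph{The key phase-derivative bound points the wrong way.}
There is a genuine gap at the step you flagged yourself. Condition (b1) gives only $\theta_g'(\omega_p)+\theta_f'(\omega_p)>0$. To extract $\theta_g'(\omega_p)>|\sin\theta_g(\omega_p)/\omega_p|$ from this, you need an \emph{upper} bound $\theta_f'(\omega_p)\le-|\sin\phi/\omega_p|$, where $\phi=\theta_f(\omega_p)$, valid for every real-rational $f\in\RHinf$ with $|f(j\omega_p)|=\|f\|_\infty$. In case (i) you likewise need $\theta_f'(0)\le 0$.

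You instead set out to bound $\theta_f'$ from below, and you arrive at $-\theta_f'(\omega_p)\le|\sin\phi|/\omega_p$ and $\theta_f'(0)\ge 0$. A lower bound on $\theta_f'$, combined with $\theta_g'+\theta_f'>0$, says nothing about $\theta_g'$.

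The inequality you wrote is also false. Stable all-pass factors have decreasing phase. Take $f=((s-a)/(s+a))^2$ and $\psi:=\arctan(\omega_p/a)\in(0,\pi/2)$. Then $-\theta_f'(\omega_p)=2\sin 2\psi/\omega_p$, whereas $|\sin\phi|/\omega_p=2\sin2\psi\,|\cos 2\psi|/\omega_p$, which is strictly smaller for every $a>0$.

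What Julia--Carath\'eodory bounds from below is the angular derivative. On the imaginary axis this equals $-\theta_f'(\omega_p)$, so it bounds $\theta_f'$ from \emph{above}. Concretely, let $\hat f=f/\|f\|_\infty$ be real-rational with $\hat f(\pm j\omega_p)=e^{\pm j\phi}$. The boundary Pick matrix for the kernel $(1-\overline{\hat f(t)}\hat f(s))/(s+\bar t)$ at the points $\pm j\omega_p$ satisfies
\[
\begin{bmatrix} -\theta_f'(\omega_p) & c \\ \bar c & -\theta_f'(\omega_p)\end{bmatrix}\succeq 0,
\qquad c:=\frac{1-e^{2j\phi}}{2j\omega_p},
\]
because it is a limit of interior Pick matrices. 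Since $|c|=|\sin\phi/\omega_p|$, this gives $-\theta_f'(\omega_p)\ge|\sin\phi/\omega_p|$, with equality for the matching $\pm(s-a)/(s+a)$. So the first-order all-pass is the phase-change-rate \emph{maximizer}; this is the content of \cite[Theorem 1]{hara2023exact}, which the paper's proof simply invokes.

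With the inequality corrected, the rest of your necessity argument is the same route as the paper's: $\nu_L(0)=0$ rules out (b2) and forces $n=1$, and $f(0)\in\IR$ forces $g(0)\in\IR$.

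\paragraph{Two smaller points.}
First, in the sufficiency construction $a$ is not free. Within the family $\sigma(s-a)/(s+a)$, matching $f(j\omega_p)=1/g(j\omega_p)$ determines $\sigma$ and $a$ uniquely, and it forces a constant $f$ when $\phi\in\{0,\pi\}$. So you cannot ``choose $a$ away from the poles of $g$''. Excluding cancellation of a real unstable pole of $g$ by the zero at $a$ needs its own argument. Options include the root-locus argument in the paper's remark for $\G_2^{\pm\omega_p}$, the non-reality of the unstable pole of $\lambda h/(1-\lambda h)$, or using the slack in the strict inequality to switch to a higher-order all-pass.

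Second, in case (II), Proposition~\ref{prop:OmegaMS2} does not apply literally, because its condition (a) forbids a second point where $L=1$. Your per-crossing count is right, but it should be derived from Lemma~\ref{lemma:MS}. At a point where $L(j\omega_p)=1$ and $(|L|)'=0$, one has $L(j\omega_p+\epsilon)\approx 1+\epsilon\,\theta_L'(\omega_p)$. Hence each such point with $\theta_L'>0$ becomes exactly one counterclockwise crossing of $(1,\infty)$, and each with $\theta_L'<0$ becomes none. This gives $\nu_L(\epsilon)=2$ and hence $n=2$.
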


\begin{proof}
First, note that when $\omega_p \neq 0$, the assumption that $g$ has a unique peak gain on the entire imaginary axis at $j\omega_p$ (but not at $-j\omega_p$) 
implies $g \in \mathcal{L}_{\infty} \setminus \mathcal{R}\mathcal{L}_{\infty}$; that is, $g$ is not real-rational.

This claim extends Theorem 1 of \cite{hara2023exact} to the case where $g$ is complex-rational. In the real-rational setting of \cite{hara2023exact}, it holds 
that $|g(j\omega_p)| = \|g\|_{\infty}$ if and only if $|g(-j\omega_p)| = \|g\|_{\infty}$. This symmetry property is lost when 
$g$ is complex-rational, meaning that $\|g\|_{\infty} = |g(j\omega_p)| > |g(j\omega)|$ for all $\omega \in \mathbb{R} \setminus \{\omega_p\}$ typically holds . 
Consequently, the arguments in the proof of \cite[Theorem 1]{hara2023exact} can be applied to establish the claim by invoking Proposition~\ref{prop:OmegaMS2} 
instead of \cite[Proposition 2]{hara2023exact}. Specifically, the additional requirement that $g(0)$ be real when $\omega_p = 0$ arises because $f$ is real-rational; 
thus, $f(0)$ is real, and the condition $f(0)g(0) = 1$ can only hold if $g(0)$ is also real.
\end{proof}

\begin{remark}
Note that the minimum-norm compensator $f$ in question follows the structures established in~\cite[Theorem 1]{hara2023exact}. 
In the case where $\omega_p=0$, a solution is the constant function $f\equiv 1/\|g\|_{\infty}$, which clearly precludes the possibility of
pole-zero cancellations. In the case where $\omega_p\not=0$, a solution $f$ takes the form of (stable) first-order all-pass 
function. Should $g$ belong to {\ti $\G_2^{\pm\omega_p}$ or the subclass of $\G_1^{\omega_p}$ considered in this paper}, 
one can show there is no unstable
pole-zero cancellation between $g$ and $f$ by the following arguments.

Suppose that the unstable zero of the first-order all-pass $f$ cancels one unstable pole of $\ti g\in\G_2^{\pm\omega_p}$. 
This would leave the loop-gain function $L:=fg$ with one remaining unstable real pole. 
Since $\|L\|_{\infty} \le \|f\|_{\infty}\|g\|_{\infty} = 1$, the marginally stable poles of the closed-loop system $[\![L,1]\!]$ 
can only be at $\pm j\omega_p$. On the other hand, given the real unstable pole of $L$, the root-locus analysis dictates that there 
must be a real pole of the system $[\![k\cdot L,1]\!]$ crossing the origin when $k$ increases from $0$ to $1$; i.e. there exists 
$k\in(0,1)$ such that $k\cdot L(0) = 1$. This implies $L(0)=1/k > 1$, contradicting $\|L\|_{\infty} \le 1$. 

Finally, we note that in the case that $g\in\G_1^{\omega_p}$, the above arguments are no longer applicable. However, as shown in the 
subsequent sections, the plant $g$ under consideration in this paper has the form $g(s)=\lambda h(s)/(1-\lambda h(s))$, where 
$\lambda\in\IC$ and $h\in\Rp$. For $g\in\G_1^{\omega_p}$ in thisform, there can be no unstable pole-zero cancellation 
between $f$ and $g$, because the (only) unstable pole of $g$ cannot be of real-valued and hence cannot be cancelled by the real 
unstable zero of $f$. 
\end{remark}
\section{Cyclic Network: Preliminaries}
\label{sec:Cyclic1}
\subsection{System Setup}
\label{subsec:CycricProblem}

In this and the next sections we consider multi-agent system $\Sigma(\Delta,H,A)$ where 
agents are connected in a cyclic fashion. This corresponds to $A$ being a cyclic matrix of the 
form
$\begin{bmatrix}
    \ \ \mathbf{0}_{n-1}^T & -\mu_1 \\ \mathcal{U} & \ \mathbf{0}_{n-1}  \end{bmatrix}$,
where $\mathcal{U}:=\diag(-\mu_2,\cdots,-\mu_n)${\ti, $\mathbf{0}_{n-1}$ is the $n-1$ dimensional zero vector,} and we assume $\mu_1\mu_2\cdots\mu_n>0$.
Furthermore, we will set $w_m=1$ for simplicity ; as such, the dynamics of the agents 
are modelled by $(1+\delta_i(s))h(s)$, $i=1,\ldots,n$.

In this paper, we are interested in finding conditions for such networks to have the exact robust instability radius.
By Lemma~1 of~\cite{NetworkRIR_short}, internal stability of such networks is characterized by  
$\left(I - h(I+\Delta) A\right)^{-1} \in \RHinf^{n \times n}$,  
which is equivalent to the following condition:
\begin{align*}
\det \left(I - h(s)(I+\Delta(s))A) \right) \neq 0 , \; \forall \; \Re(s) \geq 0 . 
\end{align*}
Note that this is the stability condition for a multi-agent feedback system which has $n$ SISO subsystems
$\varPsi_i$, $i=1,\cdots,n$, interconnected cyclically as follows 
\begin{align*}
{\red \varPsi_1 = y_n\mapsto y_1, \mbox{ and }
\varPsi_i = y_{i-1}\mapsto y_i \mbox{ for  $i=2,\cdots, n$,} } 
\end{align*}
where each $\varPsi_i$ is LTI with the transfer function $\varPsi_i(s)=-\mu_i h(s)(1+\delta_i(s))$. 
Clearly, the characteristic equation of the system in question can be expressed as
\begin{equation} \label{checyc}
1 - (- \mu)^nh(s)^n\prod_{i=1}^n(1+\delta_i(s))=0, 
\end{equation}
where %
$\ti\mu:=(\mu_1\mu_2\cdots\mu_n)^{1/n}$ is a positive real number.
The stability of the system is guaranteed if and only if the above equation has no solution in 
the CRHP. 

One can readily verify that the characteristic equation of $A$ is $s^n - (-\mu)^n=0$. As such, 
the eigenvalues of $A$, denoted by $\lambda_k$, $k\in\mathbb{I}_n$, are given by
\begin{align}
\label{eq:eigs_A}
\lambda_k = (-\mu)\mathrm{e}^{j(2k\pi)/n}. 
\end{align}
Note that for any given $n$, $-\mu$ is always an eigenvalue of $A$, and the $n$ eigenvalues are equally spaced along a circle of radius $\mu$. 
Thus, the angle difference between two adjacent eigenvalues of $A$ is $2\pi/n$, and one
can readily verify that at least one eigenvalue of $A$ is in the closed first quadrant when $n\ge 2$; i.e., 
there are eigenvalues of $A$ with angles in $[0,\pi/2]$. Among these, let $\lambda_*$ 
denote the eigenvalue which has the smallest angle. It follows that 
\begin{align}
\label{def:lambda_star}
\lambda_* = 
\begin{cases}
\mu & \mbox{ if $n$ is even;} \\
\mu\mathrm{e}^{j\pi/n} & \mbox{ if $n$ is odd. }
\end{cases}
\end{align}
That is, $\angle\lambda_*=0$ when $n$ is even, while $\angle\lambda_*=\pi/n$ when $n$ is odd. 
For the subsequent development, we consider 
\begin{align}
g_k(s):=\frac{\lambda_kh(s)}{1-\lambda_kh(s)}
= \frac{\lambda_k}{\phi(s) - \lambda_k}
,
\label{def:gk}
\end{align}
where $\phi(s) := 1/h(s)$, and $\lambda_k$ are the eigenvalues of $A$ given in~\eqref{eq:eigs_A}. 
Let $\IU\subseteq\II_n$ denote the set of indices $k$ such that $g_k(s)$ is unstable. 
{\ti We will refer to any $\lambda_k$ with $k\in\IU$ as ``unstable'' eigenvalues.}

\subsection{Homogeneous Equivalence}
\label{subsec:EquiHeteroHomo}
This subsection establishes the identity $\rho_*=\rho_h$ by showing the equivalence of homogeneous and heterogeneous perturbations, 
enabling a known lower bound $\rhoL$ on $\rho_h$ \cite{NetworkRIR_short} to also bound $\rho_*$ from below.
For the subsequent development, $\log(\cdot)$ denotes the {\ti principal} logarithm function with domain $\mathbb{C}\backslash\{0\}$, defined as
\[
\log(z):=\ln |z|+j \theta
\]
where $\theta\in(-\pi,\pi]$ is the unique real number that satisfy $\cos\theta = \Re(z)/|z|$ and 
$\sin\theta = \Im(z)/|z|$. Note that the principal logarithm function is a bijection between its domain 
$\mathbb{C}\backslash\{0\}$ and the {\ti strip} $\mathbb{S}:=\{x+jy:x\in \mathbb{R}, \ y\in (-\pi,\pi]\}\subset\mathbb{C}$ 
of the complex plane. Specifically, for any $z\in\mathbb{C}\backslash\{0\}$, there is a unique $s\in\mathbb{S}$
such that $s=\log(z)$ and $z=\mathrm{exp}(s)$, where $\mathrm{exp}(\cdot)$ denotes the exponential function. 
Also note that $\log(\cdot)$ is analytic on $\mathbb{C}\backslash\mathbb{R}_{\le 0}$.

The following result establishes the equivalence between the homogeneous and heterogeneous cases, thereby reducing 
the instability analysis of the entire system to an analysis of its individual components.

\begin{prop}
\label{prop:hetro_eq_homo} 
Consider the cyclic network defined in Section~\ref{subsec:CycricProblem}. Suppose 
${\Delta}:=\diag({\delta}_1,\cdots,{\delta}_n) \in\mathbf{\Delta}_d$, 
with $\|{\Delta}\|_{\infty}{\ti=:}r < 1$ such that the cyclic network $\Sigma({\Delta},H,A)$ is stable. Then 
one can find ${\delta}\in\RHinf$ with 
$\|{\delta}\|_{\infty} \le \|{\Delta}\|_{\infty} +\epsilon$, with $\epsilon>0$ arbitrarily small, 
such that $\Sigma({\delta} I,H,A)$ is stable.
\end{prop}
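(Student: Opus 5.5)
The plan is to exploit the fact that, by \eqref{checyc}, stability of the cyclic network depends on the perturbations only through the product $\prod_{i=1}^n(1+\delta_i(s))$. I would therefore build a homogeneous perturbation $\delta$ whose $n$-th power reproduces this product, via a geometric mean:
\[
1+\hat\delta(s):=\exp\Big(\frac1n\sum_{i=1}^n \log\big(1+\delta_i(s)\big)\Big).
\]
Then I would check three things: that $\hat\delta$ is admissible with $\|\hat\delta\|_\infty\le r$, that $\Sigma(\hat\delta I,H,A)$ is stable, and finally that $\hat\delta$ can be replaced by a real-rational $\delta$ at a cost of $\epsilon$.

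\emph{Step 1 (well-definedness).} Since each $\delta_i\in\RHinf$ has $\|\delta_i\|_\infty\le r<1$, the maximum modulus principle gives $|\delta_i(s)|\le r$ on the CRHP. Hence $1+\delta_i(s)$ lies in the closed disc $\bar D(1,r)$, which is contained in $\{\Re z>0\}$. The principal $\log$ is analytic there, so $\hat\delta$ is analytic and bounded on the CRHP. Because each $\delta_i$ is rational and proper, $\hat\delta$ is also continuous up to the extended imaginary axis. Moreover, $\delta_i(\bar s)=\overline{\delta_i(s)}$ and $\log\bar z=\overline{\log z}$ off the negative real axis, so $\hat\delta$ inherits conjugate symmetry.

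\emph{Step 2 (norm bound).} This is the step I expect to be the main obstacle. I would aim to show that the image $\log(\bar D(1,r))$ is convex for $r<1$. This is the classical fact that $\log$ maps discs not containing the origin onto convex regions. I would verify it by checking that the boundary curve $\theta\mapsto\log(1+re^{j\theta})$ has curvature of constant sign, or by showing that the support function is well behaved. Granting convexity, the average $\frac1n\sum_i\log(1+\delta_i(s))$ stays in $\log(\bar D(1,r))$. Applying $\exp$ then returns a point of $\bar D(1,r)$, so $|\hat\delta(s)|\le r$ pointwise and $\|\hat\delta\|_\infty\le r$.

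\emph{Step 3 (stability and rational approximation).} By construction $(1+\hat\delta)^n=\prod_i(1+\delta_i)$. Hence $\det(I-h(1+\hat\delta)A)$ coincides with the characteristic function of the given stable heterogeneous network. That function has no zeros in the CRHP, and it tends to $1$ at infinity because $h$ is strictly proper, so it is bounded away from zero there. Consequently $(I-h(1+\hat\delta)A)^{-1}$ is bounded and analytic on the CRHP. Next I would use density of real-rational functions in the disc algebra with conjugate symmetry (for example via a bilinear map to the unit disc and Fejér/Cesàro polynomial approximation, symmetrized) to pick $\delta\in\RHinf$ with $\|\delta-\hat\delta\|_\infty<\epsilon'$. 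A small-gain perturbation argument on $I-h(1+\delta)A=\big(I-h(1+\hat\delta)A\big)-h(\delta-\hat\delta)A$ shows the inverse stays in $\Hinf$ once $\epsilon'\,\|hA\|_\infty\,\|(I-h(1+\hat\delta)A)^{-1}\|_\infty<1$. Taking $\epsilon'\le\epsilon$ as well yields $\|\delta\|_\infty\le r+\epsilon$ and stability of $\Sigma(\delta I,H,A)$ via the internal stability characterization (Lemma~1 of~\cite{NetworkRIR_short}).
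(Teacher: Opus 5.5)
Your proposal is correct and follows the paper's proof essentially step by step: the same geometric-mean perturbation, the same key lemma (convexity of $\{\log(1+\delta):|\delta|\le r\}$, which the paper establishes via concavity of the boundary curve, just as you suggest), the same identity $(1+\hat\delta)^n=\prod_i(1+\delta_i)$ to transfer stability, and a real-rational approximation plus small-gain argument to finish. The differences are only cosmetic. You build the approximant explicitly via a bilinear map and Fej\'er means where the paper cites Curtain--Zwart; since Fej\'er means do not increase the sup norm, this would even give $\|\delta\|_\infty\le r$. You also work directly with $I-h(1+\hat\delta)A$ on the CRHP, which avoids the paper's separate treatment of points outside $\mathbb{P}$.
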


\begin{proof}
The proof can be found in Appendix~\ref{subsec:proof_hetro_eq_homo}.
\end{proof}
{\ti
The result shows that if the network can be stabilized by a heterogeneous perturbation 
$\Delta\in\bfDel_d$, then it can also be stabilized by a homogeneous perturbation
$\delta I\in\bfDel_h$ with a slightly larger norm. The core idea is to construct, for each
$s$ in the CRHP, a homogeneous perturbation $\delta_{\rm av}$ such that
\begin{equation} \label{ave}
(1+\delta_{\rm av})^n=\prod_{i=1}^n(1+\delta_i).
\end{equation}
We then show that $|\delta_{\rm av}|\leq r$ whenever $|\delta_i|\leq r<1$
for all $i\in\II_n$. This follows from the convexity of the set 
\begin{align*} %
\bfThe:=\log(1+\bfdel), \hs
\bfdel:=\{\delta\in\IC:\,|\delta|\leq r\,\},
\end{align*}
as established in Lemma~\ref{lem:1} in the appendix. Finally,
the ``average'' $\delta_{\rm av}(s)$ is approximated by a real-rational
function, which might slightly increase the norm.

}

\begin{remark}
The same result holds for the case of feedback perturbation, where 
the perturbed system is represented by $h(s)/(1 + \delta_i(s))$. 
The proof of Proposition~\ref{prop:hetro_eq_homo} hinges on the convexity of the set $\mathbf{\Theta}$. 
Because the set $\mathbf{\Theta}$ corresponding to the feedback perturbation case remains unchanged, this crucial 
convexity property still holds. Consequently, the same proof directly applies to the case of feedback perturbation.
\end{remark}

Proposition~\ref{prop:hetro_eq_homo} leads to the following main result of this section. It shows that, 
in terms of finding a smallest distributed perturbation $\Delta$ that stabilizes
the nominal cyclic network $\ti G:=\mathcal{F}_{\ell}(H,A)$, 
heterogeneity will not lead to a better outcome.

\begin{theorem}
\label{thm:homo_equiv_hetero}
For the cyclic network $\Sigma(\Delta,H,A)$ defined above, we have %
$\rho_*(G) = \rho_h(G) \geq \rhoL(G)$, where
\begin{equation} \label{varrho}
\rhoL(G):=\max_{k\in\IU}\frac{1}{\|g_k\|_{\infty}}=\frac{1}{\mu}\max_{k\in\IU}\left(\inf_{\omega\in\IR}|\phi(j\omega)-\lambda_k|\right)
\end{equation}
and $g_k$ and $\IU$ are defined in and below~\eqref{def:gk}.
\end{theorem}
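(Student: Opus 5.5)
The proof splits into two parts: the equality $\rho_*(G)=\rho_h(G)$, and the lower bound $\rho_h(G)\ge\rhoL(G)$.

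\emph{Equality.} Since $\IS_h(G)\subseteq\IS_d(G)$, we already have $\rho_*(G)\le\rho_h(G)$. For the reverse inequality, take any $\Delta\in\IS_d(G)$ with $\|\Delta\|_\infty<\rho_h(G)$; if no such $\Delta$ exists there is nothing to prove.

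\begin{itemize}
\item If $\|\Delta\|_\infty<1$, Proposition~\ref{prop:hetro_eq_homo} gives, for every $\epsilon>0$, some $\delta\in\RHinf$ with $\|\delta\|_\infty\le\|\Delta\|_\infty+\epsilon$ and $\delta I\in\IS_h(G)$. Hence $\rho_h(G)\le\|\Delta\|_\infty+\epsilon$. Taking the infimum over $\Delta$ and letting $\epsilon\to0$ yields $\rho_h\le\rho_*$.
\item To dispose of the case $\rho_*(G)\ge1$, I would observe that $\delta\equiv-1$ is a homogeneous perturbation of norm $1$. The characteristic equation \eqref{checyc} then becomes $1=0$, which has no roots, so the network is trivially stable. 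Thus $\rho_h(G)\le1$, which means only perturbations with norm below $1$ matter. Internal stability in the sense of Proposition~\ref{prop:IS1} should also be checked for this $\delta$, since $\delta\equiv-1$ cancels the whole agent $h(1+\delta)$. If that check fails, I would instead use a limiting argument with $\delta\equiv-(1-\eta)$.
\end{itemize}
This gives $\rho_*=\rho_h$.

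\emph{Lower bound.} For $\Delta=\delta I$, diagonalise $A$. Its eigenvalues $\lambda_k$ in \eqref{eq:eigs_A} are distinct, so $A$ is diagonalisable, and
\[
\det\bigl(I-h(1+\delta)A\bigr)=\prod_{k}\bigl(1-\lambda_k h(1+\delta)\bigr).
\]
The network with $\delta I$ is therefore stable iff each scalar loop $1-\lambda_k h(1+\delta)$ has no CRHP zeros. Writing
\[
1-\lambda_k h(1+\delta)=(1-\lambda_k h)(1-g_k\delta),
\]
stability of the $k$-th factor is equivalent to $\delta$ stabilizing the complex-coefficient SISO plant $g_k$ in positive feedback. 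One has to track pole/zero cancellations carefully here, and I would appeal to the internal stability characterization for this. For $k\in\IU$, $g_k$ is unstable, and the small-gain bound of \cite{Inoue:ECC2013} for SISO systems gives $\|\delta\|_\infty\ge1/\|g_k\|_\infty$. Maximising over $k\in\IU$ yields $\rho_h\ge\rhoL$.

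\emph{Explicit formula.} This follows from $g_k=\lambda_k/(\phi-\lambda_k)$ and $|\lambda_k|=\mu$, so that
\[
\|g_k\|_\infty^{-1}=\frac{1}{\mu}\inf_{\omega\in\IR}|\phi(j\omega)-\lambda_k|.
\]

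\emph{Main obstacle.} I expect the hardest step to be the rigorous reduction from network internal stability to stability of each scalar loop with complex-coefficient $g_k$. This requires showing that unstable poles of $g_k$ cannot be cancelled and that the small-gain lower bound applies to complex-rational plants with a real-rational $\delta$. I would first try arguing directly on the CRHP zeros of $1-g_k\delta$ via the argument principle, which does not depend on real coefficients.
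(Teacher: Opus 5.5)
Your proposal is correct and follows essentially the paper's route. Proposition~\ref{prop:hetro_eq_homo} gives $\rho_h(G)\le\rho_*(G)$ and the inclusion $\IS_h(G)\subseteq\IS_d(G)$ gives the reverse; the bound $\rho_h(G)\ge\varrho_+(G)$, which you obtain by diagonalizing $A$ and applying the SISO small-gain argument to each unstable $g_k$, the paper simply cites from Theorem~1 of \cite{NetworkRIR_short}. Your use of $\delta\equiv-1$ to get $\rho_h(G)\le 1$ is valid, since then $(I-\Delta G)^{-1}=I-hA$ and $G(I-\Delta G)^{-1}=hA$ are stable, and it covers the case $\|\Delta\|_\infty\ge 1$ that the paper's appeal to Proposition~\ref{prop:hetro_eq_homo} (stated only for $r<1$) leaves implicit. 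Your cancellation concern also goes away if you apply the argument principle to the product $1-\lambda_k h(1+\delta)=(1-\lambda_k h)(1-g_k\delta)$ rather than to $1-g_k\delta$ alone.
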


\begin{proof}
By Proposition~\ref{prop:hetro_eq_homo}, we have
\begin{align*}
\rho_h(G) :=\inf_{\delta\in \mathbb{S}_h(G)}\|\delta\|_{\infty} %
\le \inf_{\Delta\in \mathbb{S}_d(G)}\|\Delta\|_{\infty}:=\rho_*(G). 
\end{align*}
On the other hand, $\rho_*(G) \le \rho_h(G)$ by definition, and 
it is established in Theorem 1 of~\cite{NetworkRIR_short} that $\rhoL(G) \le \rho_h(G)${\ti.}
These facts lead to the claimed relation.
\end{proof}

{\ti

\begin{remark}
The lower bound $\rhoL$ on the homogeneous RIR $\rho_h$ was derived in
\cite{NetworkRIR_short} for general networks with diagonalizable connectivity
matrix $A$. The result was based on the observation that the characteristic
equation of the network with homogeneous perturbation $\delta I\in\bfDel_h$
is given by $g_k(s)\delta(s)=1$ for $k\in\II_n$ when the spectral decomposition
of $A$ is exploited. 
The equivalence between heterogeneous and homogeneous perturbations for the cyclic 
network (Proposition~\ref{prop:hetro_eq_homo}) then ensures that $\rhoL$ is also a 
lower bound on $\rho_*$. 
\end{remark}

}

\section{Cyclic Network: Exact RIR Analysis}
\label{sec:CycricExactRIR}
{\blue 
In this section, we will characterize a class of {\ti nominal agent $h(s)$ for the} cyclic network systems {\ti for which} %
the lower bound $\rhoL(G)$
{\ti in Theorem~\ref{thm:homo_equiv_hetero} is tight and} serves as the exact robust instability radius. Specifically, for any network system in 
this class, there exists a stable perturbation $\delta$ with $\|\delta\|_\infty = \rhoL(G)$ that renders the network 
{\ti with $\Delta=\delta I$} marginally stable. 
}

\subsection{Objective and Illustrate Example}

{\blue

In this subsection, we use an example to motivate the measure $\rhoL$ in~\eqref{varrho}---which
leads to tight RIRs of cyclic networks---and outline the procedure for deriving this tight RIR result. The transfer 
functions and parameters presented below are derived from the biological application studied later in Section~\ref{sec:BioAppli}. 
Readers are referred to Section~\ref{sec:BioAppli} for their physical meanings.

Consider a cyclic network of $n$ uncertain agents with dynamics given by $-\mu_i(1+\delta_i(s))h(s)$, 
where $n$ is odd and $\mu_i = \mu >0$ for all $i\in\II_n$. The nominal transfer function $h(s)$ is defined as 
\begin{align}
&  h(s) = \frac{c \beta}{(s+a)(s+b)},
  \label{eq:h-grn}\\
&  a = 5.1986, \ b = 0.4612, \ c = 30, \ \beta = 24 \label{eq:param-grn}
\end{align}
with $\mu= 0.006247$. 
This setup corresponds to the practical gene regulatory network application studied in~\cite{Niederholtmeyer2015}.  
}
\begin{figure}[tb]
\centering    %
\begin{subfigure}{0.47\columnwidth}
    \centering
    \includegraphics[width=\textwidth,clip]{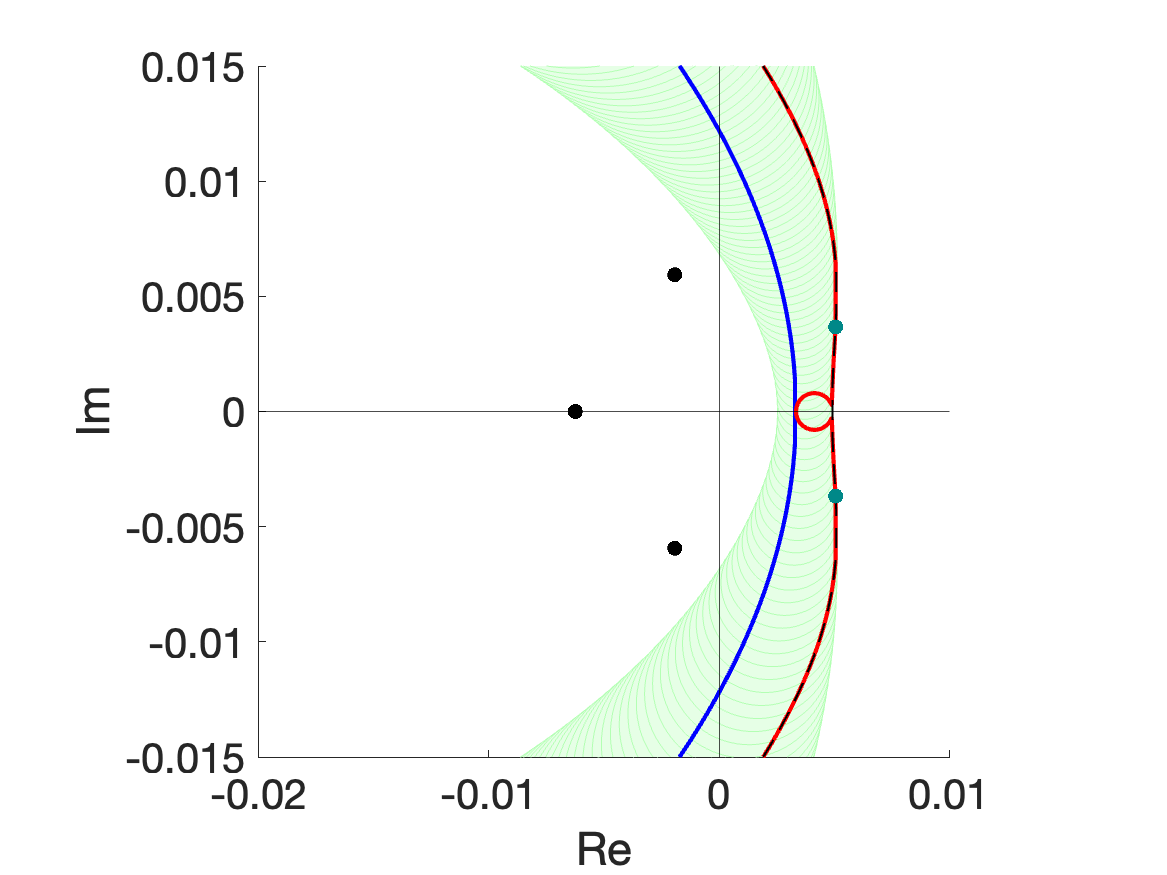}
    \caption{$n=5$}
    \label{fig:a}
\end{subfigure}
\begin{subfigure}{0.47\columnwidth}
    \centering
    \includegraphics[width=\textwidth,clip]{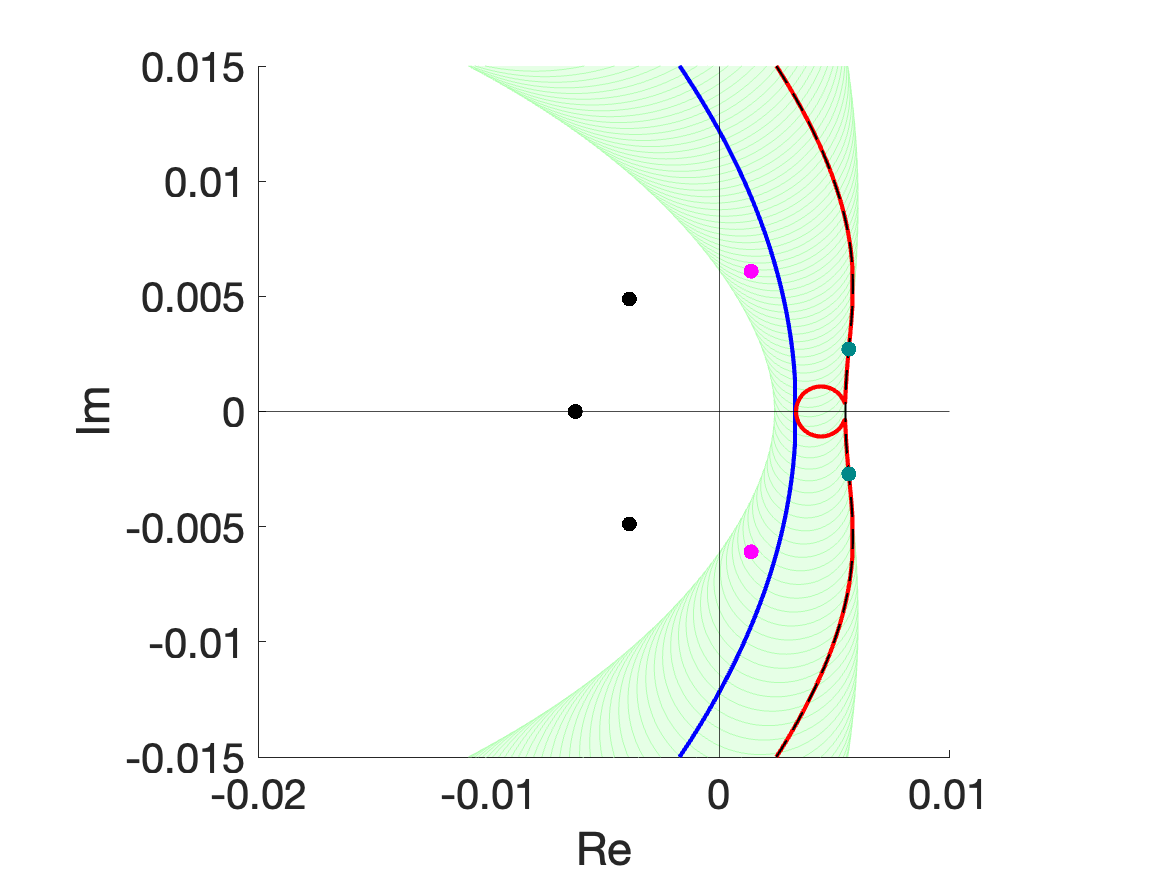}
    \caption{$n=7$}
    \label{fig:b}
\end{subfigure}

\begin{subfigure}{0.47\columnwidth}
    \centering
    \includegraphics[width=\textwidth,clip]{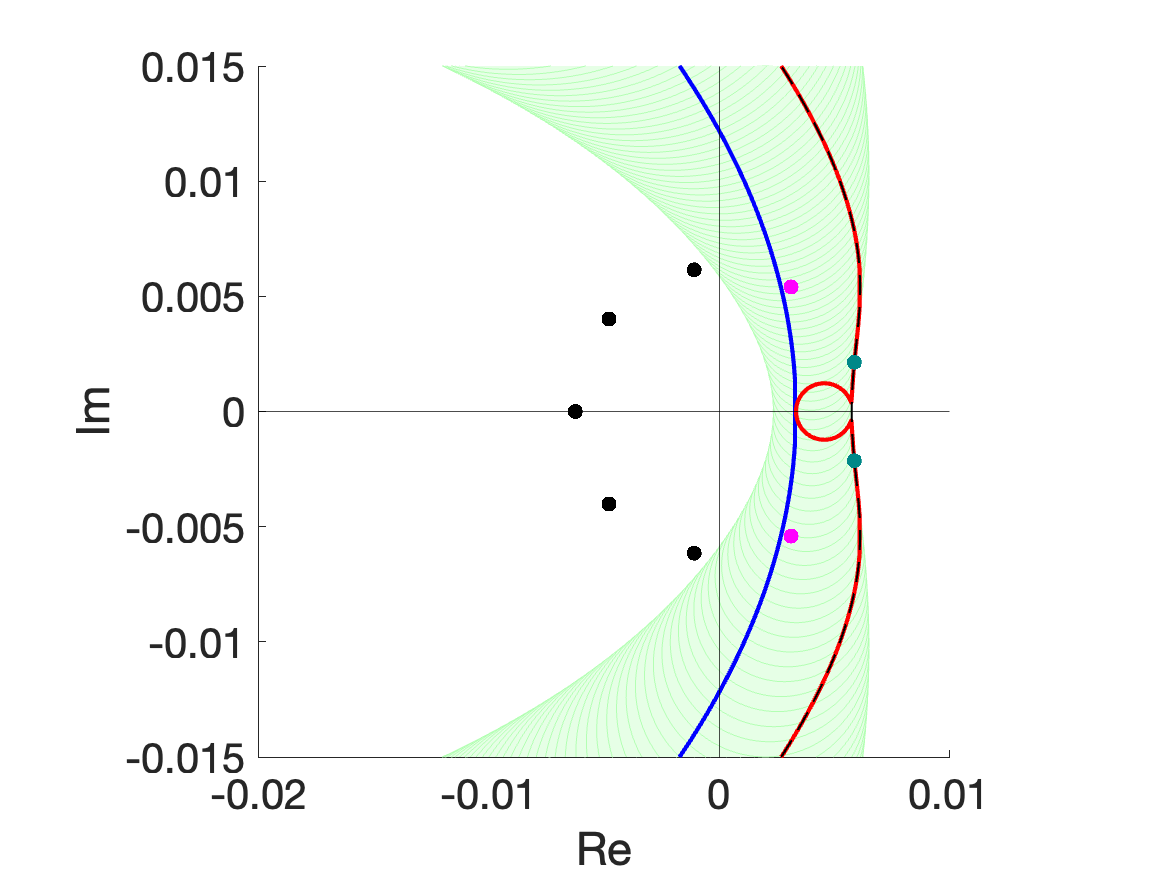}
    \caption{$n=9$}
    \label{fig:c}
\end{subfigure}
\begin{subfigure}{0.47\columnwidth}
    \centering
    \includegraphics[width=\textwidth,clip]{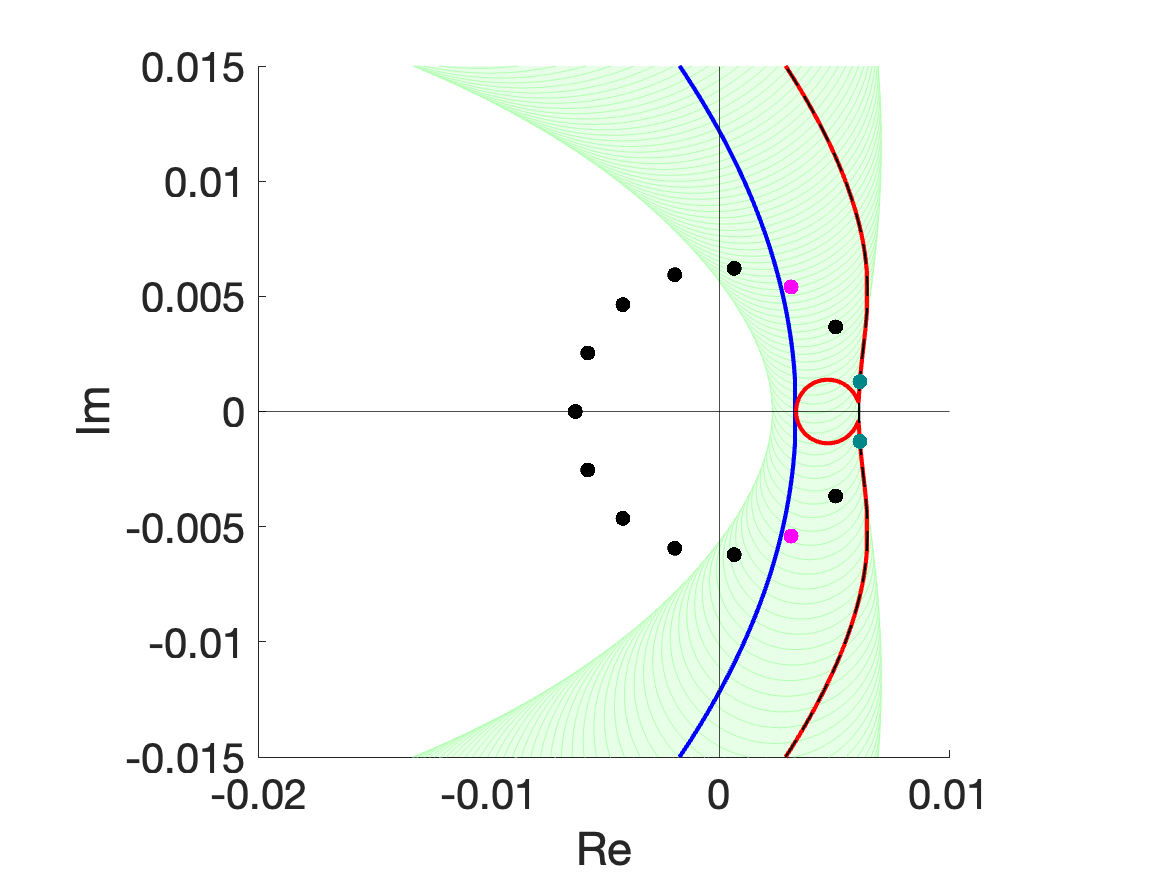}
    \caption{$n=15$}
    \label{fig:d}
\end{subfigure}

\caption{The stability region of $h(s)$ in the complex plain and the eigenvalues of $A$.}
\label{fig:four_images}
\end{figure}

{\blue
By the small-gain theorem, $\varrho_p(G):=\|G\|_{\infty}^{-1}$ is a lower bound on the RIR $\rho_*(G)$, where $\varrho_p(G)$ can be 
characterized as 
\begin{align}
\label{def:rhoP}
\varrho_p(G) = \min_{k\in\II_n}\frac{1}{\|g_k\|_{\infty}}
=\frac{1}{\mu}\min_{k\in\II_n}\left(\inf_{\omega\in\IR}|\phi(j\omega)-\lambda_k|\right).
\end{align}
On the other hand, Theorem~\ref{thm:homo_equiv_hetero} provides another lower bound, $\rhoL(G)$, on $\rho_*(G)$. By~\eqref{varrho} and~\eqref{def:rhoP}, 
that $\varrho_p(G) \leq \rhoL(G)$ clearly holds in general. 
For this particular example, the values of $\varrho_p(G)$ and $\rhoL(G)$ across a selection of the number of agents $n$ are summarized 
in the table below. 

\begin{table}[H]
\centering
\begin{tabular}{c|c|c|c|c|c|c|c}
$n$ & $3$ & $5$ & $7$ & $9$  & $11$ & $13$ & $15$ \\
\hline 
$\varrho_p$ & $0.069$ & $0.319$ & $0.172$ & $0.069$  & $0.197$ & $0.055$ & $0.069$ \\
\hline 
$\rhoL$    & $0.069$ & $0.319$ & $0.391$ & $0.420$ & $0.435$ & $0.444$ & $0.449$  \\
\end{tabular}
\end{table}

\noindent %
Except for $n=3$ and $n=5$, there are significant gaps between $\varrho_p(G)$ and $\rhoL(G)$, making $\rhoL(G)$ a much tighter lower bound on $\rho_*(G)$. 
In the following, we will explain the underlying reasons for this phenomenon. 

Recall that the stability of the nominal cyclic network $G:=\mathcal{F}_{\ell}(H,A)$ is equivalent to the stability 
of all $g_k$, $k\in\II_n$, where each $g_k$ is characterized as in~\eqref{def:gk}. By~\eqref{def:gk} and~\eqref{eq:h-grn}, 
one can readily verify that the stability of $g_k$ is characterized by having all eigenvalues $\lambda_1,\dots,\lambda_n$ of $A$ 
lie to the left side of the Nyquist plot of $\phi:=1/h$; any eigenvalue located on the right side implies instability. Eigenvalues
of $A$ located on the Nyquist plot of $\phi$ imply that the network has poles on the imaginary axis. 
In the case of 
the perturbed network with $\Delta:=\delta I_n$, we simply replace $g_k$ by $\tilde{g}_k:=\lambda_k/({\phi^{\dagger}}-\lambda_k)$, where ${\phi^{\dagger}}:=\phi/(1+\delta)$, 
and the same stability criterion applies. Thus, stabilizing an unstable network $G$ via the perturbation $\delta$ can be understood as 
finding a $\delta$ that modifies $\phi$ to ${\phi^{\dagger}}$ so that all eigenvalues of $A$ lie to the left of the Nyquist plot of ${\phi^{\dagger}}$. 
This interpretation is illustrated by Fig.~\ref{fig:four_images}. 

Figures~\ref{fig:four_images}(a)–(d) depict the scenarios for $n = 5, 7, 9,$ and $15$, respectively. In each plot, the blue curve represents the Nyquist plot of $\phi$,
while the $n$ dots on the circle of radius $\mu$ represent the eigenvalues of $A$. In all cases, several eigenvalues lie to the right of the blue curve, indicating that
the nominal network $G$ is unstable. The green region in each figure represents the following set:
\begin{align}
\label{the_set}
  \{\phi(j\omega)/(1+p) \mid \omega \in \mathbb{R}, \ p \in \mathbb{C}, \ |p| \le \rhoL \}, 
\end{align}
which specifies the admissible region for the Nyquist plot of ${\phi^{\dagger}}$ corresponding to any $\delta$ satisfying $\|\delta\|_{\infty}\le\rhoL$.  
Crucially, the green region covers those eigenvalues on the right, with the green-colored eigenvalues on the boundary of the region. 
This indicates that one might find a $\delta$ with $\|\delta\|_{\infty}\le\rhoL$ that marginally stabilizes 
the network $G$. One such $\delta$ is found, yielding a ${\phi^{\dagger}}$ whose Nyquist plot is depicted by the red curve. 
Note that by replacing $\rhoL$ in~\eqref{the_set} with a smaller number will shrink the green region, 
and thus it misses the green-colored unstable eigenvalues. This means, one would not be able to stabilize $G$ via a $\delta$ with norm smaller than $\rhoL$,
suggesting it is a tight lower bound of $\rho_*(G)$. 

Regarding the gap between $\varrho_p(G)$ and $\rhoL(G)$, note that according to~\eqref{varrho}, the lower bound $\rhoL(G)$ is determined by the 
green-colored eigenvalue(s) on the right, located farthest from the blue curve; this maximal distance yields the value of $\rhoL(G)$. In contrast, 
by~\eqref{def:rhoP}, the lower bound $\varrho_p(G)$ is determined by the eigenvalue(s) closest to the blue curve, regardless of which side of the curve they lie, 
with that minimal distance defining $\varrho_p(G)$. These closest eigenvalues are highlighted in red for $n=7$, $9$ and $15$. The figures clearly illustrate 
a significant gap between $\varrho_p(G)$ and $\rhoL(G)$ for these cases. For $n=5$, however, the two underlying eigenvalues happen to coincide, which explains 
why $\varrho_p(G)=\rhoL(G)$ in this case. 

It is worth noting that in this example, $\phi$ is a second-order polynomial whose Nyquist plot enjoys favorable properties, such as gain/phase monotonicity and 
convexity of the stability region. These properties play a crucial role in ensuring that $\rho_*(G)=\rhoL(G)$, and they motivate the development of this section 
as follows. 

First, the gain/phase monotonicity and convexity properties that underpin the tight characterization of $\rho_*(G)$ are introduced in Section~\ref{subsec:property_pgc}. Based on these properties, we will:
\begin{itemize}
    \item show in Section~\ref{subsec:crit_eig} that the critical eigenvalue giving rise to $\rhoL$ is $\lambda_*$, as defined in~\eqref{def:lambda_star};
    \item show in Section~\ref{subsec:PCR} that $g_*:=\lambda_*h/(1-\lambda_*h)$ can be (marginally) stabilized by a stable $\delta$ with $\| \delta \|_\infty = \rhoL$;
    \item show in Section~\ref{subsec:ClassesExactRIR} that this same $\delta$ simultaneously stabilizes all other systems $g_k(s)$, thereby implying that the structured perturbation $\Delta := \delta I_n$ renders the entire network (marginally) stable.
\end{itemize}
Finally, the characterization of a class of nominal agent dynamics $h$ having the required properties is presented in Section~\ref{subsec:class_of_phi}. While this class is by no means exhaustive, it nonetheless encompasses important models found in biological applications and explored in Section~\ref{sec:BioAppli}.
}

\subsection{Monotonicity and Convexity}
\label{subsec:property_pgc}

{\ti This section defines some properties for} a function $\phi:j\IR\mapsto\IC$, 
{\ti which represents the nominal agent dynamics $\phi(s):=1/h(s)$ in later sections.
L}et $\phi_R(\omega)$ and $\phi_I(\omega)$ be the real and imaginary parts of 
$\phi(j\omega)$; i.e., $\phi(j\omega) = \phi_R(\omega)+ j \phi_I(\omega)$. 
The phase and gain functions of $\phi$ are respectively %
{\ti defined as $\theta_\phi(\omega)$ and $r_\phi(\omega)$ such that
$\phi(j\omega)=r_\phi(\omega)e^{j\theta_\phi(\omega)}$.}
We assume
$\phi(0)>0$ and $\lim_{\omega\to\infty}r_{\phi}(\omega)=\infty$. Furthermore, if $\theta_{\phi}(\omega)=\pi$
for some frequencies, let $\omega_{\pi}$ {\ti denote} the smallest positive frequency among these; otherwise, 
$\omega_{\pi}$ is defined as $\infty$.  

In the {\ti following}, we use the superscripts $(\bullet)'$ and $(\bullet)''$ to
denote respectively the first and second derivatives of the function $(\bullet)$ 
{\ti with respect to (w.r.t.)} the frequency symbol $\omega$.

\begin{defn} 
A function $\phi:j\IR\mapsto\IC$ (with $\phi(0) > 0$) is said to be (low-frequency) phase-monotone increasing 
if 
\begin{align}
\theta_{\phi}(\omega_1) \leq \theta_{\phi}(\omega_2) ,  
\hs \forall \; 0 \leq \omega_1 < \omega_2 \leq \omega_\pi.  
\label{def:phase_mono_+}
\end{align} 
\end{defn}
\medskip
The class of phase-monotone increasing transfer functions is denoted by  ${\cal M}_{p}$. 
Note that for $\phi \in {\cal M}_{p}$, 
$\phi_R(\omega)\phi_I'(\omega)  \ge  \phi_I(\omega)\phi_R'(\omega)$
holds for all $\omega \in [0, \omega_\pi]$.

\smallskip

\begin{defn}
A function $\phi:j\IR\mapsto\IC$ (with $\phi(0) > 0$) is said to be (low-frequency) gain-monotone increasing if 
\begin{align} 
& r_{\phi}(\omega_1) \leq r_{\phi}(\omega_2), \ \ \forall \; 0 \leq \omega_1 < \omega_2 \leq \omega_\pi; \\
& r_{\phi}(\omega)   \geq r_{\phi}(\omega_{\pi}), \ \  \forall \; \omega > \omega_\pi.
\label{def:gain_mono_+}
\end{align}
\end{defn}
\medskip
The class of gain-monotone increasing transfer functions is denoted by  ${\cal M}_{g}$. 
Note that for $\phi \in {\cal M}_{g}$, 
$\phi_R(\omega)\phi_R'(\omega)  +  \phi_I(\omega)\phi_I'(\omega)  \ge 0$ 
holds for all $\omega \in [0, \omega_\pi]$.

Given $\phi \in \mathcal{M}_p \cap \mathcal{M}_g$, the curve $\phi(j\omega)$ over $\omega \in [0, \omega_\pi]$ is 
simple in the complex plane. Let $\mathcal{R}_\pi(\phi)$ denote the region of the complex plane bounded by 
$\phi(j\omega)$ for $\omega \in [-\omega_\pi, \omega_\pi]$. When $\omega_\pi$ is finite, this curve is simple and 
closed; hence, $\mathcal{R}_\pi(\phi)$ is a bounded, connected region containing the origin. When $\omega_\pi = 
\infty$, $\mathcal{R}_\pi(\phi)$ is a connected region lying to the left of $\phi(j\omega)$ for $\omega \in (-
\infty, +\infty)$, which again includes the origin. We now introduce a "convexity" property for $\phi$ based on 
$\mathcal{R}_{\pi}(\phi)$.
\begin{defn}
A function $\phi:j\IR\mapsto\IC$ (with $\phi(0) > 0$) is said to be convex if the corresponding 
$\mathcal{R}_{\pi}(\phi)$ region is convex. 
\end{defn}
\medskip
The class of ``convex'' transfer functions is denoted by  ${\cal M}_{c}$. Note that $\phi\in\mathcal{M}_c$
if and only if
\begin{align}
\begin{split}
&\phi_R'(\omega) \leq 0  \\
&\phi_R'(\omega)\phi_I''(\omega) 
- \phi_I'(\omega) \phi_R''(\omega)  \geq 0 
\end{split}
\label{eq:convex}
\end{align}
hold for all $\omega \in [-\omega_\pi, \omega_\pi]$.
To obtain the conditions in~\eqref{eq:convex}, one may consider $\phi_I$ to be an implicit
function of $\phi_R$ (when $\omega\in[0,\omega_{\pi}]$). This way, $\mathcal{R}_{\pi}(\phi)$ 
is convex if and only if $\phi_I(\phi_R)$ is a concave function, for which a necessary and sufficient condition is 
$\frac{d^2\phi_I}{d(\phi_R)^2} \le 0$. 
Existence of the implicit function in question requires $\phi_R'(\omega) \not = 0$ for $\omega\in(0,\omega_{\pi})$. 
Under this condition, the left-hand side of the above inequality is equal to
\begin{align*}
\frac{d}{d\omega}\left(\frac{\phi_I'}{\phi_R'} \right)\cdot\left(\phi_R'\right)^{-1}
=\frac{\phi_I''\phi_R'-\phi_R''\phi_I'}{(\phi_R')^3}{\ti.}
\end{align*}
Clearly, convexity of $\mathcal{R}_{\pi}(\phi)$ requires $\phi_R' < 0$ for $\omega\in (0,\omega_{\pi})$; thus
convexity of $\mathcal{R}_{\pi}(\phi)$ holds true if and only if conditions in~\eqref{eq:convex} are satisfied. 

\begin{remark}
One can readily verify the following equivalent inequalities
for gain-monotone, phase-monotone, and convex properties of $\phi$. Below we drop the $\omega$ dependency 
for simplicity; the inequalities are to hold for $\omega\in[0,\omega_{\pi}]$. 
\begin{align}
\label{ineq:pola_g}
&\mbox{gain-monotone increasing: \ \ } r_{\phi}' \ge 0,      \\
\label{ineq:pola_p}
&\mbox{phase-monotone increasing: \ \  } \theta_{\phi}'\ge 0,   \\
&\mbox{convexity:  } \notag \\
\label{ineq:pola_c1}
&\hspace{1cm} r_{\phi}'\cos\theta_{\phi} - r_{\phi}\sin\theta_{\phi}\theta_{\phi}' \le 0 \\
\label{ineq:pola_c2}
&\hspace{1cm} 
2(r_{\phi}')^2\theta_{\phi}' + r_{\phi}^2(\theta_{\phi}')^3 + r_{\phi}r_{\phi}'\theta_{\phi}''-r_{\phi}r_{\phi}''\theta_{\phi}' \ge 0
\end{align}
Note that the first two terms on the left-hand side of~\eqref{ineq:pola_c2} are positive. Hence, a sufficient condition
for inequality~\eqref{ineq:pola_c2} is $\theta_{\phi}''/\theta_{\phi}' \ge r_{\phi}''/r_{\phi}'$, 
assuming $\phi$ is gain- and phase-monotone increasing. 
\end{remark}
\subsection{Critical Unstable Eigenvalue}
\label{subsec:crit_eig}

Recall $\rhoL$ defined in~\eqref{varrho} and $g_k = \lambda_k/(\phi-\lambda_k)$ as described in~\eqref{def:gk},
where $\phi:=1/h$ and $\lambda_k$ is an eigenvalue of $A$. The following result characterizes a class of $\phi$ 
for which $\rhoL$ is obtained by $\lambda_*$ defined in~\eqref{def:lambda_star}.

\begin{prop}
\label{prop:cril_eig}
Consider the cyclic network $\Sigma(\Delta,H,A)$ with $n$ agents ($n\ge 2$), where the 
nominal network $\mathcal{F}_{\ell}(h\cdot I_n, A)$ is unstable and the transfer function $h$, which represents the nominal
agent dynamics, is strictly proper and stable, and satisfies $h(0)>1/\mu >0$. Suppose each unstable 
$g_k$ has the peak-gain frequency in $[0,\omega_{\pi})$, and 
$\phi:=1/h$ satisfies the following conditions:
\begin{itemize}
\item[] (\ref{prop:cril_eig}.1) $\phi(j\omega)$ is gain-monotone increasing; 
\item[] (\ref{prop:cril_eig}.2) $\mathcal{R}_{\pi}(\phi)$ is convex. 
\end{itemize}
Then we have 
\begin{align}
\rhoL := \max_{k\in\IU} \frac{1}{\|g_k\|_{\infty}} = \frac{1}{\|g_*\|_{\infty}}, \ \
g_*:=\frac{\lambda_*}{\phi-\lambda_*}.
\label{def:varrho_plus}
\end{align}
In other words, $\lambda_*$ is the farthest unstable eigenvalue from $\phi(j\omega)$. 
\end{prop}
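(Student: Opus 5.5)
The plan is to turn the statement into a planar geometry problem about distances from points on the circle $|\lambda|=\mu$ to the convex set $\mathcal{R}_\pi(\phi)$.

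\textbf{Reformulation.} Since $|g_k(j\omega)|=\mu/|\phi(j\omega)-\lambda_k|$, we have $1/\|g_k\|_\infty=\mu^{-1}d(\lambda_k)$, where $d(\lambda):=\inf_{\omega}|\phi(j\omega)-\lambda|$. Because $\phi$ is real-rational, the eigenvalues come in conjugate pairs and $d(\bar\lambda)=d(\lambda)$. It therefore suffices to consider eigenvalues $\lambda=\mu e^{j\psi}$ with $\psi\in[0,\pi]$.

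By the hypothesis that each unstable $g_k$ peaks at a frequency in $[0,\omega_\pi)$, the infimum defining $d(\lambda_k)$ for $k\in\IU$ is attained on the upper arc $\{\phi(j\omega):\omega\in[0,\omega_\pi)\}$. This arc lies on the boundary of $\mathcal{R}_\pi(\phi)$. Unstable eigenvalues are exactly those outside $\mathcal{R}_\pi(\phi)$ (to the right of the Nyquist plot of $\phi$). Since $\mathcal{R}_\pi(\phi)$ is convex by (\ref{prop:cril_eig}.2), for such $\lambda$ the quantity $d(\lambda)$ equals the Euclidean distance from $\lambda$ to the convex set $K:=\mathcal{R}_\pi(\phi)$, and this distance is attained at a unique boundary point $p(\lambda)$ with outward normal $\lambda-p(\lambda)$.

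\textbf{Step 1: the unstable eigenvalues form an arc around angle $0$.} $K$ is convex and contains the origin, so it is star-shaped about $0$. Its boundary arc can therefore be parametrized by angle, with radius $\rho(\vartheta)=r_\phi(\omega)$ at $\theta_\phi(\omega)=\vartheta$, and the phase is monotone along it. Gain monotonicity (\ref{prop:cril_eig}.1) makes $\rho$ nondecreasing in $\vartheta$. Moreover, $h(0)>1/\mu$ gives $\rho(0)=\phi(0)<\mu$. Hence the set of $\psi\in[0,\pi]$ with $\mu e^{j\psi}\notin K$ is an interval $[0,\psi_0)$, which contains angle $0$. Since the nominal network is unstable, at least one eigenvalue lies in this arc. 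The eigenvalue (up to conjugation) with the smallest angle in $[0,\pi]$ is $\lambda_*$ from~\eqref{def:lambda_star}, so $\lambda_*$ lies in this arc and $*\in\IU$.

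\textbf{Step 2 (main step): $\psi\mapsto d(\mu e^{j\psi})$ is nonincreasing on $[0,\psi_0)$.} Given this, $\lambda_*$ maximizes $d$ over the unstable eigenvalues, which yields~\eqref{def:varrho_plus}. Outside $K$ the distance function is differentiable with gradient $(\lambda-p)/|\lambda-p|$. Differentiating along the circle, where $d\lambda/d\psi=j\lambda$, gives
$$\frac{d}{d\psi}d(\mu e^{j\psi})=\frac{\Re\big((\overline{\lambda-p})\,j\lambda\big)}{|\lambda-p|}=-\frac{\Im(\bar\lambda p)}{|\lambda-p|}.$$
So it suffices to show $\arg p(\lambda)\ge\arg\lambda$.

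\textbf{Step 3: proving $\arg p\ge\arg\lambda$.} The key observation is that at a boundary point $p$ with $\rho'\ge0$, the outward normal $n$ is rotated clockwise from the radial direction, i.e.\ $\arg n\le\arg p$. In polar form this is the inequality $r_\phi'\ge 0$ from~\eqref{ineq:pola_g}, combined with convexity~\eqref{ineq:pola_c1}, which keeps $n$ within a half-plane so that angles compare sensibly. Now suppose $\arg p<\psi$. Writing $\lambda=p+tn$ with $t>0$, we get $\arg\lambda\le\max(\arg p,\arg n)=\arg p<\psi$, a contradiction.

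The main obstacle is making this angle comparison rigorous. One has to keep all angles in a common range such as $[0,\pi)$, handle the endpoints $\omega=0$ and $\omega\to\omega_\pi$, and treat corners where the normal is not unique by replacing $n$ with any normal in the normal cone. The plan is to work with supporting lines of $K$ and the polar inequalities~\eqref{ineq:pola_g}–\eqref{ineq:pola_c1} to carry this out.
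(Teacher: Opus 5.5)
Your proposal is correct and follows essentially the same route as the paper. Both reduce the claim to showing that the distance from $\mu e^{j\psi}$ to the convex set $\mathcal{R}_\pi(\phi)$ is nonincreasing in $\psi$, differentiate along the circle, and use orthogonality at the projection point so that gain monotonicity fixes the sign. The angle comparison you flag as the main obstacle is in fact immediate: with $\lambda=p+tn$ and outward normal $n=-j\phi'/|\phi'|$, one has $\mathrm{Im}(\bar\lambda p)=t\,\mathrm{Im}(\bar n p)=t\,r_\phi r_\phi'/|\phi'|\ge 0$, which is exactly the paper's identity $DD'=-\beta(\phi_R\phi_R'+\phi_I\phi_I')$. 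Your use of the gradient of the distance to a convex set, and your argument that the unstable eigenvalues form an arc containing $\lambda_*$, are somewhat more careful than the paper, which differentiates through $\omega^*(\theta)$ and takes the arc structure for granted.
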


\begin{proof} 
Note that $0<\phi(0)<\mu$, and without loss of generality, let $\mu=1$. Furthermore, 
note that for $g_k$ to be unstable, $\lambda_k$ must be outside $\mathcal{R}_{\pi}(\phi)$. 
Since the distribution of $\lambda_k$ and $\mathcal{R}_{\pi}(\phi)$ are symmetric w.r.t. the real axis, we {\ti may} %
only consider those in the upper half plane; i.e., 
$\ti\mathcal{R}_{\pi}^u(\phi):=\{z\in\mathcal{R}_\pi(\phi)~|~\Im(z)\geq0~\}$ 
and those $\lambda_k$ on the portion of upper half unit circle which is outside 
$\mathcal{R}_{\pi}^u(\phi)$. Denote this portion of upper half unit circle by 
$\mathcal{C}_{\phi}:=\{\mathrm{e}^{j\theta}~|~\theta\in[0,\theta_c]\}$, where 
\begin{align*} 
\theta_c:=\begin{cases}
\angle\phi(j\omega_c) & \mbox{if $\exists \ \omega_c\in[0,\omega_{\pi}]$ such that $|\phi(j\omega_c)|=1$;} \\
\pi & \mbox{otherwise}{\ti.}
\end{cases}
\end{align*}
In the second case, $\mathcal{C}_{\phi}$ is the entire upper half unit circle. 
 
Moreover, since 
\begin{align*}
&\|g_k\|_{\infty}=\sup_{\omega\in[0,\omega_{\pi})}1/|\phi(j\omega)-\lambda_k|\\ 
\Leftrightarrow  \ &1/\|g_k\|_{\infty}=\inf_{\omega\in[0,\omega_{\pi})}|\phi(j\omega)-\lambda_k|.
\end{align*}
it means $1/\|g_k\|_{\infty}$ is equal to the distance from $\lambda_k$ to the {\ti set} $\mathcal{R}_{\pi}^u(\phi)$. 
Thus, to prove that $\rhoL$ is obtained by $\ti g_*$, we will show that 
\begin{align}
\label{def:D_theta}
D(\theta):= \inf_{\omega\in[0,\omega_{\pi})}|\phi(j\omega)-\mathrm{e}^{j\theta}|, \ \theta\in[0,\theta_c]
\end{align}
is {\ti a} \emph{decreasing} function; thus, $\lambda_*$, which has the smallest angle, maximizes the distance.

Given a point $\mathrm{e}^{j\theta}\in\mathcal{C}_{\phi}$, since $\mathcal{R}_{\pi}(\phi)$ is convex, one can find a unique point
on the boundary of $\mathcal{R}_{\pi}^u(\phi)$ that gives rise to $D(\theta)$. Let the point be $\phi(j\omega^*):=\phi_R(\omega^*)+j\phi_I(\omega^*)$; 
note that $\omega^*$ is a function of $\theta$, and here we drop the $\theta$-dependency for notational simplicity. Moreover, since
$\phi(j\omega^*)$ is the projection of $\mathrm{e}^{j\theta}$ on the boundary of $\mathcal{R}_{\pi}(\phi)$, we must have {\ti the orthogonality}
\begin{align}
\label{eq:proj_cond}
\hspace{-0.3cm}
(\cos\theta - \phi_R(\omega^*))\phi_R'(\omega^*)+(\sin\theta - \phi_I(\omega^*))\phi_I'(\omega^*)=0{\ti.}
\end{align}
In light of~\eqref{eq:proj_cond}, we have for some real number $\beta$, 
\begin{align}
\begin{split}
&\cos\theta - \phi_R(\omega^*) = \beta \phi_I'(\omega^*), \\ 
&\sin\theta - \phi_I(\omega^*) =-\beta \phi_R'(\omega^*){\ti.} 
\end{split}
\label{eq:perp_rela2}
\end{align}
Note that $D(\theta)^2 = \left(\cos\theta - \phi_R(\omega^*) \right)^2 + \left(\sin\theta - \phi_I(\omega^*)\right)^2$. 
Differentiating $D$ w.r.t. $\theta$, we obtain
\begin{align}
&D(\theta)D'(\theta) =
\left(\cos\theta - \phi_R(\omega^*)\right)\left(-\sin\theta - \phi_R'(\omega^*)\frac{d\omega^*}{d\theta}\right) 
\notag \\
&\hspace{2cm} + \left(\sin\theta - \phi_I(\omega^*)\right)\left(\cos\theta - \phi_I'(\omega^*)\frac{d\omega^*}{d\theta}\right) 
\notag \\
& = \left(\cos\theta- \phi_R(\omega^*)\right)(-\sin\theta)+\left(\sin(\theta)-\phi_I(\omega^*)\right)\cos\theta \notag \\
& = -\beta \left(\phi_I'(\omega^*)\sin\theta + \phi_R'(\omega^*)\cos\theta \right){\ti.} \label{eq:ddistance}
\end{align}
Here we apply~\eqref{eq:proj_cond} for the second equality, and~\eqref{eq:perp_rela2} for the third. 

Now since $\mathcal{R}_{\pi}(\phi)$ is convex, {\red we must have $\phi_R'(\omega^*)<0$ and $\sin\theta - \phi_I(\omega^*)>0$. 
} 
These imply that $\beta >0$ by the second equation in~\eqref{eq:perp_rela2}. 
Applying~\eqref{eq:perp_rela2} again to~\eqref{eq:ddistance}, we obtain
\begin{align}
&\hspace{0.2cm}D(\theta)D'(\theta) = -\beta(\phi_R'(\omega^*)\cos\theta+\phi_I'(\omega^*)\sin\theta) 
\notag \\
&= -\beta\left(\phi_R(\omega^*)\phi_R'(\omega^*)+\phi_I(\omega^*)\phi_I'(\omega^*)\right) <0{\ti.}
\label{ineq:ddistance}
\end{align}
The last inequality follows from the assumption that $\phi$ is gain-monotone increasing. Thus we have shown  
$D'(\theta)<0$, which concludes the proof. 
\end{proof}

\subsection{PCR Condition}
\label{subsec:PCR}
In this subsection, we characterize the properties of $\phi$ under which
the subsystem $g_*$ described in~\eqref{def:varrho_plus} satisfies 
the required PCR condition in Theorem~\ref{thm:MS} at its peak-gain frequency. 

\begin{prop} 
\label{prop:gen_pcr}
Let $\phi$ be a rational transfer function and $\lambda\in\mathbb{C}$ such that 
$g:=\lambda/(\phi-\lambda)\in\mathcal{G}_1^{\omega_p}$.  Suppose
$\phi$ is gain- and phase-{\ti monotone increasing}. Then the following properties hold: 
\begin{itemize}
\item $\theta_g'(\omega_p) > 0$; 
\item if $\omega_p\not=0$, $\theta_g'(\omega_p) > |\sin\theta_g(\omega_p)/\omega_p|$ holds provided
$\phi$ satisfies
\begin{align}
\label{ineq:gen_pcr}
\left.|\phi|(|\phi|)'\right|_{\omega = \omega_p} \le \left.\omega_p|\phi'|^2\right|_{\omega=\omega_p}, 
\end{align}
where $(|\phi|)':=\frac{d|\phi(j\omega)|}{d\omega}=(\phi_R\phi_R'+\phi_I\phi_I')/|\phi|$. 
\end{itemize}
Consequently, $g$ can be marginally stabilized by $\delta$ of the form
$\delta(s)\equiv 1/\|g\|_{\infty}$ (when $\omega_p=0$), or 
\begin{align}
\label{def:delta_star}
\delta(s)=\frac{1}{\|g\|_{\infty}}\left(\frac{s-a}{s+a}\right) \mbox{ for some $a>0$,}
\end{align}
(when $\omega_p\not=0$), and hence %
{\ti $\rho_*(g)=1/\|g\|_\infty$.}
\end{prop}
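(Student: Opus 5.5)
The plan is to compute the phase of $g=\lambda/(\phi-\lambda)$ at the peak-gain frequency using the stationarity of $|\phi(j\omega)-\lambda|$ there. Both PCR inequalities then reduce to elementary vector identities, and Theorem~\ref{thm:MS}(I) with $n=1$ gives the marginally stabilizing $\delta$. Write $\phi=\phi(j\omega_p)$, $\phi'=\frac{d}{d\omega}\phi(j\omega)|_{\omega_p}$ and $d:=|\phi-\lambda|=|\lambda|/\|g\|_\infty$. Since $g\in\G_1^{\omega_p}$, the function $\omega\mapsto|\phi(j\omega)-\lambda|^2$ attains its minimum at $\omega_p$. Hence $\Re\bigl((\phi-\lambda)\overline{\phi'}\bigr)=0$, so
\begin{equation*}
\phi-\lambda = j\,t\,\phi' \quad \text{for some real } t \text{ with } |t|=d/|\phi'|.
\end{equation*}
Because $\theta_g=\angle\lambda-\angle(\phi-\lambda)$, we get $\theta_g'(\omega_p)=-\Im\bigl(\phi'/(\phi-\lambda)\bigr)=-\Im\bigl(1/(jt)\bigr)=1/t$. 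So $|\theta_g'(\omega_p)|=|\phi'|/d$, and the first claim amounts to showing $t>0$.

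The sign of $t$ is the step I expect to be the main obstacle, since it is where instability of $g$ and the monotonicity assumptions enter. The condition $t>0$ says that $\phi'$ is obtained from $\phi-\lambda$ by a $-90^\circ$ rotation. Geometrically, $\lambda$ must lie to the right of the direction of travel of $\phi(j\omega)$ at $\omega_p$. I would argue as follows. Gain- and phase-monotonicity make $\phi(j\omega)$, $\omega\in[0,\omega_\pi]$, a simple curve moving counterclockwise around the origin, with $\mathcal{R}_\pi(\phi)$ on its left. As discussed in the illustrative example, $g$ is unstable exactly when $\lambda$ lies outside this region, i.e.\ on the right of the curve. The nearest boundary point $\phi(j\omega_p)$ therefore sees $\lambda$ on its right, which gives $t>0$ and $\theta_g'(\omega_p)>0$. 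A fallback is to check the same fact via the Nyquist count $\nu$ in Lemma~\ref{lemma:MS} applied to the loop $g$ with $n=1$.

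For $\omega_p\neq 0$, I compute $\sin\theta_g$ from $\lambda=\phi-jt\phi'$:
\begin{equation*}
\Im(\lambda\overline{\phi})=-t\,\Re(\phi'\overline{\phi})=-t\,|\phi|\,(|\phi|)'.
\end{equation*}
Since $\sin\theta_g=\Im\bigl(\lambda\overline{(\phi-\lambda)}\bigr)/(|\lambda|d)=\Im(\lambda\overline{\phi})/(|\lambda|d)$, this gives $|\sin\theta_g(\omega_p)|=|\phi|(|\phi|)'/(|\lambda||\phi'|)$, using $(|\phi|)'\ge 0$ from gain-monotonicity. The required inequality $|\phi'|/d>|\sin\theta_g|/\omega_p$ is then equivalent to
\begin{equation*}
d\,|\phi|\,(|\phi|)' < |\lambda|\,\omega_p\,|\phi'|^2 .
\end{equation*}
This follows from \eqref{ineq:gen_pcr} combined with $d<|\lambda|$ (equivalently $\|g\|_\infty>1$). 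To prove $d<|\lambda|$, note that the origin lies in $\mathcal{R}_\pi(\phi)$ while $\lambda$ lies outside. The segment $[0,\lambda]$ therefore meets the boundary curve, and the peak-gain point is at distance at most the distance to that crossing, which is strictly less than $|\lambda|$ because the crossing is not at the origin. The degenerate case $(|\phi|)'=0$ makes the right-hand side of the sine bound zero, so it is covered by $\theta_g'>0$ alone.

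Finally, invoke Theorem~\ref{thm:MS}(I) with $n=1$. For $\omega_p=0$, I use the constant $\delta\equiv 1/\|g\|_\infty$; here I would also check that $g(0)$ is real, i.e.\ that the peak at $\omega=0$ forces $\lambda$ real given $\phi(0)>0$. For $\omega_p\neq0$, I use the first-order all-pass \eqref{def:delta_star}. The Remark following Theorem~\ref{thm:MS} rules out unstable pole-zero cancellation, since the unstable pole of $g$ is non-real in the complex case. This yields marginal stabilization with $\|\delta\|_\infty=1/\|g\|_\infty$. Combined with the small-gain lower bound, this gives $\rho_*(g)=1/\|g\|_\infty$.
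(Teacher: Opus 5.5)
Your proof is correct and follows the paper's skeleton. Your $t$ is the paper's $1/\alpha$ in $\lambda=\phi-\frac{j}{\alpha}\phi'$, and your $\theta_g'(\omega_p)=1/t$ is the paper's $\theta_g'=\alpha$. As in the paper, the sign comes from $\lambda$ lying outside $\mathcal{R}_\pi(\phi)$, and the conclusion is read off from Theorem~\ref{thm:MS}.

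The one genuinely different step is the sine bound. The paper expands $g(j\omega_p)=(g_R+jg_I)/|\phi'|^2$ and rewrites the target as $|\phi|^2((|\phi|)')^2\le\omega_p^2(g_R^2+g_I^2)$. It then uses phase monotonicity, $\phi_R\phi_I'-\phi_I\phi_R'\ge 0$, to drop the $\alpha$-terms. Since $g_R^2+g_I^2=|\phi'|^4\|g\|_\infty^2$, that step is exactly the inequality $\|g\|_\infty\ge 1$; it even gives $\Re\, g(j\omega_p)\le -1$, using only local data at $\omega_p$.

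You isolate the same fact as $d<|\lambda|$ and prove it globally: the segment $[0,\lambda]$ must cross the Nyquist curve of $\phi$ at a nonzero point. The paper's route needs no topological argument and uses phase monotonicity only at the single frequency $\omega_p$. Yours makes it transparent that \eqref{ineq:gen_pcr} together with $\|g\|_\infty>1$ is all that is needed. It also delivers the strict inequality of the statement directly, whereas the paper's chain is written with $\ge$.

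The check you flag for $\omega_p=0$ goes through easily in your parametrization. For real $\phi$, $\phi'(0)$ is purely imaginary, so $\lambda=\phi(0)-jt\phi'(0)$ is real and hence $g(0)\in\IR$. This hypothesis of Theorem~\ref{thm:MS}(I)(i) is left implicit in the paper's proof.
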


\begin{proof}
Let $\lambda = \lambda_R+j\lambda_I${\ti.} In the {\ti following}, we drop the $\omega$-dependency for notational simplicity. 

Note that at the peak-gain frequency $\omega_p$, $g$ satisfies $\frac{d}{d\omega}\log|g(j\omega)|=0$, and therefore we have
\begin{align*}
\mathrm{Re}\left(\frac{g'}{g}\right)=\mathrm{Re}\left(\frac{-\phi'}{\phi-\lambda}\right) = 0 \ \Leftrightarrow \
\mathrm{Re}\left(\phi'(\overline{\phi-\lambda})\right) = 0, 
\end{align*}
which leads to 
\begin{align}
\phi_R'(\phi_R-\lambda_R)+\phi_I'(\phi_I-\lambda_I) = 0, 
\label{eq:pg_cond}
\end{align} 
at the peak-gain frequency $\omega_p$. By~\eqref{eq:pg_cond}, we have
\begin{align}
\begin{bmatrix}\phi_R-\lambda_R \\ \phi_I-\lambda_I  \end{bmatrix} = \frac{1}{\alpha}\begin{bmatrix} -\phi_I' \\ \phi_R' \end{bmatrix}
\ \Leftrightarrow \ \lambda = \phi - \frac{j}{\alpha} \phi',
\label{expr1_lambda}
\end{align}
where $\alpha$ is a positive real number. Note that positivity of $\alpha$ is a result of $\lambda$ being outside $\mathcal{R}_{\pi}(\phi)$,
which is necessary for $g$ to be unstable. 

Moreover, apply{\ti ing}~\eqref{expr1_lambda}{\ti} yields at $\omega_p$, 
\begin{align}
g &= \frac{\lambda}{\phi-\lambda}
= \frac{(\alpha\phi_R+\phi_I')+j(\alpha\phi_I-\phi_R')}{-\phi_I'+j\phi_R'} =\frac{g_R + j g_I}{|\phi'|^2} 
\end{align} 
where %
\begin{align}
g_R &= -\phi_I'(\alpha\phi_R+\phi_I')+\phi_R'(\alpha\phi_I-\phi_R')\notag \\ 
 &=-|\phi'|^2-\alpha\left(\phi_I'\phi_R-\phi_R'\phi_I\right) \label{eq:gR}\\ 
g_I &= -\alpha(\phi_R'\phi_R+\phi_I\phi_I') = -\alpha|\phi|(|\phi|)'{\ti.}  \label{eq:gI}
\end{align}
Applying~\eqref{expr1_lambda} to~\eqref{eq:gI}, we find, at $\omega_p$, 
\begin{align}
&\theta_g' = \mathrm{Im}\left(g'/g\right)=\mathrm{Im}\left(-\phi'/(\phi-\lambda)\right) = \alpha, \label{eq:pcr_g} \\
&\sin\theta_g = g_I/(g_R^2+g_I^2)^{\frac{1}{2}} = -\alpha |\phi|(|\phi|)' /(g_R^2+g_I^2)^{\frac{1}{2}}{\ti.} \label{eq:sin_t_g}
\end{align}
Thus, by~\eqref{eq:pcr_g} we have $\theta_g'(\omega_p)>0$ regardless of the value of $\omega_p$. 
Furthermore, (for $\omega_p\not = 0$) by~\eqref{eq:sin_t_g}, 
$\theta_g' \ge |\sin\theta_g/\omega_p|$ is equivalent to $1\ge \left|\frac{|\phi|(|\phi|)' }{\omega_p\sqrt{g_R^2+g_I^2}}\right|$, 
which holds iff %
\begin{align}
&|\phi|^2((|\phi|)')^2 \le \omega_p^2(g_R^2+g_I^2) \label{eq:eqiv_cond1} \\
&=\omega_p^2\left(\left(|\phi'|^2+ \alpha(\phi_I'\phi_R-\phi_R'\phi_I) \right)^2
+\alpha^2|\phi|^2((|\phi|)')^2\right) \notag
\end{align}
Now, that $\phi$ is phase-monotone increasing implies $\phi_I'\phi_R-\phi_R'\phi_I \ge 0$, and one can readily verify that, w.r.t. $\alpha$, 
the right-hand side of~\eqref{eq:eqiv_cond1} is minimized when $\alpha=0$. Setting $\alpha =0$, \eqref{eq:eqiv_cond1} reduces to 
\begin{align}
|\phi|^2((|\phi|)')^2 \le \omega_p^2 |\phi'|^4
\label{eq:eqiv_cond2}
\end{align}
which is equivalent to~\eqref{ineq:gen_pcr} should $\phi$ be also gain-monotone increasing. 

Finally, that $g$ can be marginally stabilized by $\delta$ in the form of~\eqref{def:delta_star} and hence %
{\ti $\rho_*(g)=1/\|g\|_\infty$} follows Theorem~\ref{thm:MS} (and the readers are referred to~\cite[Theorem 1]{hara2023exact} for more details).
\end{proof}

\begin{remark}
Since $\omega_p$ depends on $\lambda$, {\ti inequality}~\eqref{ineq:gen_pcr} is $\lambda$-dependent. A more conservative condition independent 
of $\lambda$ is to require that~\eqref{ineq:gen_pcr} holds for all $\omega \in [0, \infty)$, or for $\omega$ within a specific frequency range 
(e.g., $[0, \omega_{\pi}]$) containing $\omega_p$. As a result, the condition solely involves $\phi$.
\end{remark}

\begin{remark}
Let $\phi = r_{\phi}\mathrm{e}^{j\theta_{\phi}}$. One can verify that inequality~\eqref{ineq:gen_pcr} can also be expressed 
in polar coordinates as
\begin{align}
\left. r_{\phi} r_{\phi}' \right|_{\omega=\omega_p} \le \omega_p \left.\left((r_{\phi}')^2 + r_{\phi}^2(\theta_{\phi}')^2 \right)\right|_{\omega=\omega_p}
\label{ineq:eqiv_cond3}
\end{align}
This expression proves useful in the subsequent development.
\end{remark}

\subsection{Simultaneous Stabilization and Exact RIR}
\label{subsec:ClassesExactRIR}

Recall $g_k$, $k=1,\cdots, n$ defined in~\eqref{def:gk} and $\rhoL$ in~\eqref{varrho}. Suppose $\rhoL$ is equal to $1/\|g_*\|_{\infty}$, and  
$g_*\in\mathcal{G}_1^{\omega_p}$ satisfies the PCR condition in 
{\red case (I)-(ii) of Theorem~\ref{thm:MS}}. As such, $g_*$ has the exact {\ti RIR} equal to $\rhoL$; i.e., 
there {\ti is a} first-order all-pass function $\delta$ with $\|\delta\|_{\infty} = 1/\|g_*\|_{\infty} = \rhoL$ such that 
$[\![g_*, \delta]\!]$ is marginally stable. In the following, we propose a sufficient condition on $\phi$ such that 
$\delta$ would also (marginally) {\ti stabilize} all other unstable $g_k$'s {\ti and maintain stability of the remaining $g_k$'s.}

\begin{prop}
\label{prop:simul_stab}
Consider $g_*$ defined in %
{\ti\eqref{def:varrho_plus}} with 
a peak-gain frequency $\omega_p$, and let $\delta_*$ of the form~\eqref{def:delta_star}
be such that $[\![g_*, \delta_*]\!]$ is marginally stable. Define $F(\Omega):=|\phi(j\omega)|^2$, where $\Omega:=\omega^2$. 
If $\phi$ is gain-monotone increasing and $F$ satisfies 
\begin{align}
\label{ineq:cond_exactRIR}
F''(\Omega)/F'(\Omega) \geq  -2/(\Omega + a^2) 
\;\; \forall \Omega \geq 0,
\end{align}
where $F':=\frac{d F}{d\Omega} $ and $F'':=\frac{d^2 F}{d\Omega^2}$,
then $[\![g_k, \delta_*]\!]$ is (marginally) stable for $k\in\mathbb{U}$. 
\end{prop}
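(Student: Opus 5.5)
The plan is to reduce stability of each loop $[\![g_k,\delta_*]\!]$ to a geometric statement about the Nyquist plot of the perturbed agent inverse $\phi^{\dagger}:=\phi/(1+\delta_*)$. This is the same interpretation used in the motivating example. Since $1-g_k\delta_*=0$ is equivalent to $\phi^{\dagger}(s)=\lambda_k$, the loop with $g_k$ is (marginally) stable iff $\lambda_k$ lies to the left of, or on, the curve $\phi^{\dagger}(j\omega)$, i.e.\ in the closure of $\mathcal{R}_\pi(\phi^{\dagger})$. For the argument-principle count I would use Lemma~\ref{lemma:MS}, exactly as in Proposition~\ref{prop:OmegaMS2}.

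By hypothesis $[\![g_*,\delta_*]\!]$ is marginally stable. Hence $\phi^{\dagger}(j\omega_p)=\lambda_*$ with $|\lambda_*|=\mu$, and the curve $\phi^{\dagger}$ touches the eigenvalue circle at angle $\angle\lambda_*$. By the symmetry of the eigenvalues and of $\mathcal{R}_\pi$ about the real axis, as in the proof of Proposition~\ref{prop:cril_eig}, it suffices to show two things in the upper half plane. First, $\phi^{\dagger}(j\omega)$ meets the circle $|z|=\mu$ only once, at $\lambda_*$. Second, every point $\mu e^{j\theta}$ with $\theta\in[\angle\lambda_*,\pi]$ lies in the closure of $\mathcal{R}_\pi(\phi^{\dagger})$. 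The unstable $\lambda_k$ all have angles in $[\angle\lambda_*,\pi]$, since $\lambda_*$ has the smallest nonnegative angle. The second fact therefore makes every $[\![g_k,\delta_*]\!]$ with $k\in\mathbb{U}$ (marginally) stable.

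Both facts would follow from gain-monotonicity of $\phi^{\dagger}$, together with the behaviour at $\omega=0$. There $\phi^{\dagger}(0)=\phi(0)/(1-\rho)$ with $\rho=\rhoL$, and this value stays inside the circle. Indeed, at the marginal point the curve passes from inside to outside, so a single crossing gives the required inclusion. To obtain gain-monotonicity I would compute, with $\Omega=\omega^2$,
\begin{align*}
|1+\delta_*(j\omega)|^2=\frac{(1+\rho)^2\Omega+(1-\rho)^2a^2}{\Omega+a^2},\qquad
|\phi^{\dagger}|^2=F(\Omega)\,\frac{\Omega+a^2}{(1+\rho)^2\Omega+(1-\rho)^2a^2}=:F(\Omega)\,m(\Omega).
\end{align*}
Here $m$ is decreasing, from $1/(1-\rho)^2$ to $1/(1+\rho)^2$. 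I would then rewrite condition~\eqref{ineq:cond_exactRIR} as $\frac{d}{d\Omega}\big[(\Omega+a^2)^2F'(\Omega)\big]\ge0$. So $(\Omega+a^2)^2F'$ is nondecreasing, and it is nonnegative by gain-monotonicity of $\phi$. Note also that $m'(\Omega)$ has the form $-c/\big((1+\rho)^2\Omega+(1-\rho)^2a^2\big)^2$ with $c=4\rho a^2>0$. The sign of $\frac{d}{d\Omega}(Fm)=F'm+Fm'$ then reduces to comparing $F'(\Omega)(\Omega+a^2)^2$, which is nondecreasing, against an expression built from $F$ and the denominator of $m$.

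From this comparison I would conclude that $|\phi^{\dagger}|^2-\mu^2$ changes sign at most once on $\Omega\ge0$: negative before $\omega_p^2$ and nonnegative after. This gives the single crossing. If a clean monotonicity of $|\phi^{\dagger}|$ fails, the fallback is to show only that $|\phi^{\dagger}|^2-\mu^2$ has a single (tangential-to-sign-change) root on $\Omega\ge0$. That is what is actually needed, and the endpoint value $\omega_p^2$ is pinned by the marginal condition.

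The main obstacle is this sign-change argument. The condition on $F''/F'$ is a second-order condition, whereas I need a first-order conclusion about $Fm$. The comparison therefore requires an integration step: integrate $\big((\Omega+a^2)^2F'\big)'\ge0$ from $\Omega_1$ to $\Omega$ and combine with $F(\Omega_1)$ to bound $F$ from below after the first crossing. The identity $\mu^2=F(\omega_p^2)m(\omega_p^2)$ will fix the constants. A second, lesser point is to confirm that the unique crossing actually yields the inclusion for all $\theta\ge\angle\lambda_*$. I would handle this by a winding/crossing count of $\phi^{\dagger}$ relative to the ray through each $\lambda_k$, using Lemma~\ref{lemma:MS}(ii). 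Any $g_k$ that was already stable has $\lambda_k$ within the region even earlier, and is covered by the same argument.
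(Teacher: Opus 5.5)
Your framework matches the paper's: compare $|\phi^{\dagger}(j\omega)|$ with $\mu$, use the closed form of $|1+\delta_*(j\omega)|^2$, and read the hypothesis as $\frac{d}{d\Omega}\big[(\Omega+a^2)^2F'\big]\ge 0$. All of that is correct. The gap is the step you flag as the main obstacle, which is essentially the whole proof, and your sketch does not close it.

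You split the sign of $\frac{d}{d\Omega}(Fm)$ as $P:=(\Omega+a^2)^2F'$ against $4\rho a^2 Fm$. That compares a nondecreasing function with another increasing one; the second is $4\rho a^2|\phi^{\dagger}|^2$, the very quantity in question. So no single-crossing conclusion follows. Integrating $P'\ge 0$ forward from $\Omega_p$ does not help either: it needs a lower bound on $F'(\Omega_p)$, i.e.\ $(Fm)'(\Omega_p)\ge 0$, which is what you want to prove.

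The missing idea is to differentiate the whole numerator. With $z:=a(1-\rho)/(1+\rho)$, $\frac{d}{d\Omega}(Fm)$ has the sign of $H:=(\Omega+a^2)(\Omega+z^2)F'-(a^2-z^2)F$. The undifferentiated $F$ drops out of
\[
H'=(\Omega+z^2)\big[(\Omega+a^2)F''+2F'\big]=\frac{\Omega+z^2}{\Omega+a^2}\,P',
\]
so the hypothesis says exactly that $H$ is nondecreasing. Hence $(Fm)'$ changes sign at most once, from negative to nonnegative, and $H(\Omega^*)\ge 0$ at one point propagates to all $\Omega\ge\Omega^*$.

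That propagation still needs an anchor $\Omega^*<\Omega_p$ with $H(\Omega^*)\ge 0$. The paper gets it from $|\phi^{\dagger}(0)|<\mu=|\phi^{\dagger}(j\omega_p)|$ by the mean value theorem.

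You assert that $\phi^{\dagger}(0)=\phi(0)/(1-\rho)$ lies inside the circle, but your supporting sentence (the curve passes from inside to outside at $\omega_p$) is a consequence of the claim, not a reason for it. Marginal stability of $[\![g_*,\delta_*]\!]$ alone does not fix the radial direction of $\phi^{\dagger}$ at $\omega_p$. Without the anchor, a nondecreasing $H$ is equally compatible with $H<0$ on $[0,\Omega_p]$, i.e.\ $|\phi^{\dagger}|$ decreasing through $\mu$ at $\omega_p$. Since $|\phi^{\dagger}|\to\infty$, the curve would then meet the circle again beyond $\omega_p$, and your single crossing fails.

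The paper supplies the anchor by contradiction. Normalize $\mu=1$. If $\phi(0)\ge 1-\rho$, then
\[
D(0)\le 1-\phi(0)\le\rho=|\phi(j\omega_p)-\lambda_*|=D(\angle\lambda_*).
\]
This contradicts the strict decrease of $D(\theta)$ on $[0,\angle\lambda_*]$ established in Proposition~\ref{prop:cril_eig}. That result relies on convexity of $\mathcal{R}_\pi(\phi)$, so you must import that hypothesis as well.
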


\begin{proof}
Note that stability (resp. marginal stability) of $[\![g_k,\delta_*]\!]$ is implied by 
$\lambda_k$ belonging to the interior (resp. boundary) of $\mathcal{R}_{\pi}({\phi^{\dagger}})$, where 
${\phi^{\dagger}} := \phi/(1+\delta_*)$.
Also note that $\delta_*$ must satisfy $g_*(j\omega_p)\delta_*(j\omega_p)=1$, which is equivalent to 
$\lambda_*(1+\delta_*(j\omega_p))=\phi(j\omega_p)$, or ${\phi^{\dagger}}(j\omega_p)=\lambda_*$. As such, $|{\phi^{\dagger}}(j\omega_p)| = |\lambda_*|$
and stability of all other $[\![g_k,\delta_*]\!]$ is obtained if one can show that 
$$
|{\phi^{\dagger}}(j\omega)|\ge |\lambda_*| \quad \forall \omega\ge \omega_p,
$$
which is implied by $(|{\phi^{\dagger}}(j\omega)|)'\ge 0$ for all $\omega\ge\omega_p$. To this end, define
\begin{align*}
M(\Omega):=\frac{1}{|1+\delta_*(j\omega)|^2} = \gamma \frac{\Omega+a^2}{\Omega+z^2},
\end{align*}
where $\gamma = \|g_*\|_{\infty}/(\|g_*\|_{\infty}+1)$ 
and $z = a(\|g_*\|_{\infty}-1)/(\|g_*\|_{\infty}+1)$. Observe that 
$F(\Omega)M(\Omega)/\gamma = |{\phi^{\dagger}}(j\omega)|^2/\gamma$, and 
\begin{align}
\frac{d}{d\Omega}\frac{F(\Omega)M(\Omega)}{\gamma}
&=\frac{(\Omega + a^2)(\Omega + z^2)F'(\Omega) - (a^2-z^2)F(\Omega)}{(\Omega + z^2)^2}\notag\\
&{\ti =:}\frac{H(\Omega)}{(\Omega + z^2)^2}.
\end{align}
Thus $(|{\phi^{\dagger}}(j\omega)|)'\ge 0$ is equivalent to $H(\Omega)\ge 0$. We will now proceed to show that
\begin{align}
H(\Omega)\ge 0, \ \ \forall \Omega\ge\Omega_p:=\omega_p^2{\ti.}
\label{eq:H}
\end{align}
Note that~\eqref{eq:H} is implied by the following conditions
\begin{subequations}
\label{eq:HC}
\begin{align}
& H(\Omega^*)\ge 0, \ \  \mbox{for some $\Omega^*\in [0,\Omega_p)$}{\ti,} \label{ineq:HC1}\\ 
& \frac{d}{d\Omega}H(\Omega)\ge 0, \ \forall \Omega\ge 0{\ti.} \label{ineq:HC2}
\end{align}
\end{subequations}
Observe that
\begin{align*}
\frac{d}{d\Omega} H(\Omega) =(\Omega+a^2)(\Omega+z^2)F''(\Omega) + 2(\Omega + z^2) F'(\Omega).
\end{align*}  
Since $F'(\Omega)$ is positive due to the gain monotonicity of $\phi(s)$, it is clear that
inequality~\eqref{ineq:HC2} is equivalent to~\eqref{ineq:cond_exactRIR}. 

Finally we note that~\eqref{ineq:HC1} is implied by $|{\phi^{\dagger}}(0)| < |\lambda_*|$. 
Since $|{\phi^{\dagger}}(\omega_p)| = |\lambda_*|$, $|{\phi^{\dagger}}(0)| < |\lambda_*|$ means 
that $(|{\phi^{\dagger}}(j\omega)|)' \ge 0$ must occur for some $\omega\in[0,\omega_p)$, which is equivalent to~\eqref{ineq:HC1}.
To see $|{\phi^{\dagger}}(0)| < |\lambda_*|$, suppose $|{\phi^{\dagger}}(0)| \ge |\lambda_*|$, and without loss of generality let $|\lambda_*|=1$. 
This implies $\phi(0) \ge 1-1/\|g_*\|_{\infty}$, which in turn implies 
\begin{align*}
|\phi(j\omega_p)-\lambda_*| \ge 1-\phi(0)
\ \Leftrightarrow \ D(\angle \lambda_*) \ge D(0),
\end{align*}
where $D(\theta):=\inf_{\omega\in[0,\omega_{\pi})}|\phi(j\omega)-\mathrm{e}^{j\theta}|$ is defined in~\eqref{def:D_theta}. As we have 
shown that $D(\theta)$ is a \emph{decreasing} function for $\theta\in[0,\angle\lambda_*]$, this contradicts $D(\angle \lambda_*) \ge D(0)$ and  
we may conclude that $|{\phi^{\dagger}}(0)| < |\lambda_*|$.
\end{proof}

\begin{remark} 
Since $F'(\Omega)$ is positive, \eqref{ineq:cond_exactRIR} would hold if $F''(\Omega)\ge 0$ 
for all $\Omega \ge 0$. Classes of $\phi$ satisfying the above condition would be given in the next subsection.
\end{remark}

Our main result about the exact RIR {\ti[?]} of a cyclic network is formulated as follows. The result concerns
the cyclic network where the nominal network $\LFT(h\cdot I_n, A)$ is unstable. The nominal agent dynamics
$h$ is stable, strictly proper, and satisfies $h(0)>1/\mu>0$.
\begin{theorem}
\label{thm:cyclic_RIR}
Suppose the inverse of the nominal agent dynamics, $\phi:=1/h$ satisfies the following 
conditions:
\begin{align}
&\mbox{$\phi$ is gain-monotone increasing;} \label{phi_cond1}\\
&\mbox{$\phi$ is phase-monotone increasing;} \label{phi_cond2}\\
&\mbox{$\phi$ is convex;} \label{phi_cond3}\\
&|\phi|(|\phi|)' \le  \omega |\phi'|^2, \ \forall \omega\in[0,\infty); \label{phi_cond4} \\
& F''(\Omega)>0, \ \forall \Omega > 0,  \label{phi_cond5}
\end{align}
wherein the last condition $\Omega:=\omega^2$ and $F(\Omega):=|\phi(j\omega)|^2$. Then the cyclic network has 
its {\ti RIR exactly} equal to $\rhoL:=1/\|g_*\|_{\infty}$, where $g_*=\lambda_*/(\phi-\lambda_*)$ and $\lambda_*$ is 
defined in~\eqref{def:lambda_star}.
\end{theorem}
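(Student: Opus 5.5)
The plan is to sandwich $\rho_*(G)$ between two copies of $\rhoL$. The lower bound $\rho_*(G)\ge\rhoL(G)$ is given by Theorem~\ref{thm:homo_equiv_hetero}, and Proposition~\ref{prop:cril_eig} identifies $\rhoL=1/\|g_*\|_\infty$. For the matching upper bound we build a homogeneous perturbation $\Delta=\delta_* I_n$ with $\|\delta_*\|_\infty=\rhoL$ that marginally stabilizes the network. Scaling it by $(1+\epsilon)$ should then give strict stabilization with norm arbitrarily close to $\rhoL$. If that fails, we fall back on the paper's convention that marginal stabilization at norm $\rhoL$ certifies the radius, as in \cite{hara2023exact}.

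The first step is to check the hypotheses of Proposition~\ref{prop:cril_eig}. Conditions \eqref{phi_cond1} and \eqref{phi_cond3} are exactly (\ref{prop:cril_eig}.1)--(\ref{prop:cril_eig}.2). The peak-gain-frequency requirement $\omega_p\in[0,\omega_\pi)$ should follow from convexity and gain monotonicity. For $\omega\ge\omega_\pi$ we have $|\phi(j\omega)|\ge|\phi(j\omega_\pi)|$, so the points $\phi(j\omega)$ lie outside a region on which $\lambda_k$ is closest to the upper arc of the boundary of $\mathcal{R}_\pi(\phi)$. Hence $\rhoL=1/\|g_*\|_\infty$.

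Next we show $g_*\in\mathcal{G}_1^{\omega_p}$. By convexity of $\mathcal{R}_\pi(\phi)$, the nearest boundary point to $\lambda_*$ is unique, so the peak is unique; the conjugate point is farther away because $\lambda_*$ is not real when $n$ is odd. For $n$ even, $\lambda_*=\mu$ is real and $\omega_p=0$. Exactly one pole of $g_*$ lies in the ORHP. This follows from an encirclement count: the arc $\phi(j\omega)$, $\omega\in[0,\omega_\pi]$, passes $\lambda_*$ exactly once on one side because the phase of $\phi$ is monotone. Proposition~\ref{prop:gen_pcr} then applies: \eqref{phi_cond1}, \eqref{phi_cond2} and \eqref{phi_cond4}, evaluated at $\omega=\omega_p$, give the PCR inequality. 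This produces $\delta_*$ equal either to the constant $\rhoL$ (if $\omega_p=0$) or to the all-pass \eqref{def:delta_star} with some $a>0$, such that $[\![g_*,\delta_*]\!]$ is marginally stable. The Remark after Theorem~\ref{thm:MS} excludes unstable pole-zero cancellation for $g$ of this form.

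Then we invoke Proposition~\ref{prop:simul_stab}. Since $F'>0$ by gain monotonicity, \eqref{phi_cond5} gives $F''/F'>0\ge-2/(\Omega+a^2)$, so \eqref{ineq:cond_exactRIR} holds for every $a$. Hence $[\![g_k,\delta_*]\!]$ is (marginally) stable for all $k\in\IU$. The remaining $k\notin\IU$ need a separate check: their $\lambda_k$ lie in $\mathcal{R}_\pi(\phi)$. We must show they stay inside $\mathcal{R}_\pi(\phi^\dagger)$. The argument of Proposition~\ref{prop:simul_stab}, namely that $|\phi^\dagger|$ increases beyond $\omega_p$, combined with $|\lambda_k|=\mu$, again places them inside or on the boundary.

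The main obstacle is this last containment step. Proposition~\ref{prop:simul_stab} controls only the modulus of $\phi^\dagger$ for $\omega\ge\omega_p$. Showing that all $\lambda_k$ on the circle of radius $\mu$ lie to the left of the whole Nyquist curve of $\phi^\dagger$ requires relating the modulus bound to the region. We would first try to show that $\phi^\dagger$ remains phase-increasing, which holds because an all-pass $1/(1+\delta_*)$ adds phase lead of controlled sign. Then $\{|z|=\mu\}\cap\mathcal{R}_\pi(\phi^\dagger)$ is exactly the arc swept before the crossing at $\omega_p$. Marginal stability, rather than instability, at the boundary eigenvalues then comes from Lemma~\ref{lemma:MS}(iv), since the nonzero derivative of the all-pass at the crossing makes the imaginary-axis poles simple.
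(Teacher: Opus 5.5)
Your chain of results is the paper's own. The paper's proof of Theorem~\ref{thm:cyclic_RIR} consists only of invoking Theorem~\ref{thm:homo_equiv_hetero} for the lower bound and Propositions~\ref{prop:cril_eig}, \ref{prop:gen_pcr}, \ref{prop:simul_stab} for the matching construction.

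\textbf{The gap.} You claim that two things follow from gain monotonicity and convexity: the standing hypothesis of Proposition~\ref{prop:cril_eig} (every unstable $g_k$ peaks at a frequency in $[0,\omega_\pi)$), and, for even $n$, $\omega_p=0$. They do not. Those properties constrain only the arc $|\omega|\le\omega_\pi$. Beyond it you know merely $|\phi(j\omega)|\ge|\phi(j\omega_\pi)|$, which is vacuous once $\mu>|\phi(j\omega_\pi)|$; the tail of the Nyquist plot can then pass arbitrarily close to the eigenvalues. Concretely, take $\phi=(s+1)^6\in\mathcal{H}_p$ (so \eqref{phi_cond1}--\eqref{phi_cond5} hold by Theorem~\ref{thm:main_class_phi}), $n=2$, $\mu=60$:
\begin{itemize}
\item $\omega_\pi=1/\sqrt3$ and $|\phi(j\omega_\pi)|=(4/3)^3<\mu$.
\item Yet $\phi(j\sqrt3)=(2e^{j\pi/3})^6=64$, so $1/\|g_*\|_\infty\le 4/60$, with peak frequency beyond $\omega_\pi$ (and not at $0$, although $n$ is even).
\item Meanwhile $g_2=-60/(\phi+60)$ has two ORHP poles, at $-1+60^{1/6}e^{\pm j\pi/6}$, and $1/\|g_2\|_\infty\approx 0.95$.
\end{itemize}
Theorem~\ref{thm:homo_equiv_hetero} then gives $\rho_*\ge 1/\|g_2\|_\infty\approx 0.95$, contradicting the claimed value $1/\|g_*\|_\infty<0.07$. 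So the peak-frequency condition cannot be derived from \eqref{phi_cond1}--\eqref{phi_cond5}; it has to be carried as an extra hypothesis, as in Proposition~\ref{prop:cril_eig}. It holds automatically when $\omega_\pi=\infty$, e.g.\ for the second-order GRN model. The paper's proof silently relies on it as well.

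\textbf{The containment step.} Your resolution here is also wrong as stated, though the step is repairable. $1/(1+\delta_*)$ is not all-pass. With $\rho=1/\|g_*\|_\infty$ you have $1/(1+\delta_*(s))=\frac{1}{1+\rho}\cdot\frac{s+a}{s+z}$, where $z=a(1-\rho)/(1+\rho)\in(0,a)$. This is a lag network with phase slope $1/a-1/z<0$ at $\omega=0$. It subtracts phase, so nothing makes $\phi^\dagger$ phase-monotone.

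You do not need phase monotonicity. The argument in the proof of Proposition~\ref{prop:simul_stab} ($H$ nondecreasing and positive somewhere before $\Omega_p$) gives $|\phi^\dagger(j\omega)|<\mu$ for $|\omega|<\omega_p$ and $|\phi^\dagger(j\omega)|>\mu$ for $|\omega|>\omega_p$. On $[0,\omega_p]$ the lag actually helps: $\theta_{\phi^\dagger}=\theta_\phi-\angle(1+\delta_*)\in(-\pi/2,\,\pi/n+\pi/2)$. This holds because $\angle(1+\delta_*(j\omega))\in[0,\arcsin\rho]$ and $\theta_\phi(\omega_p)=\pi/n+\angle(1+\delta_*(j\omega_p))$.

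Hence, for odd $n\ge3$, the segment $[-\mu,0]$ misses the Nyquist plot of $\phi^\dagger$. Every $\lambda_k\ne\lambda_*,\bar\lambda_*$ can therefore be joined to $0$, where the loop is trivially stable, without crossing $\phi^\dagger(j\IR)$. A closed-loop pole of $[\![\lambda/(\phi-\lambda),\delta_*]\!]$ reaches the imaginary axis only when $\lambda$ lies on that curve, so all these loops are stable. This covers $k\in\IU$ and $k\notin\IU$ at once, so no separate check is needed. For even $n$ with $\omega_p=0$, $\delta_*$ is constant and $\phi^\dagger$ is a positive multiple of $\phi$, so containment is immediate.
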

\begin{proof}
The result is a direct consequence of Propositions~\ref{prop:cril_eig},~\ref{prop:gen_pcr}, and~\ref{prop:simul_stab}. 
Recall that 
\begin{itemize}
\item \eqref{phi_cond1} and~\eqref{phi_cond3} were utilized to establish the critical unstable 
eigenvalue $\lambda_*$ that characterizes $\rhoL$ (cf.~\eqref{def:varrho_plus}); 
\item \eqref{phi_cond1},~\eqref{phi_cond2}, and~\eqref{phi_cond4} were utilized to establish that
$g_*$ can be (marginally) stabilized by a stable $\delta_*$ with $\|\delta_*\|_{\infty} = 1/\|g_*\|_{\infty}$, 
and hence has the exact RIR; 
\item \eqref{phi_cond1} and~\eqref{phi_cond5} were utilized to establish that $\delta^*$ simultaneously stabilizes
all {\ti other} %
$g_k$'s, and hence the network. 
\end{itemize}
These results in turn lead to {\ti an exact characterization of} the RIR property for the cyclic network in question. 
\end{proof}

\subsection{Class of $\phi$ Resulting in Exact RIR}
\label{subsec:class_of_phi}

In the previous subsection, we have established that a cyclic network has {\ti the} RIR equal to
$\rhoL$ if the inverse nominal agent dynamics $\phi$ satisfies~\eqref{phi_cond1} to~\eqref{phi_cond5}. 
In this subsection, a class of $\phi$ is shown to possess these properties. The characteristics of this class
are easy to verify, and the class covers the dynamics of many practical systems, including those with over or heavily damped dynamics.  
 
\begin{theorem}
\label{thm:main_class_phi}
Let $\phi$ be a real Hurwitz polynomial and $\phi(0)>0$. Then $\phi$ is phase-monotone increasing and convex. Moreover, 
consider the following subset of Hurwitz polynomials:
\begin{align*}
\mathcal{H}_p&:={\ti\{}\phi(s)=\prod_{i=1}^n(s-z_i)~|~\mbox{$\phi$ is real}, \ \mathrm{Re}(z_i)<0,  \\ 
&\hspace{2cm} \left. |\mathrm{Im}(z_i)| \le |\mathrm{Re}(z_i)|, \ i=1,\cdots, n; \ n\in\mathbb{N} \right\}{\ti.}
\end{align*}
If $\phi\in\mathcal{H}_p$, then $\phi$ is also gain-monotone 
increasing %
and satisfies conditions~\eqref{phi_cond4} and~\eqref{phi_cond5}. 
\end{theorem}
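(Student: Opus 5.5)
The plan is to exploit the factorization $\phi(s)=\prod_i(s-z_i)$ and the fact that the logarithm turns every property into a sum over factors. Write $p_i(j\omega):=j\omega-z_i$ with $z_i=-a_i+jb_i$, $a_i>0$.

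\emph{Phase monotonicity.} Since $\theta_\phi=\sum_i\arg(j\omega-z_i)$ and $\frac{d}{d\omega}\arg(j\omega-z_i)=a_i/|j\omega-z_i|^2>0$, we get $\theta_\phi'>0$. No restriction on the root locations beyond Hurwitz is needed.

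\emph{Convexity.} I would work with the tangent of the curve. We have $\frac{d}{d\omega}\phi(j\omega)=j\phi'(j\omega)$, where $\phi'$ is the $s$-derivative. By Gauss--Lucas, the roots of $\phi'$ lie in the convex hull of the roots of $\phi$, so $\phi'$ is again a real Hurwitz polynomial with $\phi'(0)>0$. By the previous step its phase is strictly increasing, so the tangent angle $\pi/2+\arg\phi'(j\omega)$ starts at $\pi/2$ and increases monotonically. This is exactly the second inequality in~\eqref{eq:convex}: the curve always turns left. For the first inequality, $\phi_R'\le0$, I need the tangent angle to stay in $[\pi/2,3\pi/2]$ on $[0,\omega_\pi]$, i.e.\ $\arg\phi'(j\omega)\le\pi$ there. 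I would try to obtain this from $\arg\phi'(j\omega)\le\arg\phi(j\omega)$, comparing the two phase integrands through Gauss--Lucas and the interlacing of the real and imaginary parts of Hurwitz polynomials (Hermite--Biehler). This is delicate, but I expect it to be manageable. The symmetric branch $\omega\in[-\omega_\pi,0]$ then follows by conjugate symmetry, since the curve crosses the real axis vertically at $\omega=0$.

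\emph{Gain monotonicity and~\eqref{phi_cond5}.} These follow from the root sector condition. A real root contributes the factor $\Omega+a^2$ to $F(\Omega)=|\phi(j\omega)|^2$. A conjugate pair contributes
\begin{equation*}
(a^2+(\omega-b)^2)(a^2+(\omega+b)^2)=\Omega^2+2(a^2-b^2)\Omega+(a^2+b^2)^2 ,
\end{equation*}
whose coefficients are all nonnegative precisely because $|b|\le|a|$. Hence $F$ is a polynomial in $\Omega$ with nonnegative coefficients, so $F'\ge0$ and $F''\ge0$ on $\Omega\ge0$. Strictness holds once $\deg F\ge2$; the degree-one case $\phi(s)=s+a$ must be treated separately, either as trivial or by noting that~\eqref{ineq:cond_exactRIR} still holds. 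Finally, $r_\phi'=\omega F'(\Omega)/r_\phi\ge0$ gives gain monotonicity for all $\omega\ge0$.

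\emph{Condition~\eqref{phi_cond4}.} This is the step I expect to be the main obstacle. Since $|\phi|(|\phi|)'=\tfrac12\frac{d}{d\omega}F=\omega F'(\Omega)$, the condition is equivalent to $F'(\Omega)\le|\phi'(j\omega)|^2$. Dividing by $F$ and setting $w_i:=1/(j\omega-z_i)$, the right side becomes $|\sum_i w_i|^2=\sum_i|w_i|^2+2\sum_{i<k}\Re(w_i\bar w_k)$. The left side is $F'/F=\sum(\text{factor})'/(\text{factor})$. For a real root this is exactly $|w_i|^2$. For a conjugate pair I would check directly that $q'/q\le|w_i+w_{\bar i}|^2$, which should again reduce to $a^2\ge b^2$. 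It remains to show that the cross terms between different factors are nonnegative. Here I would use the sector condition: each $j\omega-z_i$ has argument in a range that keeps pairwise phase differences of the $w_i$ at most $\pi/2$, so that $\Re(w_i\bar w_k)\ge0$. If this pairwise bound fails for some configurations, my fallback is to group each conjugate pair into a single real quadratic factor and prove the inequality factor-by-factor. This works because the inequality $F'/F\le|\phi'/\phi|^2$ is preserved under products whenever the log-derivatives $\phi'/\phi$ of the factors have nonnegative mutual real inner products, and that weaker property is easier to verify for quadratics with $|b|\le a$.
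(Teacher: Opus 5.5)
The gap is in the convexity step, specifically the first inequality $\phi_R'\le 0$ on $[0,\omega_\pi]$. You leave it open and plan to derive it from $\arg\dot\phi(j\omega)\le\arg\phi(j\omega)$ (with $\dot\phi:=d\phi/ds$) via Gauss--Lucas and Hermite--Biehler. That cannot work, because the inequality is not a Hurwitz property. From $\frac{d}{d\omega}\log\phi(j\omega)=j\dot\phi(j\omega)/\phi(j\omega)$ you get
\begin{equation*}
\frac{\dot\phi(j\omega)}{\phi(j\omega)}=\theta_\phi'(\omega)-j\,\frac{r_\phi'(\omega)}{r_\phi(\omega)} .
\end{equation*}
Since $\theta_\phi'>0$ keeps this ratio in the open right half-plane, $\arg\dot\phi\le\arg\phi$ holds at $\omega$ exactly when $r_\phi'(\omega)\ge 0$. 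In other words, your intermediate inequality \emph{is} gain monotonicity.

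Worse, the target itself is false for general Hurwitz polynomials. Take $\phi(s)=s^4+s^3+1.5s^2+s+0.1$. It is Hurwitz (Routh first column $1,1,0.5,0.8,0.1$), and its complex roots $\approx-0.11\pm1.11j$ lie far outside the sector defining $\mathcal{H}_p$. Here $\phi(j\omega)=\omega^4-1.5\omega^2+0.1+j(\omega-\omega^3)$, so $\omega_\pi=1$, yet $\phi_R'(\omega)=\omega(4\omega^2-3)>0$ on $(\sqrt{3}/2,1]$. Concretely, $\phi(\pm0.9j)\approx-0.459\pm0.171j$ lie on the boundary of $\mathcal{R}_\pi(\phi)$, but their midpoint $-0.459$ is outside the region, which meets the real axis only in $[-0.4,0.1]$. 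So the unconditional claim "real Hurwitz $\Rightarrow$ convex" cannot be proved by any route. The paper's factor-wise derivation of \eqref{eq:95} has the same hidden restriction: \eqref{eq:97} needs $(r^c_\ell)'\ge0$ for every quadratic factor, i.e.\ $|\mathrm{Im}(z_\ell)|\le|\mathrm{Re}(z_\ell)|$, so it too only covers $\mathcal{H}_p$.

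The repair within your framework is short: reorder the argument. First prove gain monotonicity; your nonnegative-coefficient argument for $F$ gives $r_\phi'\ge0$ on $[0,\infty)$ for $\phi\in\mathcal{H}_p$. The identity above then gives $\arg\dot\phi\le\arg\phi\le\pi$ on $[0,\omega_\pi]$. Also $\arg\dot\phi\ge 0$, because $\dot\phi$ is Hurwitz with $\dot\phi(0)>0$. Hence $\phi_R'=-\mathrm{Im}\,\dot\phi(j\omega)\le 0$. Combined with your turning-angle argument, this proves convexity for every Hurwitz $\phi$ that is gain-monotone on $[0,\omega_\pi]$, in particular on $\mathcal{H}_p$, and more cleanly than the paper's case split on $\theta_\phi$.

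The rest of your plan is sound, with one caution on \eqref{phi_cond4}. Your primary idea of nonnegative cross terms between individual roots fails: roots $-100\pm100j$ and $-1\pm j$ give $\mathrm{Re}(w_i\bar w_k)<0$ at $\omega=10$. So the fallback is needed. It works, and it is precisely Lemma~\ref{lem:tech_cond4}: for real factors $\varphi_k$, $u_k:=\dot\varphi_k/\varphi_k=\theta_k'-j(\ln r_k)'$, so $\mathrm{Re}(u_k\bar u_l)=\theta_k'\theta_l'+(\ln r_k)'(\ln r_l)'\ge0$ once every factor is gain-monotone. Your nonnegative-coefficient treatment of $F$ is a neater route to gain monotonicity and \eqref{phi_cond5} than Lemma~\ref{lem:tech_cond5}. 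Your remark that strict positivity in \eqref{phi_cond5} fails for $\deg\phi=1$ is also correct; the paper overlooks it.
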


Before presenting the proof, we state the following technical lemmas as they are instrumental in proving the theorem. 
For the sake of smooth readability, proofs of these lemmas are placed in the appendix. 

\begin{lemma}%
\label{lem:der_phi}
Let $\phi$ be a Hurwitz polynomial. Then $\dot{\phi}:=\frac{d}{ds}\phi$ is also a Hurwitz polynomial.
\end{lemma}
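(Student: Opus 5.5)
This is the Gauss–Lucas theorem. I would prove it by logarithmic differentiation, since that turns the root locations of $\phi$ directly into information about $\dot\phi$.

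Write $\phi(s)=c\prod_{i=1}^m(s-z_i)$ with $c\neq0$ and $\Re(z_i)<0$ for all $i$; the case $m=0$ is trivial and for $m=1$ the derivative is a nonzero constant. For any $s$ that is not a root of $\phi$,
\begin{align*}
\frac{\dot\phi(s)}{\phi(s)}=\sum_{i=1}^m\frac{1}{s-z_i}.
\end{align*}

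Now suppose $\dot\phi(s_0)=0$ for some $s_0$ with $\Re(s_0)\ge0$. Since $\Re(z_i)<0\le\Re(s_0)$, the point $s_0$ is not a root of $\phi$, so the identity above applies at $s_0$. Every term then satisfies
\begin{align*}
\Re\!\left(\frac{1}{s_0-z_i}\right)=\frac{\Re(s_0-z_i)}{|s_0-z_i|^2}>0,
\end{align*}
because $\Re(s_0-z_i)=\Re(s_0)-\Re(z_i)>0$. Summing, $\Re\bigl(\dot\phi(s_0)/\phi(s_0)\bigr)>0$. This contradicts $\dot\phi(s_0)/\phi(s_0)=0$.

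Hence every root of $\dot\phi$ lies in the OLHP, so $\dot\phi$ is Hurwitz. The degree-one case, where $\dot\phi$ has no roots, should be handled by convention, i.e.\ a nonzero constant is counted as Hurwitz. The only delicate step is checking that $s_0$ cannot be a root of $\phi$, and that follows at once from the strict inequality $\Re(z_i)<0$.
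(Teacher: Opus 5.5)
Your argument is correct and is essentially the paper's. The paper simply invokes the Gauss--Lucas theorem (zeros of $\dot{\phi}$ lie in the convex hull of the zeros of $\phi$), and your logarithmic-derivative computation is the standard proof of that theorem, specialized directly to the half-plane $\Re(s)\ge 0$. One small slip: the case $m=0$ is excluded rather than trivial, since then $\dot{\phi}\equiv 0$ is not Hurwitz; this is harmless here because $\phi=1/h$ with $h$ strictly proper has degree at least one.
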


\begin{lemma}
\label{lem:gcr_pcr_formulas}
Consider a polynomial $\phi(s)=\prod_{i=1}^{n}(s-z_i)$, where {\ti no member} of $\mathcal{Z}:=\{z_i{\ti~|~}i=1,\cdots,n\}$ is on the imaginary axis. Further, we assume 
$\bar{z}$ belongs to $\mathcal{Z}$ if $z$ does; i.e., $\phi$ is a real polynomial. Then
\begin{align}
&\frac{d}{d\omega}\theta_{\phi}(\omega) = \sum_{i=1}^n \frac{-\mathrm{Re}(z_i)(|z_i|^2+\omega^2)}{(|z_i|^2-\omega^2)^2+4\omega^2\mathrm{Re}(z_i)^2}
\label{poly_pcr}\\
&\frac{d}{d\omega}\ln|\phi(j\omega)| = \sum_{i=1}^n \frac{\omega(\omega^2+\mathrm{Re}(z_i)^2-\mathrm{Im}(z_i)^2)}{(|z_i|^2-\omega^2)^2+4\omega^2\mathrm{Re}(z_i)^2}\label{poly_loggcr}
\end{align}
where $\theta_{\phi}(\omega):=\angle\phi(j\omega)$. 
\end{lemma}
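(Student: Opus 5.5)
The plan is to differentiate the logarithm of $\phi(j\omega)$ root by root and then pair complex-conjugate roots. Since no $z_i$ lies on the imaginary axis, $\phi(j\omega)\neq0$ for all real $\omega$. Hence a continuous branch of $\log\phi(j\omega)=\ln|\phi(j\omega)|+j\theta_\phi(\omega)$ exists, and it equals $\sum_{i}\log(j\omega-z_i)$ up to a constant multiple of $2\pi j$. Differentiating in $\omega$ gives
\begin{align*}
\frac{d}{d\omega}\ln|\phi(j\omega)|+j\frac{d}{d\omega}\theta_\phi(\omega)=\sum_{i=1}^n\frac{j}{j\omega-z_i}.
\end{align*}
So $(\ln|\phi|)'$ is the real part of the right-hand side, and $\theta_\phi'$ is its imaginary part.

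Next I compute a single term. Write $z=x+jy$. Then
\begin{align*}
\frac{j}{j\omega-z}=\frac{(\omega-y)-jx}{x^2+(\omega-y)^2}.
\end{align*}
Each root therefore contributes $(\omega-y)/(x^2+(\omega-y)^2)$ to the log-gain derivative and $-x/(x^2+(\omega-y)^2)$ to the phase derivative. These individual expressions are not yet of the claimed form, because they depend on the sign of $y$.

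The key step uses the realness of $\phi$: roots are either real or come in conjugate pairs $z,\bar z$. For a pair I add the contributions of $y$ and $-y$ over the common denominator
\begin{align*}
(x^2+(\omega-y)^2)(x^2+(\omega+y)^2)=(|z|^2+\omega^2)^2-4\omega^2y^2=(|z|^2-\omega^2)^2+4\omega^2x^2 .
\end{align*}
Call this denominator $D$.

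For the phase, the numerators add to $-2x(|z|^2+\omega^2)$. For the log-gain they add to $2\omega x^2+(\omega^2-y^2)\cdot2\omega=2\omega(\omega^2+x^2-y^2)$. Each pair thus contributes exactly twice the summand claimed in~\eqref{poly_pcr} and~\eqref{poly_loggcr}. Those summands depend only on $\mathrm{Re}(z)$, $\mathrm{Im}(z)^2$ and $|z|$, so they take the same value at $z$ and at $\bar z$. Hence the pair's total equals the sum of the claimed summands over its two members.

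For a real root ($y=0$), $D=(x^2+\omega^2)^2$, and the claimed summands reduce to $-x/(x^2+\omega^2)$ and $\omega/(x^2+\omega^2)$. These agree with the single-term computation directly. Summing over all pairs and all real roots yields both formulas. The only delicate point is this pairing bookkeeping, which requires the conjugation symmetry of $\mathcal{Z}$ with multiplicities; the rest is routine algebra.
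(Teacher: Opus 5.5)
Your proof is correct and takes the paper's route: differentiate $\log\phi(j\omega)$ to get $\sum_i j/(j\omega-z_i)$, then use the conjugation symmetry of the root multiset to combine terms over the common denominator $(j\omega-z_i)(j\omega-\bar z_i)=(|z_i|^2-\omega^2)-2j\omega\mathrm{Re}(z_i)$. The only difference is bookkeeping: the paper replaces each summand by $\tfrac{j}{2}\bigl(\tfrac{1}{j\omega-z_i}+\tfrac{1}{j\omega-\bar z_i}\bigr)$, which treats real and complex roots uniformly and so avoids your explicit pairing and separate real-root check.
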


\begin{lemma}
\label{lem:convexity_gen}
Let $\phi$ be a polynomial that satisfies $\phi(0)>0$ and $\frac{d}{d\omega}\mathrm{Re}(\phi(j\omega))\not=0$
for $\omega\in(0,\omega_{\pi})$.  
Then $\mathcal{R}_{\pi}(\phi)$ is convex if 
$\frac{d}{d\omega}\theta_{\dot{\phi}}(\omega) > 0$ and $\frac{d}{d\omega}\mathrm{Re}(\phi(j\omega)) < 0$ for $\omega\in(0,\omega_{\pi})$, where $\theta_{\dot{\phi}}(\omega):= \angle\dot{\phi}(j\omega)$, and $\dot{\phi}(s)
:=\frac{d}{ds}\phi(s)$. 
\end{lemma}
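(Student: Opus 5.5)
The plan is to verify the two inequalities in \eqref{eq:convex}, which were shown above to characterize convexity of $\mathcal{R}_{\pi}(\phi)$. The key observation is that the frequency derivative of $\phi(j\omega)$ is just a rotated copy of $\dot{\phi}(j\omega)$. The second-order inequality in \eqref{eq:convex} then becomes a statement about the phase change rate of $\dot\phi$.

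First, by the chain rule, $\phi'(\omega):=\frac{d}{d\omega}\phi(j\omega)=j\dot{\phi}(j\omega)$, so that $\phi_R'+j\phi_I'=j\dot\phi(j\omega)$. Next, I would write the curvature expression as an imaginary part,
\begin{align*}
\phi_R'\phi_I''-\phi_I'\phi_R'' = \mathrm{Im}\bigl(\overline{\phi'}\,\phi''\bigr) = |\phi'|^2\,\frac{d}{d\omega}\angle\phi'(\omega),
\end{align*}
which is the standard identity for a nonvanishing smooth complex curve $c$: $\mathrm{Im}(\bar c c')=|c|^2(\angle c)'$. Since $\angle\phi'(\omega)=\pi/2+\theta_{\dot\phi}(\omega)$, this gives
\begin{align*}
\phi_R'\phi_I''-\phi_I'\phi_R''=|\dot\phi(j\omega)|^2\,\theta_{\dot\phi}'(\omega).
\end{align*}
The hypothesis $\phi_R'(\omega)\neq0$ on $(0,\omega_\pi)$ guarantees $\dot\phi(j\omega)\neq0$ there, so the phase $\theta_{\dot\phi}$ is well defined and differentiable. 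Hence the assumption $\theta'_{\dot\phi}>0$ yields the strict second inequality on $(0,\omega_\pi)$. The first inequality, $\phi_R'<0$, is assumed directly.

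It remains to pass from these pointwise conditions to convexity of the region. On $[0,\omega_\pi]$, $\phi_R$ is strictly decreasing, so the upper half of the boundary is the graph of $\phi_I$ as an implicit function of $\phi_R$. The computation preceding the lemma gives $\frac{d^2\phi_I}{d\phi_R^2}=(\phi_I''\phi_R'-\phi_R''\phi_I')/(\phi_R')^3\le0$, so this graph is concave. Because $\phi$ is real, $\phi(-j\omega)=\overline{\phi(j\omega)}$. Thus the boundary for $\omega\in[-\omega_\pi,0]$ is the mirror image of the upper half, a convex (lower) graph over the same $\phi_R$-interval $[\phi_R(\omega_\pi),\phi(0)]$. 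Then $\mathcal{R}_\pi(\phi)$ is the set of points lying between a concave upper graph and a convex lower graph, which is an intersection of convex sets and therefore convex. The endpoint $\omega=0$ is harmless, since there $\phi_I(0)=0$ and the two halves meet on the real axis. The case $\omega_\pi=\infty$ is the same argument, with the $\phi_R$-interval unbounded to the left.

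I expect the main obstacle to be this last, global step rather than the identity. The reason is that \eqref{eq:convex} as stated in the paper refers to $[-\omega_\pi,\omega_\pi]$, while $\phi_R'$ is odd in $\omega$. I would therefore avoid invoking \eqref{eq:convex} literally on negative frequencies. Instead, I would use the reflection argument above, working only on $[0,\omega_\pi]$, to justify the convexity of the full region.
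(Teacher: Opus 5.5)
Your proof is correct and is essentially the paper's argument. Both write the upper boundary as the graph of $\phi_I$ over $\phi_R$ and use $\frac{d}{d\omega}\phi(j\omega)=j\dot\phi(j\omega)$ to show that its second derivative has the sign of $\theta_{\dot\phi}'/\phi_R'$; the paper differentiates $\tan(\theta_{\dot\phi}+\pi/2)$, while you use the equivalent identity $\mathrm{Im}(\overline{\phi'}\phi'')=|\dot\phi|^2\theta_{\dot\phi}'$, and your explicit reflection step (concave upper graph, mirrored convex lower graph, intersection of convex sets) fills in the passage from concavity to convexity of $\mathcal{R}_{\pi}(\phi)$ that the paper leaves implicit.
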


\begin{lemma}
\label{lem:tech_cond4}
Let $\phi:=\phi_1\phi_2$. Suppose $\phi_i$, $i=1,2$, are gain- and phase-monotone increasing and satisfy 
\begin{align}
\label{ineq:gen_pcr_generic}
|\phi_i|(|\phi_i|)' \le \omega |\phi_i'|^2, \ \ \forall\omega\in[0,\infty).
\end{align}
Then the same properties hold for $\phi$. 
\end{lemma}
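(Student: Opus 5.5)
The plan is to pass to polar coordinates, where products become sums, and check each property separately. Write $\phi_i(j\omega)=r_i(\omega)\mathrm{e}^{j\theta_i(\omega)}$, $i=1,2$, so that $r_\phi=r_1r_2$ and $\theta_\phi=\theta_1+\theta_2$. We assume, as is implicit in the statement, that $\phi_i(j\omega)\neq0$ on the imaginary axis, so that $r_i>0$ and the phases are well defined and smooth. We also take $\phi_i(0)>0$ and choose $\theta_i(0)=0$.

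\emph{Gain and phase monotonicity.} By the polar forms \eqref{ineq:pola_g}--\eqref{ineq:pola_p}, we have $r_\phi'=r_1'r_2+r_1r_2'\ge0$ and $\theta_\phi'=\theta_1'+\theta_2'\ge0$ wherever both factors satisfy their derivative conditions. The only bookkeeping issue is the frequency range. Since $\theta_i(0)=0$ and each $\theta_i$ is nondecreasing on its own $[0,\omega_\pi(\phi_i)]$, we get $\theta_i\le\theta_\phi$ there. Hence $\theta_\phi$ reaches $\pi$ no later than either factor does, that is, $\omega_\pi(\phi)\le\min_i\omega_\pi(\phi_i)$. Therefore on $[0,\omega_\pi(\phi)]$ both factors are in their monotone regimes, and \eqref{def:phase_mono_+} and the first part of \eqref{def:gain_mono_+} follow for $\phi$.

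For the tail condition $r_\phi(\omega)\ge r_\phi(\omega_\pi(\phi))$ when $\omega>\omega_\pi(\phi)$, I would use the factor-wise conditions. On $[\omega_\pi(\phi),\omega_\pi(\phi_i)]$ each $r_i$ is still nondecreasing. Beyond $\omega_\pi(\phi_i)$ we have $r_i(\omega)\ge r_i(\omega_\pi(\phi_i))\ge r_i(\omega_\pi(\phi))$. Thus each $r_i(\omega)\ge r_i(\omega_\pi(\phi))$, and multiplying the two inequalities gives the claim. I expect this range bookkeeping to be the main (if mild) obstacle. If needed, one can simply read the hypotheses as holding for all $\omega\ge0$, which is how the lemma is used for Hurwitz factors in Theorem~\ref{thm:main_class_phi}.

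\emph{Inequality \eqref{ineq:gen_pcr_generic}.} Using the polar form \eqref{ineq:eqiv_cond3} and dividing by $r^2>0$, set $a_i:=r_i'/r_i\ge0$ and $b_i:=\theta_i'\ge0$. The hypothesis for $\phi_i$ then reads $a_i\le\omega(a_i^2+b_i^2)$. For $\phi$, the log-gain rate is $a_1+a_2$ and the phase rate is $b_1+b_2$. Since all four quantities are nonnegative, $(a_1+a_2)^2\ge a_1^2+a_2^2$ and $(b_1+b_2)^2\ge b_1^2+b_2^2$, which gives
\begin{align*}
\omega\big((a_1+a_2)^2+(b_1+b_2)^2\big)\ge \omega(a_1^2+b_1^2)+\omega(a_2^2+b_2^2)\ge a_1+a_2 .
\end{align*}
This is exactly \eqref{ineq:eqiv_cond3} for $\phi$, and hence \eqref{ineq:gen_pcr_generic} holds for $\phi$.

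Note that the nonnegativity of $a_i,b_i$, i.e. the monotonicity hypotheses, is precisely what allows dropping the cross terms in this last step. By induction the lemma extends to any finite product, which is how it would be applied to factor $\phi\in\mathcal{H}_p$ into first- and second-order Hurwitz factors.
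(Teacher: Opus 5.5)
Your proof is correct and is essentially the paper's proof: dividing \eqref{ineq:eqiv_cond3} by $r^2$ and then using $(a_1+a_2)^2\ge a_1^2+a_2^2$ and $(b_1+b_2)^2\ge b_1^2+b_2^2$ is exactly the paper's step of dropping the nonnegative cross term $2\omega(a_1a_2+b_1b_2)$. Your range bookkeeping for gain and phase monotonicity of the product (via $\omega_\pi(\phi)\le\min_i\omega_\pi(\phi_i)$) is something the paper leaves implicit. As in the paper, the final step actually needs $a_i,b_i\ge0$ for all $\omega\ge0$, not only on $[0,\omega_\pi]$, and your caveat about reading the hypotheses on all of $[0,\infty)$ covers this.
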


\begin{lemma}
\label{lem:tech_cond5}
Let $\phi:=\phi_1\phi_2$. Suppose $\phi_i$, $i=1,2$, are gain-monotone increasing and satisfy 
\begin{align}
\label{ineq:2nd_deriv}
F_i''(\Omega) \ge 0, \ \ \forall\omega\in[0,\infty), \ \ i=1,2;
\end{align}
where $F_i(\Omega):=|\phi_i(j\omega)|^2$. Then the same properties hold for $\phi$. 
\end{lemma}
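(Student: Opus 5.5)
The argument works entirely in the squared-gain variable. Since $|\phi(j\omega)|=|\phi_1(j\omega)|\,|\phi_2(j\omega)|$, we have the exact factorization
\begin{equation*}
F(\Omega)=F_1(\Omega)F_2(\Omega),\qquad \Omega=\omega^2\ge 0,
\end{equation*}
with $F_i(\Omega)=|\phi_i(j\omega)|^2>0$. The positivity holds because $\phi_i(0)>0$, and the $\phi_i$ are assumed to have no zeros on the imaginary axis, as in the setting of the paper. The plan is to show two things: first that $F_i'(\Omega)\ge 0$ for \emph{all} $\Omega\ge 0$, and then that the product rule transfers both $F'\ge 0$ and $F''\ge 0$ to $F$.

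\textbf{Step 1: upgrade local monotonicity to global.} Gain-monotonicity of $\phi_i$ is only assumed on $[0,\omega_\pi^{(i)}]$, together with the lower bound $r_{\phi_i}(\omega)\ge r_{\phi_i}(\omega_\pi^{(i)})$ beyond $\omega_\pi^{(i)}$. Convexity removes this restriction. Because $\omega_\pi^{(i)}>0$ (as $\theta_{\phi_i}(0)=0$), gain-monotonicity gives $F_i'(\Omega)\ge 0$ on a nontrivial interval $[0,\Omega_\pi^{(i)}]$, in particular at $\Omega=0^+$. Since $F_i''\ge 0$, $F_i'$ is nondecreasing in $\Omega$, so $F_i'(\Omega)\ge F_i'(0)\ge 0$ for every $\Omega\ge 0$. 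Thus each $r_{\phi_i}$ is nondecreasing on the whole half-line $\omega\ge 0$. This is the only point needing care, because the definition of ${\cal M}_g$ depends on $\omega_\pi$, and $\omega_\pi$ of the product $\phi$ generally differs from those of the factors.

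\textbf{Step 2: product rule.} Differentiating the factorization gives
\begin{align*}
F'&=F_1'F_2+F_1F_2'\ \ge 0,\\
F''&=F_1''F_2+2F_1'F_2'+F_1F_2''\ \ge 0,
\end{align*}
since every factor is nonnegative by Step 1, the hypotheses, and $F_i>0$. The second line is exactly the required property $F''(\Omega)\ge 0$ for $\phi$.

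\textbf{Step 3: conclude gain-monotonicity of $\phi$.} From $F'\ge 0$ on all of $[0,\infty)$, the gain $r_\phi=\sqrt{F}$ is nondecreasing in $\omega\ge 0$. Both clauses of the definition of ${\cal M}_g$ then follow immediately, whatever the value of $\omega_\pi$ for $\phi$. Monotonicity on $[0,\omega_\pi]$ is direct, and $r_\phi(\omega)\ge r_\phi(\omega_\pi)$ for $\omega>\omega_\pi$ is also direct; if $\omega_\pi=\infty$, the second clause is vacuous.

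I expect the main (minor) obstacle to be Step 1, the reconciliation of the $\omega_\pi$-dependent definition of gain-monotonicity with a global statement. Convexity of $F_i$ in $\Omega$ handles this cleanly. The rest is the product rule.
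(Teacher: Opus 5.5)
Your proposal is correct and follows the same route as the paper: write $F=F_1F_2$ and apply the product rule, so that $F''=F_1''F_2+2F_1'F_2'+F_1F_2''\ge 0$. Your Step~1 uses $F_i''\ge 0$ to extend $F_i'\ge 0$ from $[0,\Omega_\pi]$ to all $\Omega\ge 0$, and your Step~3 deduces gain-monotonicity of $\phi$ from $F'=F_1'F_2+F_1F_2'\ge 0$; both make explicit points that the paper's one-line proof takes for granted.
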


\subsubsection*{Proof of Theorem~\ref{thm:main_class_phi}}
By~\eqref{poly_pcr}, the phase change rate of $\phi(j\omega)$ is positive for all $\omega$ if $\mathrm{Re}(z_i)<0$. Thus,
Hurwitz polynomials are phase-monotone increasing. 

To show $\phi$ is convex, first note that by Lemma~\ref{lem:der_phi}, $\dot{\phi}$ is a Hurwitz polynomial since $\phi$ is. 
Thus, by~\eqref{poly_pcr} again we have $\frac{d}{d\omega}\theta_{\dot{\phi}}(\omega) > 0$ for $\omega\in (0,\omega_{\pi})$ 
(in fact, for all $\omega$). Now let 
\begin{align*}
&\phi(j\omega):=\prod_{k=1}^{m} (j\omega-q_i)\prod_{\ell=1}^{n} \left((j\omega)^2-2\mathrm{Re}(z_{\ell})(j\omega)+|z_{\ell}|^2\right)\\
&= \left(\prod_{k=1}^m r^{r}_k(\omega)\prod_{\ell=1}^n r^c_{\ell}(\omega)\right)\mathrm{exp}\left( j\sum_{k=1}^m\theta^r_k(\omega)+ j \sum_{\ell=1}^n\theta^c_{\ell}(\omega)\right)\\
& =: r_{\phi}(\omega)\mathrm{exp}(j\theta_{\phi}(\omega)),
\end{align*}
where
\begin{align*}
&r^r_{k}:=|j\omega-q_{k}|; \ \ \theta^r_{k}(\omega):=\angle (j\omega-q_{k}), \ \ k\in\{1,\cdots,m\}.\\
&r^c_{\ell}:=\left||z_{\ell}|^2-\omega^2-2\mathrm{Re}(z_{\ell})(j\omega)\right|; \\
&\theta^c_{\ell}(\omega):=\angle\left(|z_{\ell}|^2-\omega^2-2\mathrm{Re}(z_{\ell})(j\omega)\right), \ \ \ell\in\{1,\cdots,n\}.
\end{align*}
In the followings, we use $\square'$ to denote the derivative of $\square$ w.r.t. $\omega$, and drop the $\omega$-dependency for notational clarity. 

Note that $\mathrm{Re}(\phi)' = r_{\phi}'\cos\theta_{\phi} - r_{\phi}\theta_{\phi}'\sin\theta_{\phi}$. 
Thus, $\mathrm{Re}(\phi)'\le 0$ if and only if $r_{\phi}'\cos\theta_{\phi}\le r_{\phi}\theta_{\phi}'\sin\theta_{\phi} $,
which in turn is equivalent to $(r_{\phi}'/r_{\phi})\cos\theta_{\phi} \le \theta_{\phi}'\sin\theta_{\phi}$. Moreover,
notice that $r_{\phi}'/r_{\phi} = \sum_{k=1}^m (r^r_k)' / r^r_k + \sum_{\ell=1}^m (r^c_\ell)' / r^c_\ell$. We will 
now proceed to show that, for all $\omega\in [0,\omega_{\pi}]$, 
\begin{align}
\left(\sum_{k=1}^m \frac{(r^r_k)'}{r^r_k} + \sum_{\ell=1}^n \frac{(r^c_\ell)'}{r^c_\ell} \right)\cos\theta_{\phi} \le \theta_{\phi}'\sin\theta_{\phi}.
\label{eq:95}
\end{align}

First notice that $\mathrm{Re}(j\omega-z_i)'\equiv 0$ and $\mathrm{Re}(|z_{\ell}|^2-\omega^2-2\mathrm{Re}(z_{\ell})(j\omega))' = -2\omega <0$
for all $\omega>0$. Therefore, we have for all $\omega>0$, 
\begin{align*}
&\left( (r^r_k)'/r^r_k \right)\cos\theta^r_k = (\theta^r_k)' \sin\theta^r_k, \ \ k\in\{1,\cdots,m\}{\ti,}  \\ 
&\left( (r^c_\ell)'/r^c_\ell \right)\cos\theta^c_\ell < (\theta^c_\ell)' \sin\theta^c_\ell, \ \ \ell\in\{1,\cdots,n\}{\ti.}
\end{align*}

Now for $\omega\in[0,\omega_{\pi}]$, if $\theta_{\phi}\in [\pi/2,\pi)$, we have $\cos\theta_{\phi}\le 0 < \sin\theta_{\phi}$, and thus
\begin{align}
&\left( (r^r_k)'/r^r_k \right)\cos\theta_{\phi} < (\theta^r_k)' \sin\theta_{\phi}, \ \ k\in\{1,\cdots,m\} \label{eq:96}  \\ 
&\left( (r^c_\ell)'/r^c_\ell \right)\cos\theta_{\phi} < (\theta^c_\ell)' \sin\theta_{\phi} \ \ \ell\in\{1,\cdots,n\}, \label{eq:97} 
\end{align}
which in turn leads to~\eqref{eq:95} when both sides of~\eqref{eq:96} and~\eqref{eq:97} 
are summed over $k$ and $\ell$, respectively, and then added together.  

If $\theta_{\phi}\in (0, \pi/2)$, use the fact that $\cos(\cdot)$ is a decreasing function and $\sin(\cdot)$ increasing, and 
$\theta_{\phi} > \theta^r_k$, $\theta_{\phi} > \theta^c_\ell$ for each $k$ and $\ell$, to deduce~\eqref{eq:96} and~\eqref{eq:97}. 
Then~\eqref{eq:95} follows per the same arguments. 

That $\phi$ is gain-monotone increasing if $\phi\in\mathcal{H}_p$, is a direct consequence of~\eqref{poly_loggcr}. 
Finally, by Lemmas~\ref{lem:tech_cond4} and~\ref{lem:tech_cond5}, we can conclude that
$\phi\in\mathcal{H}_p$ implies that $\phi$ satisfies~\eqref{phi_cond4} and~\eqref{phi_cond5} if we may verify {\ti that} the following
first and second order factors satisfy the two inequalities{\ti:}
\begin{align}
&\varphi_1(s)=s+a ,\\
&\varphi_2(s)=s^2-2\mathrm{Re}(z)s+|z|^2,
\end{align}
where $a>0$ and $z$ satisfies $\mathrm{Re}(z)<0$ and $|\mathrm{Im}(z)|\le|\mathrm{Re}(z)|$. Note that
$F_1(\Omega):=|\varphi_1(j\omega)|^2 = \omega^2+a^2 = \Omega + a^2$, and $F_2(\Omega):=|\varphi_2(j\omega)|^2 
= (|z|^2-\omega^2)^2 + 4\mathrm{Re}(z)^2\omega^2 = (|z|^2-\Omega)^2 + 4\mathrm{Re}(z)^2\Omega$. One can readily
verify that $F_1$ and $F_2$ satisfy~\eqref{phi_cond5}. For inequality~\eqref{phi_cond4}, note the following equalities hold:
$|\varphi_1|(|\varphi_1|)' = \omega$, $\omega |\varphi_1'|^2 = \omega$,
\begin{align*}
&|\varphi_2|(|\varphi_2|)' = 2\omega(\omega^2+\mathrm{Re}(z)^2-\mathrm{Im}(z)^2), \\ 
&\omega |\varphi_2'|^2 = 2\omega(2\omega^2+2\mathrm{Re}(z)^2).
\end{align*}
Clearly, $\varphi_1$ and $\varphi_2$ both satisfy inequality~\eqref{phi_cond4}. This concludes the proof.

\section{Rank-One and Rank-Two Networks}
\label{sec:LowRank}

This section investigates the robust instability analysis for the multi-agent system 
$\Sigma(\Delta, H, A)$ depicted in Section~\ref{sec:PF}, with an additional structural assumption 
that matrix $A$ is of rank one or rank two.  
That is, $A$ can be factorized as $L\Lambda_kR$ with $L\in\mathbb{C}^{n\times k}$, 
$R\in\mathbb{C}^{k\times n}$, 
$\ti\Lambda_k=\diag(\lambda_1,\cdots,\lambda_k)$, and $k=1$ or $2$.
Note that in the case $k=1$, $\lambda_1$ is real and $L\in\mathbb{R}^{n\times 1}$, $R\in\mathbb{R}^{1\times n}$. 
In the case $k=2$, {\ti we assume} $\lambda_1$ and $\lambda_2$ are complex conjugates, i.e., $\lambda_2 = \bar{\lambda}_1$. 
We are {\ti particularly} interested in these low-rank networks due to the fact that the 
RIR problem with such networks degenerates to a SISO problem, which will be elaborated in the 
next two subsections. 

By {\ti Lemma~1} of~\cite{NetworkRIR_short}, $h, \ w_m\in\RHinf$ is necessary 
for the network to be stable {\ti due to the rank deficiency of $A$}. Further, by 
{\ti Lemma~2 of \cite{NetworkRIR_short}}, internal stability of
$\Sigma(\Delta, H, A)$ is equivalent to that of the feedback system $[\![G_k, R\Delta L]\!]$, where
\begin{align}
\label{eq:reduced_sys}
G_k = \mathrm{diag}(g_1,\cdots,g_k), \ \ {\ti g_i} =\frac{\lambda_i w_mh}{1-\lambda_ih}, \ 
\ti \ i\in\II_k. %
\end{align}
Thus, the network RIR problem reduces to 
\begin{align}
\inf_{\Delta}~\|\Delta \|_{\infty}, \ \  \mbox{subj. to}~~ \Delta\in\mathbf{\Delta}_d, \ \ R\Delta L\in \mathbb{S}_f(G_k){\ti.} 
\label{eq:siso_rir}
\end{align} 
For the development of our main results in this section, we also consider the following problems
\begin{align}
&\inf_{\hat{\Delta}_k}  ~\|\hat{\Delta}_k\|_{\infty}, \ \ \mbox{subj. to}~~ \hat{\Delta}_k\in\mathbb{S}_f(G_k) 
\label{2steps:1}\\
&\inf_{\Delta} ~\|\Delta\|_{\infty}, \ \  \mbox{subj. to}~~ \Delta\in\mathbf{\Delta}_d, \ \ R\Delta L = \hat{\Delta}^*,
\label{2steps:2}
\end{align}
where $\hat{\Delta}^*$ is an optimizer of~\eqref{2steps:1}.

\subsection{The Rank-One Case}

In the case of $k=1$, $G_k := g_1$ is a scalar transfer function. 
We further assume that $g_1\in\RLinf$ and is unstable. 
For rank-one networks, the following technical lemma is required for
deriving the main results of this section.
\begin{lemma}
\label{lem:equaldist}
Given $\delta\in\RHinf$ {\ti and} $c_1,\cdots,c_n\in\mathbb{R}$, 
\begin{align*}
\inf_{\begin{subarray}{c}
\delta_i\in\RHinf, \ti i\in\II_n \\ %
\end{subarray} 
}\max \ \{\|\delta_i\|_\infty, {\ti i\in\II_n\}} %
\ti \mbox{ subj. to } \sum_{i=1}^n c_i\delta_i = \delta  
\end{align*}
is equal to $\gamma^*:=\|\delta\|_{\infty}/(|c_1|+\cdots+|c_n|)$. 
\end{lemma}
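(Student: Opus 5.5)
The claim has two halves: a lower bound showing that no feasible tuple can do better than $\gamma^*$, and an explicit feasible tuple that attains it. Throughout I assume some $c_i\neq0$; otherwise the constraint forces $\delta=0$ and the statement is trivial or vacuous.

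\emph{Lower bound.} Take any feasible tuple $(\delta_1,\dots,\delta_n)$ with $\sum_i c_i\delta_i=\delta$. For every $\omega\in\IR$ the triangle inequality gives
\[
|\delta(j\omega)|\le\sum_i|c_i||\delta_i(j\omega)|\le\Big(\sum_i|c_i|\Big)\max_i\|\delta_i\|_\infty .
\]
Taking the supremum over $\omega$ yields $\max_i\|\delta_i\|_\infty\ge\gamma^*$. Since the tuple was arbitrary, the infimum is at least $\gamma^*$.

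\emph{Upper bound (attainment).} Set $\delta_i:=\mathrm{sgn}(c_i)\,\delta/\sum_k|c_k|$ when $c_i\neq0$, and $\delta_i:=0$ when $c_i=0$.
\begin{itemize}
\item Each $\delta_i$ is a real scalar multiple of $\delta\in\RHinf$, so each $\delta_i\in\RHinf$.
\item Substituting, $\sum_i c_i\delta_i=\sum_i|c_i|\,\delta/\sum_k|c_k|=\delta$, so the constraint holds.
\item Each norm satisfies $\|\delta_i\|_\infty\le\|\delta\|_\infty/\sum_k|c_k|=\gamma^*$, with equality whenever $c_i\neq0$.
\end{itemize}
Hence the value $\gamma^*$ is achieved, and the infimum is in fact a minimum.

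There is no real obstacle in either direction. The only points needing care are the degenerate cases with some or all $c_i=0$, and checking that the sign choice keeps the $\delta_i$ real-rational. The latter holds automatically because the $c_i$ are real.
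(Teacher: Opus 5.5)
Your proposal is correct and follows the same route as the paper: a triangle-inequality lower bound on $j\IR$, plus the attaining choice $\delta_i=\mathrm{sign}(c_i)\,\delta/(|c_1|+\cdots+|c_n|)$. Your direct supremum argument is slightly cleaner than the paper's contradiction at a peak frequency $\omega_p$, because it does not require $\|\delta\|_\infty$ to be attained at a finite frequency. Setting $\delta_i=0$ when $c_i=0$, instead of the paper's $\pm\delta/\sum_k|c_k|$, is equally valid.
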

\begin{proof}
First we show that $\gamma^*$ is a lower bound on the infimum in question. 
Suppose not. This means
there exist $\tilde{\delta}_1, \cdots, \tilde{\delta}_n$ such that 
$c_1\tilde{\delta}_1+\cdots+c_n \tilde{\delta}_n = \delta$ and 
\begin{align*}
\max\{\|\tilde{\delta}_i\|_\infty, i=1\cdots, n\} < \gamma^*
\end{align*}
This in turn implies 
$|\tilde{\delta}_i(j\omega)| < \gamma^*$, $\forall \omega$, $\forall i=1,\cdots, n$. Suppose 
$\|\delta\|_{\infty} = |\delta(j\omega_p)|$. We have 
\begin{align*}
\|\delta\|_{\infty} &= |\delta(j\omega_p)|=|c_1\tilde{\delta}_1(j\omega_p)
+\cdots+c_n \tilde{\delta}_n(j\omega_p)|\\
&\le |c_1||\tilde{\delta}_1(j\omega_p)|+\cdots+|c_n| |\tilde{\delta}_n(j\omega_p)| \\ 
&< (|c_1|+\cdots+|c_n|)\gamma^* = \|\delta\|_{\infty},
\end{align*}
which is a contradiction. Second, note that $\gamma^*$ is attained by the following selection of $\delta_i$'s: 
\begin{align*}
\delta_i^* = \mathrm{sign}(c_i) \cdot \frac{\delta}{|c_1|+\cdots+|c_n|}, \ \ i=1,\cdots, n,
\end{align*}
where $\mathrm{sign}(c)=1$ if $c> 0$, $\mathrm{sign}(c)=-1$ if $c < 0$, and 
$\mathrm{sign}(0)=1$ or $-1$.
Clearly, such $\delta_i^*$, $i=1,\cdots,n$, satisfy the equality constraint, and 
$\|\delta_1\|_{\infty}=\|\delta_2\|_{\infty}\cdots=\|\delta_n\|_{\infty}=\gamma^*$.
\end{proof}

\begin{remark}
We note that all $\delta_i^*$ are the same if all $c_i$ have the same sign; i.e., 
they are all non-negative, or all non-positive. In either case, the sign of the zero
coefficients shall be chosen to be aligned with that of the nonzero coefficients. 
\end{remark}

The following lemma is instrumental for the main result of this section.
\begin{lemma}
\label{lem:two-step} 
For $k=1$, the infimum of problem~\eqref{eq:siso_rir} is equal to that of 
problem~\eqref{2steps:2}, where the argument of infimum of problem~\eqref{2steps:1} is 
applied to constrain $\Delta$. 
\end{lemma}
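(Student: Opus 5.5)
The plan is to reduce both problems to the scalar quantity $\rho_*(g_1)=\inf_{\hat\delta\in\IS_f(g_1)}\|\hat\delta\|_\infty$ and apply Lemma~\ref{lem:equaldist}. For $k=1$ we have $L\in\IR^{n\times1}$ and $R\in\IR^{1\times n}$. For $\Delta=\diag(\delta_1,\cdots,\delta_n)\in\bfDel_d$, the product $R\Delta L$ is therefore the scalar $\sum_{i=1}^n c_i\delta_i$ with real weights $c_i:=R_{1i}L_{i1}$. Set $C:=|c_1|+\cdots+|c_n|$, and recall that $\|\Delta\|_\infty=\max_i\|\delta_i\|_\infty$ for diagonal $\Delta$.

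First I would dispose of the degenerate case $C=0$. Then $R\Delta L\equiv0$ for every $\Delta$. Since $g_1$ is unstable, $0\notin\IS_f(g_1)$, so the feasible set of~\eqref{eq:siso_rir} is empty and its infimum is $\infty$. Likewise~\eqref{2steps:2} has no feasible $\Delta$, because $\hat\Delta^*$ is a stabilizing, hence nonzero, perturbation. So both infima are infinite by convention and agree. Henceforth I assume $C>0$.

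For the main case, I would rewrite~\eqref{eq:siso_rir} as a nested infimum. The outer infimum runs over the scalar $\hat\delta:=\sum_i c_i\delta_i\in\IS_f(g_1)$; the inner one runs over all decompositions of that fixed $\hat\delta$ into $\delta_i\in\RHinf$. By Lemma~\ref{lem:equaldist}, the inner infimum equals $\|\hat\delta\|_\infty/C$, and it is attained by $\delta_i^*=\mathrm{sign}(c_i)\hat\delta/C$. These $\delta_i^*$ are real-rational and stable because the $c_i$ are real, so the resulting $\Delta$ lies in $\bfDel_d$. Hence the infimum of~\eqref{eq:siso_rir} equals $\inf_{\hat\delta\in\IS_f(g_1)}\|\hat\delta\|_\infty/C=\rho_*(g_1)/C$. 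Applying Lemma~\ref{lem:equaldist} directly with $\delta=\hat\Delta^*$ shows that the infimum of~\eqref{2steps:2} equals $\|\hat\Delta^*\|_\infty/C$. Since $\hat\Delta^*$ is an optimizer of~\eqref{2steps:1}, this is also $\rho_*(g_1)/C$, which proves the claimed equality.

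The step needing the most care is justifying the interchange of infima, together with the implicit assumption that an optimizer $\hat\Delta^*$ of~\eqref{2steps:1} exists. I would state explicitly that the lemma is read under that existence assumption. If only an infimizing sequence $\hat\delta^{(m)}$ is available, I would run the same argument along the sequence, since each term gives exactly $\|\hat\delta^{(m)}\|_\infty/C$, and pass to the limit. Both directions of the inequality are elementary: any feasible $\Delta$ for~\eqref{eq:siso_rir} produces a feasible $\hat\delta$ with $\|\Delta\|_\infty\ge\|\hat\delta\|_\infty/C$, and conversely every feasible $\hat\delta$ is realized by the $\delta_i^*$ above.
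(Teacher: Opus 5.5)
Your proof is correct. It uses the same ingredients as the paper's: the identity $R\Delta L=\sum_i c_i\delta_i$ with real $c_i$, and Lemma~\ref{lem:equaldist}. The argument is organized differently, though.

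The paper proves two inequalities. First, an optimizer of~\eqref{2steps:2} is feasible for~\eqref{eq:siso_rir}. Conversely, for an optimizer $\Delta^*$ of~\eqref{eq:siso_rir}, $\sum_i c_i\delta_i^*$ is feasible for~\eqref{2steps:1}, so $\|\hat\Delta_k^*\|_\infty\le(|c_1|+\cdots+|c_n|)\|\Delta^*\|_\infty$, and Lemma~\ref{lem:equaldist} turns this into the reverse inequality. You instead partition the feasible set of~\eqref{eq:siso_rir} by the value of $\hat\delta=\sum_i c_i\delta_i$ and compute both infima exactly as $\rho_*(g_1)/C$.

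Your route has three advantages. It delivers the value $\rho_*(g_1)/C$ that Proposition~\ref{prop:rir_rank1} actually needs. It handles $C=0$, which the paper's division by $|c_1|+\cdots+|c_n|$ ignores (harmlessly there, since $RL=1$ forces $C\ge1$). Most importantly, it never uses a minimizer of~\eqref{eq:siso_rir}, and such minimizers do not exist:
by a small-gain argument, $\IS_f(g_1)$ is open in $\RHinf$ and excludes $0$. If $\delta_0$ attained the infimum, $(1-\epsilon)\delta_0$ would remain feasible with smaller norm. The feasible set of~\eqref{eq:siso_rir} is the preimage of $\IS_f(g_1)$ under the bounded linear map $\Delta\mapsto R\Delta L$, so the same holds there. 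The paper's second step therefore needs an $\epsilon$-optimal rewrite; yours does not.

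By the same reasoning, an optimizer $\hat\Delta^*$ of~\eqref{2steps:1} never exists either. Reading the lemma ``under that existence assumption'' would therefore make it vacuous. Your infimizing-sequence version, with $\|\hat\delta^{(m)}\|_\infty/C\to\rho_*(g_1)/C$, is the meaningful reading. Make it your main line rather than a fallback; your argument goes through for it unchanged.
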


\begin{proof}
First, note that the infimum of problem~\eqref{2steps:2} is larger than or equal to the infimum of 
problem~\eqref{eq:siso_rir}. To see this, notice that the argument of infimum of problem~\eqref{2steps:2}
is a feasible solution to problem~\eqref{eq:siso_rir}. 

To see the opposite inequality, let the argument of infimum of problem~\eqref{eq:siso_rir} be 
$\Delta^*:=\mathrm{diag}\left(\delta_1^*,\cdots,\delta_n^*\right)$ with 
$\|\Delta^*\|_{\infty}=\gamma^*$, and let
$c_i:=r_i\ell_i$, where $r_i$ and $\ell_i$ are the $i^{\rm th}$ entries of $R$ and $L$, respectively. Furthermore, let  
the argument of infimum of~\eqref{2steps:1} be $\hat{\Delta}_k^*$. 
Since $c_1\delta_1^*+\cdots+c_n\delta_n^*$ is {\ti a} feasible solution
of problem~\eqref{2steps:1}, we have 
\begin{align*}
&\|\hat{\Delta}_k^*\|_{\infty}\le \|c_1\delta_1^*+\cdots+c_n\delta_n^*\|_{\infty} \\
&\le|c_1|\|\delta_1^*\|_{\infty}+\cdots+|c_1|\|\delta_n^*\|_{\infty}
\le (|c_1|+\cdots+|c_n|)\gamma^*.
\end{align*}
By Lemma~\ref{lem:equaldist}, the infimum of problem~\eqref{2steps:2} is equal to 
$\|\hat{\Delta}_k^*\|_{\infty}/(|c_1|+\cdots+|c_n|)$, and hence it is less than or equal to $\gamma^*$. 
This concludes the proof.
\end{proof}

The main result of this section is stated below. 
\begin{prop}
\label{prop:rir_rank1}
Consider the multi-agent system $\Sigma(\Delta,H,A)$ depicted in Section~\ref{sec:PF} with a rank-one 
matrix $A$. Suppose $A$ is diagonalizable and equal to $L\cdot\lambda_1\cdot R$, 
where $L,\ R^T \in\mathbb{R}^n$ and $RL=1$. Furthermore, suppose  
$g_1:=\frac{\lambda_1 w_m h}{1-\lambda_1 h}$ satisfies one of the following conditions:
\begin{itemize}
\item $g_1\in\G_1^0$ and $\theta_{g_1}'(0)> 0$; 
\item $g_1\in\G_2^{\pm\omega_p}$ and $\theta_{g_1}'(\omega_p)>|\sin(\theta_{g_1}(\omega_p))/\omega_p|$.
\end{itemize}
Then the nominal network $\mathcal{F}_\ell(H,A)$ has the exact dynamic RIR 
$\rho_*(\LFT(H,A))=\|g_1\|_{\infty}^{-1}/(|c_1|+\cdots+|c_n|)$,
where $c_i=r_i\ell_i$, $i=1,\cdots, n$, {\ti and} $r_i$ and $\ell_i$ are {\ti the} $i^{\rm th}$ {\ti entries} of $R$ and $L$, respectively. 
\end{prop}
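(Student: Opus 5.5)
The plan is to split the computation of $\rho_*(\LFT(H,A))$ into the two-step scheme \eqref{2steps:1}--\eqref{2steps:2}, which Lemma~\ref{lem:two-step} shows gives the same infimum as the original problem \eqref{eq:siso_rir}.

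\emph{Step 1: reduction to a SISO problem.} Since $A=L\lambda_1R$ has rank one and $RL=1$, internal stability of $\Sigma(\Delta,H,A)$ is equivalent to that of the scalar loop $[\![g_1,R\Delta L]\!]$ by Lemma~2 of~\cite{NetworkRIR_short}. For $\Delta=\diag(\delta_1,\dots,\delta_n)$ we have
\[
R\Delta L=\sum_{i=1}^n r_i\ell_i\delta_i=\sum_{i=1}^n c_i\delta_i .
\]
So the feasible set of \eqref{eq:siso_rir} consists of exactly those diagonal $\Delta$ for which the scalar $\hat\delta:=\sum_i c_i\delta_i$ belongs to $\IS_f(g_1)$. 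This is a real-rational SISO stabilization set.

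\emph{Step 2: solve \eqref{2steps:1} exactly.} Problem \eqref{2steps:1} is the SISO RIR of $g_1$, so its value is $\rho_*(g_1)\ge\|g_1\|_\infty^{-1}$ by the small-gain bound. Under either hypothesis on $g_1$, Theorem~\ref{thm:MS} supplies a real-rational $\hat\delta^*\in\RHinf$ with $\|\hat\delta^*\|_\infty=\|g_1\|_\infty^{-1}$ that marginally stabilizes $g_1$:
\begin{itemize}
\item case (I)(i) with $n=1$, $\omega_p=0$, where $g_1(0)$ is real because $g_1\in\RLinf$;
\item case (II) with $n=2$.
\end{itemize}
The remark after Theorem~\ref{thm:MS} rules out unstable pole-zero cancellation. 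In case (I) $\hat\delta^*$ is a constant; in case (II) it is a first-order all-pass of the form~\eqref{def:delta_star}, and the root-locus argument in that remark applies.

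Marginal stability is only a boundary point of $\IS_f(g_1)$, so I would follow the usual convention of~\cite{hara2023exact}. Either the infimum is approached by scaling $\hat\delta^*$ by $1+\epsilon$ (a small-gain/continuity argument moves the imaginary-axis poles into the OLHP), or the RIR is understood as attained in the marginal sense. Either way the infimum of \eqref{2steps:1} equals $\|g_1\|_\infty^{-1}$. I expect this point to be the main obstacle. Justifying that the marginally stabilizing $\hat\delta^*$ can be perturbed to a genuinely stabilizing one with arbitrarily close norm requires checking the direction in which the imaginary-axis closed-loop pole moves under scaling, which is essentially the PCR sign condition. I would import this from the corresponding argument in~\cite[Theorem~1]{hara2023exact} rather than redo it.

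\emph{Step 3: distribute the scalar perturbation.} With $\delta=\hat\delta^*$ in Lemma~\ref{lem:equaldist}, the infimum of \eqref{2steps:2} is
\[
\|\hat\delta^*\|_\infty\big/\textstyle\sum_i|c_i|=\|g_1\|_\infty^{-1}\big/\textstyle\sum_i|c_i| .
\]
It is attained by $\delta_i^*=\mathrm{sign}(c_i)\,\hat\delta^*/\sum_j|c_j|\in\RHinf$. This choice gives $R\Delta^*L=\hat\delta^*$, so $\Delta^*$ is feasible for \eqref{eq:siso_rir}. Lemma~\ref{lem:two-step} then identifies this value with $\rho_*(\LFT(H,A))$. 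The lower bound is also direct: any feasible $\Delta$ satisfies $\|\sum_i c_i\delta_i\|_\infty\le\sum_i|c_i|\,\|\Delta\|_\infty$, and the left-hand side is at least $\|g_1\|_\infty^{-1}$. This yields the claimed exact RIR.
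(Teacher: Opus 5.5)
Your proposal is correct and follows the paper's proof: reduce via Lemma~\ref{lem:two-step} to the SISO problem \eqref{2steps:1}, invoke Theorem~\ref{thm:MS} to obtain $\rho_*(g_1)=\|g_1\|_\infty^{-1}$, and divide by $\sum_i|c_i|$ using Lemma~\ref{lem:equaldist}. Your additions only spell out what the paper leaves implicit: the explicit feasible $\Delta^*$, the direct small-gain lower bound, and the $(1+\epsilon)$-scaling from marginal to strict stabilization, which is stabilizing because the PCR sign condition makes the imaginary-axis pole move left.
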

\begin{proof}
Note that by definition, $\rho_*(\mathcal{F}_\ell(H,A))$ is the infimum of~\eqref{eq:siso_rir} (with $G_k = g_1$), 
which by Lemma~\ref{lem:two-step} is equal to the infimum of~\eqref{2steps:2} with
$\Delta$ constrained by the argument of infimum of~\eqref{2steps:1} (with $G_k = g_1$). The infimum of~\eqref{2steps:1}, 
on the other hand, is by definition $\rho_*(g_1)$. 
Therefore, by Lemma~\ref{lem:equaldist}, the infimum of~\eqref{2steps:2} is equal to $\rho_*(g_1)/(|c_1|+\cdots+|c_n|)$. 
Now by Theorem~\ref{thm:MS}, the conditions $g_1$ is assumed to satisfy implies that $g_1$ can be marginally stabilized
by a stable controller with norm equal to $1/\|g_1\|_{\infty}$. Therefore, $\rho_*(g_1) = 1/\|g_1\|_{\infty}$ and
we conclude our proof. 
\end{proof}

\begin{remark}
Proposition~\ref{prop:rir_rank1} assumes that $A$ is diagonalizable. In the case where
$A$ is rank-one but not diagonalizable, $A$ can be expressed as $L R$ with $RL=0$. 
Then, the RIR problem of finding $\rho_*(\mathcal{F}_\ell(H,A))$ is equivalent to finding the smallest 
stable $\Delta$ such that the feedback system $[\![(R\Delta L), \lambda_1 w_m h]\!]$ is stable. 
To see this, note that the characteristic equation of $[\![\Delta, \mathcal{F}_\ell(H,A)]\!]$ is equal to
$\mathrm{det}(1-\lambda_1 w_m h (R\Delta L) )$, the same as that of $[\![(R\Delta L), \lambda_1 w_m h]\!]$. 
Thus, the smallest $\Delta$ is zero under the necessary condition that $h$ and $w_m$ belong to $\RHinf$. 
{\ti That is, the nominal network with $\Delta=0$ is necessarily stable in this case
and the RIR analysis is trivial.}
\end{remark}

\begin{remark}
\label{rmk:rank-1-pcr}
Following Proposition~\ref{prop:gen_pcr} and Theorem~\ref{thm:main_class_phi}, $g_1$ would satisfy the respective PCR conditions
if $\phi:=1/h$ belongs to the class $\mathcal{H}_p$ defined in Theorem~\ref{thm:main_class_phi}. 
\end{remark}

\subsection{Rank-Two $A$ with Complex Conjugate Eigenvalues}

Now consider the case where the interconnection matrix $A$ is of rank two with a pair of complex eigenvalues. 
As such, $A=L\Lambda_2 R$, where $\Lambda_2:=\mathrm{diag}(\lambda, \bar{\lambda})$ and $RL = I_2$. 
This leads to $G_k=\mathrm{diag}(g_1, g_2)$, with
\begin{align}
\label{eq:rank2-gs}
g_1(s):=\frac{\lambda w_m h(s)}{1-\lambda h(s)}, \ 
g_2(s)= \frac{\bar{\lambda} w_m h(s)}{1-\bar{\lambda} h(s)} =  \overline{g_1(\bar{s})} 
\end{align}

The following technical lemma is crucial for developing the main result of this subsection. 
\begin{lemma}
\label{lem:conjg}
The following hold concerning $g_1$ and $g_2$ introduced above. 
\begin{enumerate}
\item[(i)] $g_1\in\mathcal{RL}_\infty$ if and only if $g_2\in\mathcal{RL}_\infty$; furthermore,
$\|g_1\|_{\infty} = \|g_2\|_{\infty}$;
\item[(ii)] $g_1$ is unstable if and only $g_2$ is. 
\item[(iii)] {\ti Given} any real rational $\delta$ and a complex number $s$,  $1-g_1(s)\delta(s)=0$ if and only if $1-g_2(\bar{s})\delta(\bar{s})=0$.  
\item[(iv)] Let $\delta$ be real rational, then $\delta$ marginally stabilizes $g_1$ if and only if it marginally stabilizes $g_2$.
\end{enumerate}
\end{lemma}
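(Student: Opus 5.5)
The whole lemma will follow from one conjugation identity. Since $h$ and $w_m$ are real-rational, $\overline{h(\bar s)}=h(s)$ and $\overline{w_m(\bar s)}=w_m(s)$. Substituting into \eqref{eq:rank2-gs} gives
\begin{align*}
g_2(s)=\overline{g_1(\bar s)}
\end{align*}
for every $s$ that is not a pole. I will prove the four items in order, each by transporting a property of $g_1$ through the map $s\mapsto\bar s$ combined with complex conjugation of values.

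For (i), the identity says the poles of $g_2$ are exactly the complex conjugates of the poles of $g_1$. Conjugation maps the imaginary axis to itself, so $g_1$ has no poles on $j\IR$ if and only if $g_2$ has none. On the axis, $|g_2(j\omega)|=|g_1(-j\omega)|$, so taking the supremum over $\omega\in\IR$ gives $\|g_1\|_\infty=\|g_2\|_\infty$. Read literally, "$\mathcal{RL}_\infty$" would fail for a genuinely complex $g_1$; I interpret it as membership in $\CLinf$, i.e. boundedness on $j\IR$, which is the intended sense. For (ii), conjugation preserves the ORHP, so $g_1$ has a pole in the ORHP if and only if $g_2$ does.

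For (iii), I use that $\delta$ is real-rational, so $\overline{\delta(\bar s)}=\delta(s)$. Then
\begin{align*}
1-g_2(\bar s)\delta(\bar s)=\overline{1-g_1(s)\delta(s)},
\end{align*}
and one side vanishes exactly when the other does.

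For (iv), the closed-loop poles of $[\![g_i,\delta]\!]$ are the roots of $1-g_i\delta=0$, taken together with the requirement of no CRHP pole-zero cancellation. By (iii), the closed-loop pole set for $g_2$ is the conjugate of that for $g_1$. Multiplicities also carry over, because the identity holds as an identity of meromorphic functions, so orders of zeros are preserved under $s\mapsto\bar s$. Hence "all poles in the CLHP, with the imaginary-axis poles simple" transfers between the two loops. Internal stability also requires that none of the four closed-loop transfer functions has a CRHP pole. Each of these four entries for $g_2$ is the conjugate-reflection of the corresponding entry for $g_1$, so they are all stable or all unstable together.

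The main obstacle is a bookkeeping point in (iv) rather than any estimate. If marginal stability is meant in the paper's "$\omega_c$-marginally stable" sense, the critical frequency for $g_2$ is $-\omega_c$ instead of $\omega_c$. The statement only asserts marginal stabilization with no specified frequency, so the equivalence still holds. I will state explicitly that the imaginary-axis pole moves from $j\omega_c$ to $-j\omega_c$, so that this step is used correctly in the rank-two result that follows.
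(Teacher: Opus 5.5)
Your proposal is correct and follows essentially the same route as the paper. Everything is transported through the conjugation identity $g_2(s)=\overline{g_1(\bar s)}$ together with $\overline{\delta(\bar s)}=\delta(s)$ for real-rational $\delta$, exactly as in the paper's arguments for (i)--(iv), including using (iii) to carry over root multiplicities. Your extra bookkeeping is a sound refinement of points the paper leaves implicit: you check all four entries of $[\![g_i,\delta]\!]$, and you note that the imaginary-axis pole moves from $j\omega_c$ to $-j\omega_c$.
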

\begin{proof}
To see (i), note that $\overline{g_1(j\omega)} = g_2(-j\omega)$ for all $\omega$, and therefore
$|g_1(j\omega)| = |g_2(-j\omega)|$. Hence if $g_1$ is bounded over the imaginary axis, so is $g_2$, 
and the supremums of $|g_1(j\omega)|$ and $|g_2(j\omega)|$ over $\omega$ would be the same. To see (ii),
note that the complex conjugates of the poles of $g_1$ are the poles of $g_2$ (and vice versa). Since 
a pair {\ti of} complex {\ti conjugates} have the same real part, the aforementioned implies that $g_1$ has an unstable
{\ti pole} if and only if $g_2$ has one. To see (iii), notice that since $\delta$ is real rational, we have
$\overline{1-g_1(s)\delta(s)} = 1 - \overline{g_1(s)}\delta(\overline{s}) 
= 1 - g_2(\overline{s})\delta(\overline{s})$. 

To prove statement (iv), we note that if all the roots of $1-g_1(s)\delta(s)$ are in the {\ti CLHP}, 
so are the roots of $1-g_2(s)\delta(s)$, and vice versa. To see this, suppose one of the
roots of $1-g_2(s)\delta(s)$ (say, $\zeta$) has a positive real part. Since $\overline{\zeta}$ would be 
a root of $1-g_1(s)\delta(s)$, we reach a contradiction. Finally, statement (iii) also implies that
the multiplicities of the roots of $1-g_1(s)\delta(s)$ are identical to those of $1-g_2(s)\delta(s)$.
\end{proof}

The main result of this subsection is stated next.
\begin{prop}
\label{prop:rir_rank2}
Consider the multi-agent system $\Sigma(\Delta,H,A)$ depicted in Section~\ref{sec:PF} with a rank-two 
matrix $A$. Suppose $A$ is %
equal to $L \mathrm{diag}(\lambda,\overline{\lambda}) R$ with $RL=I_2$. Let $G_2:=\mathrm{diag}(g_1, g_2)$
with $g_1, \ g_2$ as defined in~\eqref{eq:rank2-gs}. 
Suppose 
\begin{enumerate}
\item[(i)]  $A$ is diagonally normalizable\footnote{$A$ is said to be diagonally normalizable if there exists a complex diagonal matrix
$D$ such that $D^{-1}AD$ is normal. Note that any $2\times 2$ rank-two matrix is diagonally normalizable if and only if its diagonal entries
are the same or its off-diagonal entries have the same sign.};
\item[(ii)] $g_1\in\G_1^{\omega_p}$ with $\omega_p\not=0$, and satisfies $\theta_{g_1}'(\omega_p)>|\sin(\theta_{g_1}(\omega_p)/\omega_p|$. 
\end{enumerate}
Then the nominal network $\mathcal{F}_\ell(H,A)$ has the exact RIR $\rho_*(\mathcal{F}_\ell(H,A)) = 
1/\|g_1\|_{\infty}$. In this case, $\rho_*(\mathcal{F}_\ell(H,A)) = \rho_h(\mathcal{F}_\ell(H,A))$,
i.e., the minimum{\ti-norm} stabilizing diagonal perturbation is homogeneous.
\end{prop}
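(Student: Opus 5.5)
The proof will match an upper and a lower bound on $\rho_*(\mathcal{F}_\ell(H,A))$. The upper bound will come from a homogeneous perturbation built from Theorem~\ref{thm:MS}. The lower bound will come from a small-gain argument made tight by hypothesis (i), diagonal normalizability.

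\emph{Upper bound.} Hypothesis (ii) puts $g_1$ in case (I)-(ii) of Theorem~\ref{thm:MS}, so there is a real-rational $f\in\RHinf$ with $\|f\|_\infty=1/\|g_1\|_\infty$ that marginally stabilizes $g_1$. As in the remark after Theorem~\ref{thm:MS}, $f$ is a first-order all-pass scaled by $1/\|g_1\|_\infty$. No unstable pole-zero cancellation can occur, by the argument in that remark: $g_1=\lambda w_m h/(1-\lambda h)$ with $\lambda\notin\IR$, so its single unstable pole is non-real and cannot cancel the real zero of $f$. Since $f$ is real rational, Lemma~\ref{lem:conjg}(iv) shows that $f$ also marginally stabilizes $g_2$, with the same imaginary-axis roots conjugated and the same multiplicities.

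Take $\Delta=fI_n\in\bfDel_h$. Then $R\Delta L=fRL=fI_2$, and $[\![G_2,fI_2]\!]$ splits into $[\![g_1,f]\!]$ and $[\![g_2,f]\!]$, both marginally stable. Marginal stability only shows that the infimum is approached. To get strict stability at a norm arbitrarily close to $1/\|g_1\|_\infty$, I would invoke the standard perturbation of the marginal stabilizer used in \cite[Theorem 1]{hara2023exact}. This pushes the simple imaginary-axis roots into the OLHP; it can be applied to $g_1$ alone and carried over to $g_2$ by conjugation symmetry (Lemma~\ref{lem:conjg}(iii)). The conclusion is $\rho_*\le\rho_h\le 1/\|g_1\|_\infty$.

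\emph{Lower bound.} Let $D$ be a complex diagonal matrix with $A_N:=D^{-1}AD$ normal. Because $D$ is diagonal, it commutes with every $\Delta\in\bfDel_d$ and with $H=W\otimes I_n$. Hence the closed loop built from $\Delta$ and $G=\LFT(H,A)$ is similar to the one built from $\Delta$ and $D^{-1}GD=\LFT(H,A_N)$, and the two have the same internal stability. Now $A_N$ is unitarily diagonalizable, $A_N=U\,\mathrm{diag}(\lambda,\bar\lambda,0,\dots,0)U^*$, so $D^{-1}GD=U\,\mathrm{diag}(g_1,g_2,0,\dots,0)U^*$. Its $\Hinf$ norm is therefore $\max(\|g_1\|_\infty,\|g_2\|_\infty)=\|g_1\|_\infty$, using Lemma~\ref{lem:conjg}(i). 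By the small-gain bound $\|\cdot\|_\infty^{-1}\le\rho_f\le\rho_*$ of Section~\ref{sec:PF}, applied to $D^{-1}GD$, any stabilizing $\Delta\in\bfDel_d$ satisfies $\|\Delta\|_\infty\ge 1/\|g_1\|_\infty$.

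Combining the two bounds gives $\rho_*=\rho_h=1/\|g_1\|_\infty$, and the near-optimal perturbation is homogeneous. I expect the main obstacle to be justifying the similarity step rigorously. One must check that internal stability in the sense of Proposition~\ref{prop:IS1}, i.e.\ condition~\eqref{eq:IGIDelta}, is invariant under the complex diagonal scaling. One must also handle the fact that $D^{-1}GD$ is complex-rational, so the small-gain theorem has to be applied in $\CLinf$ rather than $\RLinf$. A secondary delicate point is the passage from marginal stability to strict stability at the infimum.
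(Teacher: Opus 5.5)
Your proof is correct. Its upper-bound half is the same as the paper's: Theorem~\ref{thm:MS} gives an all-pass $\delta^*$ with $\|\delta^*\|_\infty=1/\|g_1\|_\infty$ that marginally stabilizes $g_1$, Lemma~\ref{lem:conjg} transfers this to $g_2$, and $R(\delta^*I_n)L=\delta^*I_2$.

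The lower bound is where you take a genuinely different route. The paper stays inside the two-step framework~\eqref{2steps:1}--\eqref{2steps:2}. Small gain bounds the reduced full-block problem~\eqref{2steps:1} below by $\|G_2\|_\infty^{-1}=\|g_1\|_\infty^{-1}$, so the homogeneous $\delta^*I_2$ is optimal there. The paper then cites Proposition~4 of \cite{NetworkRIR_short}, which uses diagonal normalizability to lift this to optimality of $\delta^*I_n$ for the diagonal network problem~\eqref{eq:siso_rir}.

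You prove the lower bound directly by $D$-scaling. Internal stability of the loop with $G$ is equivalent to that with $D^{-1}GD=U\,\mathrm{diag}(g_1,g_2,0,\ldots,0)U^*$, whose norm is exactly $\|g_1\|_\infty$, and small gain then applies. This is self-contained and shows exactly why hypothesis (i) is needed: without it, $\|G\|_\infty=\|LG_2R\|_\infty$ can exceed $\|g_1\|_\infty$, and $\|R\Delta L\|_\infty$ need not be bounded by $\|\Delta\|_\infty$. The paper's route is shorter given the companion result and fits the general low-rank reduction it uses for both the rank-one and rank-two cases.

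The obstacles you anticipate are mild. The similarity step follows immediately from~\eqref{eq:IGIDelta}: $\begin{bmatrix} I\\ D^{-1}GD\end{bmatrix}(I-\Delta D^{-1}GD)^{-1}=(I_2\otimes D^{-1})\begin{bmatrix} I\\ G\end{bmatrix}(I-\Delta G)^{-1}D$. The small-gain instability bound is a homotopy argument on $\det(I-\tau\Delta\tilde G(j\omega))$ and never uses real coefficients.

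Your handling of marginal versus strict stability is more careful than the paper's, which simply calls $\delta^*I_2$ the argument of the infimum. The perturbation you need is just $(1+\epsilon)\delta^*$. Since $\|\delta^*g_1\|_\infty\le 1$, the Nyquist plot never crosses $(1,\infty)$, so case (b1) of Proposition~\ref{prop:OmegaMS2} holds and $\theta_{\delta^*g_1}'(\omega_p)>0$. Hence the imaginary-axis root moves into the OLHP as the gain increases, while the remaining roots stay in the OLHP for small $\epsilon$. Lemma~\ref{lem:conjg}(iii) carries this over to $g_2$.
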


\begin{proof}
Note that by Theorem~\ref{thm:MS}, assumption (ii) implies that $g_1$
can be marginally stabilized by a $\delta^*\in\RHinf$ with $\|\delta^*\|_{\infty} = 
1/\|g_1\|_{\infty}$. By Lemma~\ref{lem:conjg}, $g_2$ can also be marginally stabilized by
$\delta^*$, and therefore $G_2$ {\ti can} be marginally stabilized by $\delta^*I_2$. 
Furthermore, since $\|g_2\|_{\infty} =\|g_1\|_{\infty}$, we have $\|G_2\|_{\infty}
=\|g_1\|_{\infty}$ as well. Note that by the small-gain theorem, 
the infimum of problem~\eqref{2steps:1} (with $G_2$ replacing $G_k$) cannot be smaller than $\|G_2\|_{\infty}^{-1}$. Therefore,
the infimum of problem~\eqref{2steps:1} is equal to $\|g_1\|_{\infty}^{-1}$ with the argument of 
infimum being $\delta^*I_2$ (i.e., $\hat{\Delta}_k^* = \delta^*I_2$). 

As the argument of the infimum of~\eqref{2steps:1} is diagonal and homogeneous, with 
assumption (i) we may apply Proposition~4 of~\cite{NetworkRIR_short} to conclude that an optimal solution 
to~\eqref{eq:siso_rir} is $\delta^*I_n$. Since, by definition, the infimum of~\eqref{eq:siso_rir} is 
$\rho_*(\LFT(H,A))$, we conclude that $\rho_*(\LFT(H,A))=\|\delta^*\|_{\infty}=1/\|g_1\|_{\infty}$,
and clearly $\rho_*(\LFT(H,A)) = \rho_h(\LFT(H,A))$ since the argument of infimum is homogeneous.
\end{proof}

\begin{remark}
\label{rmk:rank-2-pcr}
Similar to the rank-one scenario, the PCR condition stated in Proposition~\ref{prop:rir_rank2} would be satisfied 
if $\phi:=1/h$ belongs to the class $\mathcal{H}_p$ defined in Theorem~\ref{thm:main_class_phi}. 
\end{remark}

\section{A Biological Application}
\label{sec:BioAppli}

In this section, we demonstrate the robust instability analysis using the model of oscillatory biomolecular reactions.
Specifically, we consider a genetic regulatory network (GRN) with a cyclic network structure depicted in Fig. \ref{fig:cyclicGRN}. 
In this system, each protein $P_i$ is produced through the processes of transcription and translation of gene $i$ and represses the production of another, forming the cyclic network. 
The dynamics of the transcription and translation process (gene expression) are modeled by 
\begin{equation}
  \begin{aligned}
  &\dot{r}_i(t) = -a r_i(t) + \beta_i u_i(t) + \beta_i \hat{w}_m (\hat{\delta}_i u_i(t))\\
  &\dot{p}_i(t) = -b p_i(t) + c_i r_i(t),
  \end{aligned}
  \label{eq:GRN-model}
\end{equation}
where $r_i(t)$ and $p_i(t)$ represent the concentrations of the messenger RNA (mRNA) and the protein associated with gene $i$, respectively. 
The input $u_i(t)$ denotes the repressive regulation by the protein $P_{i-1}$ defined by 
\begin{align}
  u_i(t) = \frac{K^\nu}{K^\nu + p_{i-1}^\nu} (=: \psi(p_{i-1})), 
  \label{eq:Hill-def}
\end{align}
where $K$ and $\nu$ is the dissociation constant and the Hill coefficient, respectively, and the subindex $i$ is taken modulo $N$, i.e., $p_0 = p_N$. 
Unlike the original model in the literature \cite{Niederholtmeyer2015}, \eqref{eq:GRN-model} considers multiplicative uncertainty represented by $\hat{\delta}_i$ with the weight function $\hat{w}_i$, 
where $\hat{\delta}_i$ and $\hat{w}_i$ represent the linear operators with the input-output mapping specified by stable transfer functions $\delta_i(s)$ and $w_m(s)$, respectively. 

\begin{figure}[tb]
  \centering
  \includegraphics[clip, width=0.7\columnwidth]{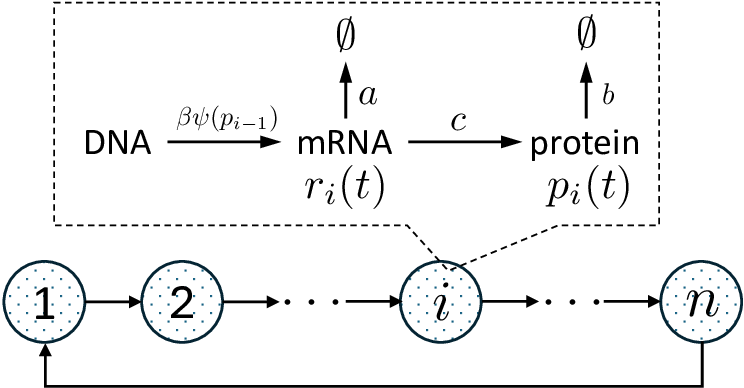}
  \caption{Genetic regulatory network with cyclic network structure considered in the example.}
  \label{fig:cyclicGRN}
\end{figure}

When the number of protein species $N$ is odd, the GRN with the cyclic network has a unique equilibrium point $\bm{x}^* := [r_1^*, p_1^*, \ldots, r_N^*, p_N^*]\in \mathbb{R}^{2N}$ \cite{Hori2011}. 
The local instability of the equilibrium $\bm{x}^*$ leads to sustained oscillations of the mRNA and protein concentrations \cite{Hori2011}, 
which was experimentally demonstrated for $N=3$ and $N=5$ using synthetic GRNs \cite{Niederholtmeyer2015,Elowitz2000}.
Thus, to guarantee oscillations amid uncertainties, it is crucial to quantify the RIR.

For this purpose, we consider the linearized model of the cyclic GRN around the equilibrium point $\bm{x}^*$. 
We define $h(s)$ as the transfer function from $u_i(t)$ to $p_i(t)$ in \eqref{eq:GRN-model}
and $A \in \mathbb{R}^{N \times N}$ as the matrix of the linearized gain of $\psi(\cdot)$ in \eqref{eq:Hill-def}. 
{\red 
Specifically, $h(s)$ is represented by \eqref{eq:h-grn} 
and $A$ is a cyclic matrix with entries defined by:
\begin{align}
A_{ij} = 
\begin{cases} 
\psi'(p_N^*), & (i,j) = (1,N) \\ 
\psi'(p_j^*), & j = i-1 \text{ for } i = 2, \dots, N \\ 
0, & \text{otherwise.} 
\end{cases}\label{eq: matrixA}
\end{align}
}
Our goal is then to find the RIR for the cyclic network system $\Sigma(\Delta, H, A)$, i.e., a minimum-norm $\delta(s)$ that stabilizes the linearized system. 

In what follows, we consider the case of $N=7$ and set the parameters in model \eqref{eq:GRN-model} as shown in \eqref{eq:param-grn} and $\nu=2$. %
Then, the unique equilibrium point is given by $r_i^* = 0.2949$ and $p_i^* = 19.1424\ (i=1,2,\ldots,7)$, which defines the matrix $A$ in eq. \eqref{eq: matrixA}  
with $\psi'(p_{i-1}^*) = -0.006247$.
As shown in Fig. \ref{fig:four_images}, the nominal system is unstable with this set of parameters, and the concentrations $r_i^*$ and $p_i^*$ exhibit sustained oscillations. 

In general, the matrix $A$ depends on the equilibirum point $\bm{x}^*$, which is affected by the static gain of the uncertainties $w_m(0)\delta_i(0)\ (i=1,2,\ldots,7)$. 
This introduces additional complexity to the RIR analysis (see \cite{hara2022instability} for the case of a single uncertainty). 
However, to focus on the demonstration of the RIR analysis for the linearized system, we assume $w_m(0)=0$ in this example. 
Specifically, we choose a high-pass weight function of the form $w_m(s) = s/(s+\xi)$,
where $\xi$ is set 0.001 to ensure $|w_m(j\omega_p)| \simeq 1$. 

According to Theorem \ref{thm:homo_equiv_hetero}, the RIR is obtained as $1/\|g_1\|_\infty = 0.3907$ with 
the stabilizing perturbation
\begin{align}
  & \delta_i(s) = 0.3907 (1+\epsilon) \frac{s-1.814}{s+1.814} (=: \delta(s)). 
  \label{eq:delta-example}
\end{align}
and $\epsilon = 0$. 
To verify this result, we first compute the poles of the closed-loop system with uncertainty. 
Specifically, the characteristic equations $1-\delta(s)g_1(s)=0$ and $1-\delta(s)g_7(s)=0$ each has a single root at $0+0.31j$ and $0-0.31j$, respectively, while all the other roots of 
$1-\delta(s)g_i(s)=0\ (i=1,2,\ldots,7)$ are located in the OLCP. 
It should be noted that $g_i(s)\ (i=1,2,\ldots,7)$ are functions with complex coefficients, and thus, the complex roots of each characteristic equation do not necessarily appear as conjugate pairs.
Next, we slightly increase the gain of the perturbation by setting $\epsilon = 0.05$ and find that the system is stabilized by $\delta(s)$. 
Indeed, the time-course simulations of eq. (\ref{eq:GRN-model}) in Figure \ref{fig: oscillator-sim} show the concentrations converge to the equilibrium point when $\epsilon = 0.05$ while they continue to 
oscillate when $\epsilon = -0.05$.

\begin{figure}[tb]
\centering   
\begin{subfigure}{0.47\columnwidth}
    \centering
    \includegraphics[width=\textwidth,clip]{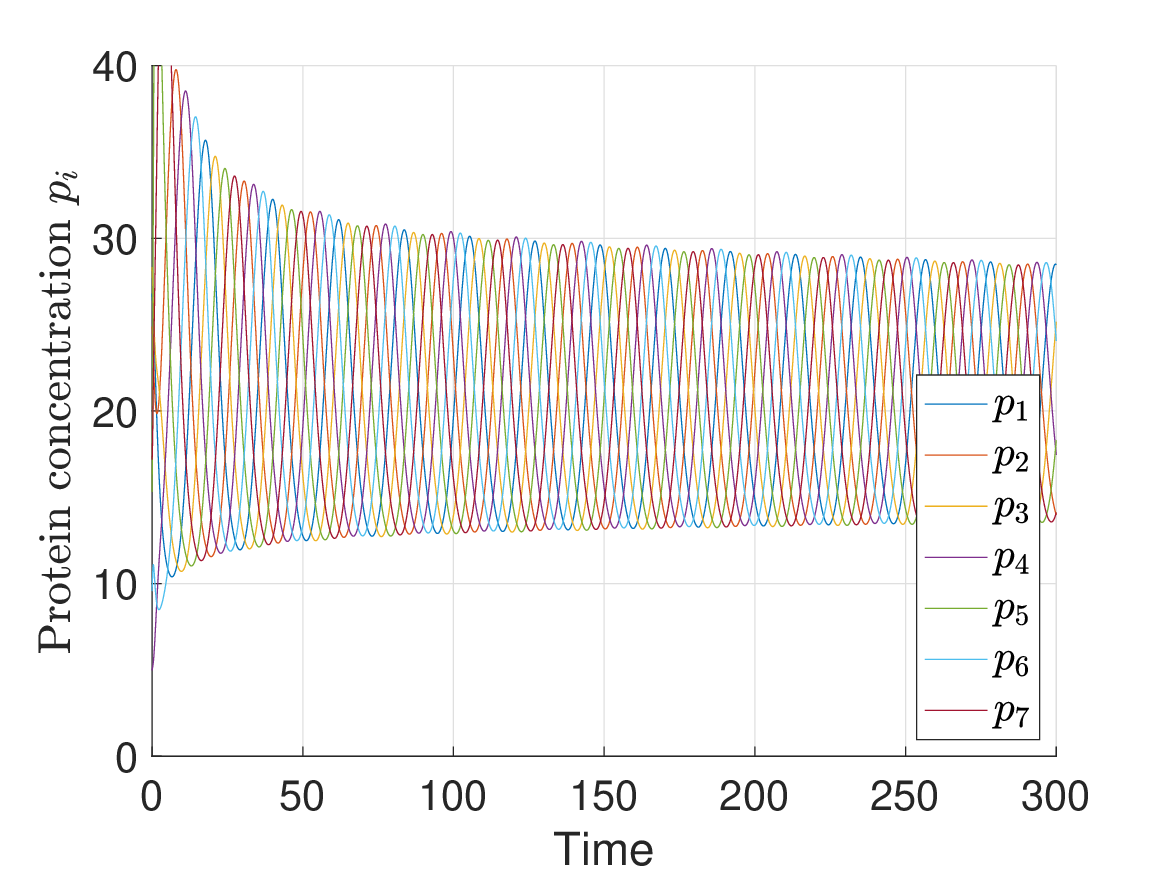}
    \caption{$\epsilon = -0.05$}
\end{subfigure}
\begin{subfigure}{0.47\columnwidth}
    \centering
    \includegraphics[width=\textwidth,clip]{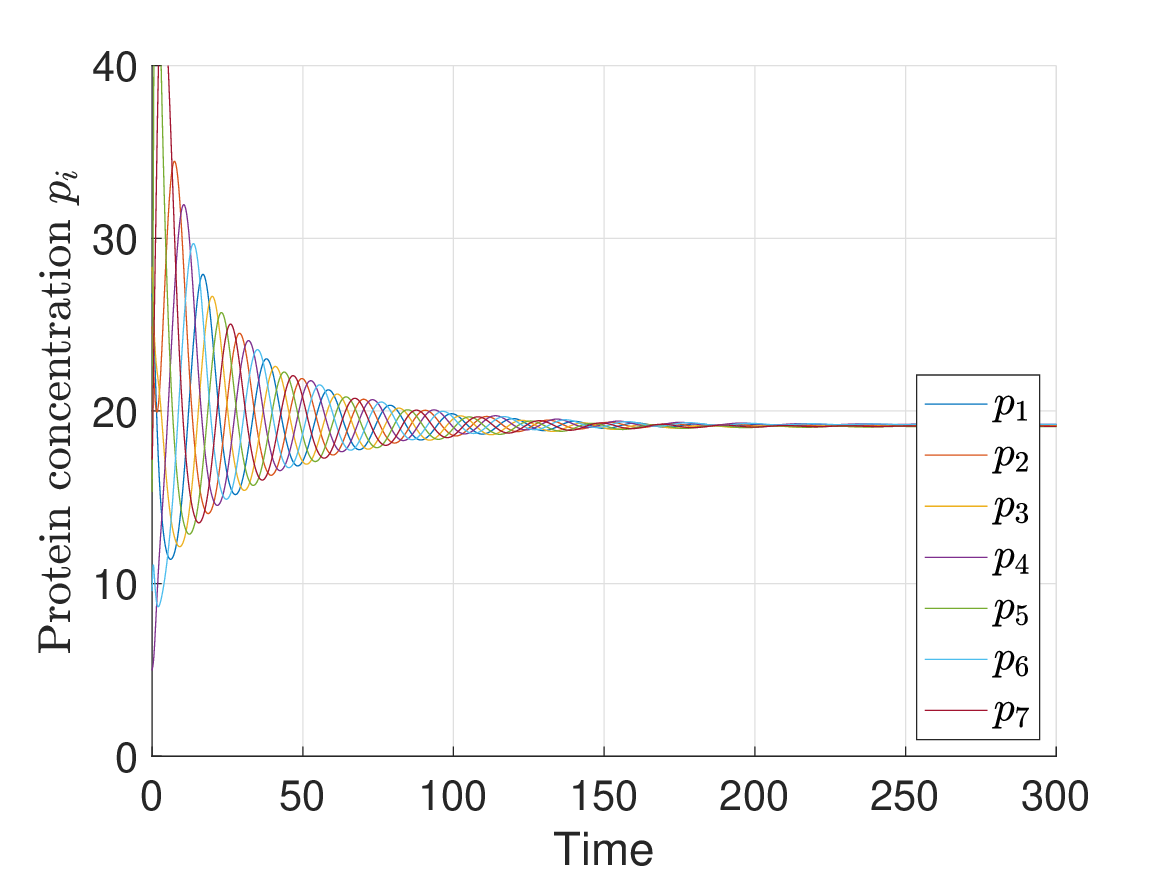}
    \caption{$\epsilon=0.05$}
\end{subfigure}
\caption{Time-course of the proteins $p_i(t)$ for $n=7$} 
\label{fig: oscillator-sim}
\end{figure}

\section{Conclusion}
\label{sec:Concl}
We identified three classes of LTI networked systems of which the robust instability radius may be found by analyzing, respectively, one representative SISO agent. In each case, we provided sufficient conditions under which the robust instability radius is exacted. The results were applied to analyze oscillatory biomolecular reactions of a genetic regulatory network (GRN) with a cyclic network structure.

\section*{References}
\appendix
\subsection{Proof of Proposition~\ref{prop:hetro_eq_homo}} 
\label{subsec:proof_hetro_eq_homo}

The following lemma is instrumental in proving Proposition~\ref{prop:hetro_eq_homo}. 

\begin{lemma} \label{lem:1}
Let $r\in\IR$ be given such that $0<r<1$, and consider
\begin{align}
\label{def:bold_Theta}
\bfThe:=\log(1+\bfdel), \hs
\bfdel:=\{\delta\in\IC:\,|\delta|\leq r\,\}.
\end{align}
The set $\bfThe$ is convex.
\end{lemma}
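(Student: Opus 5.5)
The plan is to describe $\bfThe$ explicitly as the region between two graphs over the real axis, and then check convexity by showing one boundary is concave and the other convex. Write $\log(1+\delta)=x+jy$. For $|\delta|\le r<1$ the point $1+\delta$ lies in the closed disk of radius $r$ centered at $1$. This disk is contained in the open right half plane, so the principal logarithm is analytic and injective on a neighborhood of it. Hence $\bfThe$ is a compact set whose boundary is the image of the circle $|\delta|=r$. Set $\rho:=e^{x}=|1+\delta|$. Then $\delta=\rho e^{jy}-1$, and the constraint $|\delta|\le r$ becomes
\[
\rho^2-2\rho\cos y+1\le r^2,
\qquad\text{i.e.}\qquad
\cos y\ \ge\ \frac{e^{2x}+1-r^2}{2e^{x}}=\frac{\cosh x}{1}\cdot\frac{1}{1}-\frac{r^2e^{-x}}{2}\,.
\]
More cleanly, the condition is $\cos y\ge \tfrac12\big(e^{x}+(1-r^2)e^{-x}\big)=:c(x)$. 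Because $|y|<\pi/2$ on $\bfThe$ (the disk lies in the right half plane), this is equivalent to
\[
\bfThe=\{x+jy:\ x\in[\ln(1-r),\ln(1+r)],\ |y|\le f(x)\},\qquad f(x):=\arccos c(x).
\]
The interval for $x$ is exactly the set where $c(x)\le1$. Such a set is convex provided $f$ is concave on this interval. The region is symmetric about the real axis, with the lower boundary $-f$ convex, so concavity of $f$ is all that is needed.

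The key step is therefore to prove that $f=\arccos\circ c$ is concave. Here $c(x)=\tfrac12(e^{x}+ke^{-x})$ with $k=1-r^2\in(0,1)$. A direct computation of $f''$ is possible but messy. I would instead try a geometric shortcut: $f(x)$ is the half-angle subtended at the origin by the chord of the disk $|w-1|\le r$ at radius $\rho=e^x$. If that does not lead anywhere, I will compute directly. We have $f'=-c'/\sqrt{1-c^2}$, and
\[
f''=-\frac{c''(1-c^2)+c\,(c')^2}{(1-c^2)^{3/2}} .
\]
So concavity reduces to showing $c''(1-c^2)+c(c')^2\ge0$. Since $c''=c$ for this $c$, the condition becomes $c\big(1-c^2+(c')^2\big)\ge0$. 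We know $c>0$. Moreover $c^2-(c')^2=(c+c')(c-c')=e^{x}\cdot ke^{-x}=k$, so $1-c^2+(c')^2=1-k=r^2>0$. This gives $f''<0$ in the interior of the interval.

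The main obstacle is the endpoints, where $c=1$ and $f$ is not differentiable because the tangent to the boundary is vertical. Concavity of $f$ on the open interval, together with continuity of $f$ on the closed interval and $f=0$ at the endpoints, still yields concavity on the closed interval. Hence the hypograph-type region $\{|y|\le f(x)\}$ is convex. A final point to confirm is that the principal branch introduces no issues: $|y|\le\arcsin r<\pi/2$, so there is no wrap-around at $\pm\pi$, and $\bfThe$ is exactly the described region.
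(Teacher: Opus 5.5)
Your proof is correct and follows the same strategy as the paper: write $\bfThe$ as $\{x+jy:\ |y|\le h(x)\}$ over $[\ln(1-r),\ln(1+r)]$ and show the upper boundary $y=h(x)$ is concave by proving $h''<0$. The difference is only in how $h$ is handled. The paper parametrizes the boundary by the angle $\phi$ of $\delta=re^{j\phi}$ and differentiates parametrically. You instead obtain $h=\arccos\bigl(\tfrac12(e^{x}+(1-r^2)e^{-x})\bigr)$ in closed form and use $c''=c$ and $c^2-(c')^2=1-r^2$; this gives the same value of $h''$ and makes the identification of $\bfThe$ with the region between the two graphs, and the endpoint behavior, fully explicit.
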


\medskip
\begin{proof}
Since the boundary of $\bfdel$ is characterized by $re^{j\phi}$ with 
$\phi\in(-\pi,\pi]$, the boundary of $\bfThe$ is characterized by 
\begin{equation}
\label{xy}
x+jy=\log(1+re^{j\phi}), 
\end{equation}
with $\phi\in(-\pi,\pi]$, or equivalently, $x=f(\phi)$, $\tan y=g(\phi)$,
\[
f(\phi):=\frac{1}{2}\log(1+r^2+2r\cos\phi), \ \ 
g(\phi):=\frac{r\sin\phi}{1+r\cos\phi}.
\]
In the above notation, note that $y$ is the angle of the complex number
$(1+r\cos\phi)+jr\sin\phi$. Since $|r|<1$, the real part is 
always positive and therefore $y\in[-\pi/2,\pi/2]$. 
Clearly, the boundary of $\bfThe$ is a closed continuous curve that is 
symmetric about the real axis because if a particular point $x_o+jy_o$ is on 
the boundary for $\phio\in[-\pi,\pi]$ in (\ref{xy}), then $x_o-jy_o$ is on 
the boundary for $-\phio\in[-\pi,\pi]$.
Noting that $y=0$ occurs when $\phi=0$ or $\pm\pi$, the boundary of $\bfThe$ 
crosses the real axis twice at $x=\log(1\pm r)$. As $\phi$ increases from
$0$ to $\pi$, the value of $x$ monotonically decreases from $\log(1+r)$ to 
$\log(1-r)$ while $y$ stays in $[0,\pi/2]$ except at the end points, defining 
a function $h$ that describes the boundary by $y=h(x)$.
Thus, we now see that the set $\bfThe$ is the region enclosed by the
curves $y=h(x)$ and $y=-h(x)$ on $x\in[\log(1-r),\log(1+r)]$.

We prove convexity of the set $\bfThe$ by showing that the curve $y=h(x)$ is
concave. To this end, note that
\begin{align*}
&h'(x) = \frac{g'(\phi)\cos^2y}{f'(\phi)} = -\frac{r+\cos\phi}{\sin\phi}, \\
&h''(x) = \frac{1+r\cos\phi}{\sin^2\phi}\cdot
\frac{(1+r\cos\phi)^2+(r\sin\phi)^2}{-r\sin\phi}.
\end{align*}
Noting that the curve $y=h(x)$ is parametrized for $0<\phi<\pi$, 
the curve is concave due to $h''(x)<0$ on the interval $\log(1-r)<x<\log(1+r)$.
\end{proof}

Now we proceed to prove Proposition~\ref{prop:hetro_eq_homo}. Let 
\begin{align*}
&\mathbb{P}:=\{s\in\mathbb{C}~|~\mbox{$s$ is not a pole of ${\delta}_i$ and ${\delta}_i(s)\not=-1$}, \\
&\hspace{6.5cm} i=1,\cdots,n \},
\end{align*}
and note that since ${\delta}_i\in\RHinf$, $\mathbb{C}\backslash\mathbb{P}$ is a set of finite isolated points.
Furthermore, since ${\delta}_i\in\RHinf$ with $\|{\delta}_i\|_{\infty} \le r<1$, for $i=1,\cdots, n$, we have 
$\Re(1+{\delta}_i(s))\ge 1-r > 0$ for all $s$ in the CRHP and $i=1,\cdots,n$. As such, CRHP is a subset of $\mathbb{P}$.

Given ${\Delta}:=\diag({\delta}_1,\cdots,{\delta}_n)\in\mathbf{\Delta}_d$  
with $\|{\Delta}\|_{\infty} < 1$, let ${\delta}_{\rm av}$ be defined pointwise as
\begin{align}
{\delta}_{\rm av}(s):=\mathrm{exp}\left(\frac{1}{n}\sum_{i=1}^n\log(1+{\delta}_i(s))\right)-1
\label{def:delta_av}
\end{align}
for $s\in\mathbb{P}$, and define ${\delta}_{\rm av}(s)$ to be $-1$ for $s$ such that ${\delta}_i(s) =-1$ for some $i$.  
Clearly, ${\delta}_{\rm av}$ is 
well defined and continuous in the CRHP (in particular, continuous on the imaginary axis) and is analytic in
the ORHP. Also, one can readily verify that ${\delta}_{\rm av}(j\omega)=\overline{{\delta}_{\rm av}(-j\omega)}$
for all $\omega\in\mathbb{R}$. Furthermore, since each ${\delta}_i\in\RHinf$, the limit
\begin{align*}
\lim_{\Re(s)\ge 0, \ |s|\to\infty} {\delta}_i(s)
\end{align*}
exists for all $i=1,\cdots,n$, and so does the limit 
$\lim_{\Re(s)\ge 0, |s|\to\infty}|{\delta}_{\rm av}(s)|$. 
Thus, using ${\delta}_{\rm av}\in\mathbf{H}_{\infty}$ and the aforementioned properties, by the theorem of 
real-rational approximation in $\mathbf{H}_{\infty}$~\cite[Theorem A.7.56]{CZ_1995},
we conclude that there exists {\ti a} ${\delta}\in\RHinf$ such that $\|{\delta} - {\delta}_{\rm av}\|_{\infty}$ 
is arbitrarily small.   
In the following, we will show that $\|{\delta}_{\rm av}\|_{\infty} \le r$ and ${\delta}_{\rm av} I$ 
stabilizes the cyclic network $\mathcal{F}_{\ell}(H,A)$.
Since $\|{\delta} - {\delta}_{\rm av}\|_{\infty}\approx 0$, we may therefore conclude that 
$\|{\delta}\|_{\infty}\le r+\epsilon = \|{\Delta}\|_{\infty}+\epsilon$
for an arbitrarily small $\epsilon>0$. Furthermore, note that the closed-loop system 
$[\![{\delta}I, \mathcal{F}_{\ell}(H,A)]\!]$ is equal to 
$[\![({\delta}-{\delta}_{\rm av})I, [\![ {\delta}_{\rm av}I, \mathcal{F}_{\ell}(H,A)]\!] \ ]\!]$. Therefore,
if $[\![ {\delta}_{\rm av}I, \mathcal{F}_{\ell}(H,A)]\!]$ is stable and $\|{\delta}-{\delta}_{\rm av}\|_{\infty}$ 
is arbitrarily small, we may conclude that ${\delta}I$ stabilizes $\mathcal{F}_{\ell}(H,A)$ by the small-gain theorem. 

To see $\|{\delta}_{\rm av}\|_{\infty} \le r$, note that given $s\in\mathbb{P}$, 
the imaginary part of $\frac{1}{n}\sum_{i=1}^n \log(1+{\delta}_i(s))$ belongs to $(-\pi,\pi]$ since those
of $\log(1+{\delta}_i(s))$, $i=1,\cdots,n$, do. Therefore by~\eqref{def:delta_av} we have 
$\log(1+{\delta}_{\rm av}(s))=\frac{1}{n}\sum_{i=1}^n \log(1+{\delta}_i(s))$. 
Given $s$ in CRHP, $|{\delta}_i(s)| \le r < 1$ for all $i=1,\cdots, n$; therefore all of 
$\log(1+{\delta}_i(s))$ belong to the convex set $\bfThe$ defined in~Lemma~\ref{lem:1}~(cf. equation~\eqref{def:bold_Theta}).  
The convexity of $\bfThe$ implies that $\log(1+{\delta}_{\rm av}(s))\in\bfThe$. 
Since $\log(1+\cdot)$ is a bijection between its domain and range, 
we must have $|{\delta}_{\rm av}(s)| \le r $, which in turn implies 
$\|{\delta}_{\rm av}\|_{\infty} \le r$.  

To see ${\delta}_{\rm av}I$ stabilizes the cyclic network $\mathcal{F}_{\ell}(H,A)$, note that by~\eqref{def:delta_av}, we have $(1+{\delta}_{\rm av}(s))^n = \prod_{i=1}^n(1+{\delta}_i(s))$ for any $s\in\mathbb{P}$. 
Therefore, for all $s\in\mathbb{P}$, 
\begin{eqnarray}
&& 1 - (-\mu)^nh(s)^n(1+{\delta}_{\rm av}(s))^n  \nonumber \\ 
&=& 1 - (-\mu)^nh(s)^n\prod_{i=1}^n(1+{\delta}_i(s)).
\label{char_polys}
\end{eqnarray}
Note that the left-hand side of~\eqref{char_polys} is the characteristic polynomial of 
$\Sigma({\delta}_{\rm av}I, H,A)$,
while the right-hand side is that of $\Sigma({\Delta}, H,A)$. 
Since ${\Delta}$ stabilizes the cyclic network, the roots of 
the right-hand side are in the ORHP, and therefore any $s\in\mathbb{P}$ 
such that $1+\mu^nh(s)^n(1+{\delta}_{\rm av}(s))^n=0$
must satisfy $\Re(s)<0$. On the other hand, if $s\not\in\mathbb{P}$, then $s$ would either be a pole for some ${\delta}_i$
or satisfy $1+{\delta}_i(s)=0$ for some $i$. In the second case, $1+{\delta}_{\rm av}(s)=0$ (by the definition of ${\delta}_{\rm av}$)
and thus $s$ is not a root of $1+\mu^nh(s)^n(1+{\delta}_{\rm av}(s))^n$. In the first case, by continuity, 
$|1+\mu^nh(\zeta)^n(1+{\delta}_{\rm av}(\zeta))^n|\to\infty$ as $\zeta\to s$, and therefore $s$ cannot be a root either. In conclusion, 
the roots of $1+\mu^nh(s)^n(1+{\delta}_{\rm av}(s))^n$ must be in the ORHP, and therefore ${\delta}_{\rm av}I$ stabilizes the cyclic
network.

\subsection{Proof of Lemma~\ref{lem:der_phi}}
\label{subsec:ProofLem_T1}
By the Gauss-Lucas theorem~\cite{Lucas:1879}, all zeros of $\dot{\phi}$ belong to the convex hull of the set of zeros of $\phi$.

\subsection{Proof of Lemma~\ref{lem:gcr_pcr_formulas}}
\label{subsec:ProofLem_T2}

Note that $\frac{d}{d\omega}\ln|\phi(j\omega)|$ and $\frac{d}{d\omega}\theta_{\phi}(\omega)$ are the real and imaginary parts of 
$\frac{d}{d\omega}\log\phi(j\omega)$, respectively, where $\log\phi(j\omega)$ is well-defined since $\phi$ has no zero on the imaginary {\ti axis}. 
Note that 
\begin{align*}
\frac{d}{d\omega}\log\phi(j\omega)&=\sum_{i=1}^n\frac{j}{j\omega-z_i} 
=\sum_{i=1}^n\frac{j}{2}\left(\frac{1}{j\omega-z_i}+\frac{1}{j\omega-\bar{z}_i}\right)\\
&=\sum_{i=1}^n\frac{-\omega+j(-\mathrm{Re}(z_i))}{(j\omega-z_i)(j\omega-\bar{z}_i)}.
\end{align*}
The second equality follows the fact that $z_i$ and $\bar{z}_i$ are both zeros of $\phi$, 
and is true regardless of $z_i$ being real or complex. The formulas in~\eqref{poly_loggcr} and~\eqref{poly_pcr} are 
straightforwardly derived from the last expression. 

\subsection{Proof of Lemma~\ref{lem:convexity_gen}}
\label{subsec:ProofLem_T3}

Let $\phi(j\omega)=x(\omega)+j y(\omega)$. Note that the assumption $\frac{d}{d\omega}\mathrm{Re}(\phi(j\omega))\not=0$
for $\omega\in(0,\omega_{\pi})$ implies that the inverse function $\chi = x\in\mathcal{I}_x \mapsto \omega\in(0,\omega_{\pi})$ exists, where $\mathcal{I}_x$ is the real interval $\{x(\omega)~|~\omega\in(0,\omega_{\pi})\}$. 
Therefore the contour $\{(x(\omega),y(\omega))~|~\omega\in(0,\omega_{\pi})\}$ is the graph of the function 
$h(x):=y(\chi(x))$. Moreover, the convexity of $\mathcal{R}_{\pi}(\phi)$ would be implied by the concavity of $h$. We will 
proceed to show that the conditions stated in the lemma {\ti imply} that $\frac{d^2}{dx^2}h(x) < 0$ for $x\in\mathcal{I}_x$.

First note that $\frac{d}{dx}h(x)=(\frac{dy}{d\omega})/(\frac{dx}{d\omega}) = \tan(\vartheta(\omega))$,
where $\vartheta(\omega):=\angle \frac{d}{d\omega}\phi(j\omega)=\angle (j\cdot \dot{\phi}(j\omega)) 
= \theta_{\dot{\phi}}(\omega)+\frac{\pi}{2}$. As such,
\begin{align*}
\frac{d^2}{dx^2}h(x) &= 
\frac{d}{d\omega}\tan\left(\theta_{\dot{\phi}}(\omega)+\frac{\pi}{2}\right) \left(\frac{dx}{d\omega}\right)^{-1}\\
&= \left(\frac{dx}{d\omega}\right)^{-1} \sec\left( \theta_{\dot{\phi}}(\omega)+\frac{\pi}{2} \right)^2
\left(\frac{d}{d\omega}\theta_{\dot{\phi}}(\omega)\right){\ti.}
\end{align*}
The above formula shows that $\frac{d^2}{dx^2}h(x) < 0$ for $x\in\mathcal{I}_x$ if 
$\frac{d}{d\omega}\theta_{\dot{\phi}}(\omega) > 0$ and $\frac{d}{d\omega}\mathrm{Re}(\phi(j\omega)) < 0$ for $\omega\in(0,\omega_{\pi})$.

\subsection{Proof of Lemma~\ref{lem:tech_cond4}}
\label{subsec:ProofLem_T4}

Recall that inequality~\eqref{ineq:gen_pcr_generic} has the equivalent form stated in~\eqref{ineq:eqiv_cond3} in the polar coordinate, which in turn
is equivalent to
\begin{align}
r_{\phi_i}'/r_{\phi_i} \le \omega\left( \left(r_{\phi_i}'/r_{\phi_i}\right)^2 +(\theta_{\phi_i}')^2\right),
\label{ineq:eqiv_cond4}
\end{align}
where $r_{\phi_i}$ and $\theta_{\phi_i}$ are respectively the gain and phase of $\phi_i$, $i=1,2$. 

Now let $\phi:=r_{\phi}\mathrm{e}^{j\theta_{\phi}}$, where $r_{\phi}= r_{\phi_1}r_{\phi_2}$ and $\theta_{\phi}=\theta_{\phi_1}+\theta_{\phi_2}$. We have
\begin{align*}
\omega&\left(\left(r_{\phi}'/r_{\phi}\right)^2+(\theta_{\phi}')^2\right)-r_{\phi}'/r_{\phi}=\\
&\sum_{i=1, 2}\omega\left( \left(r_{\phi_i}'/r_{\phi_i}\right)^2+(\theta_{\phi_i}')^2\right)-r_{\phi_i}'/r_{\phi_i}\\
&\hspace{0.7cm} +2\omega\left(\left(r_{\phi_1}' r_{\phi_2}' \right)/\left( r_{\phi_1} r_{\phi_2} \right) +\theta_{\phi_1}'\theta_{\phi_2}'\right){\ti.}
\end{align*}
Since the three terms on the right-hand side of the equality above are all non-negative, we conclude that $\phi$ satisfies~\eqref{ineq:eqiv_cond4}, 
and hence~\eqref{ineq:gen_pcr_generic}.
This concludes the proof.

\subsection{Proof of Lemma~\ref{lem:tech_cond5}}
\label{subsec:ProofLem_T5}

The result follows from a straightforward calculation: let $F(\Omega):=|\phi(j\omega)|^2 = |\phi_1(j\omega)\phi_2(j\omega)|^2 = F_1(\Omega)F_2(\Omega)$. 
Then $F'' = F_1''F_2+F_1F_2'' + 2F_1'F_2'$. Since $\phi_1$ and $\phi_2$ are gain-monotone increasing and satisfy~\eqref{ineq:2nd_deriv}, 
clearly all terms on the right-hand side of the aforementioned equation are non-negative. Thus, $F''\ge 0$ as claimed.

\begin{IEEEbiography}%
{Shinji Hara}
received the Ph.D. from Tokyo Inst. of Tech. (TITech), in 1981. 
In 1984, he joined TITech 
and had served as a Professor for ten years. 
From 2002 to 2017 he was a Professor in the Dept. of 
Information Physics and Computing at the University of Tokyo (U Tokyo). 
He is Professor Emeritus of TITech and U Tokyo. 
His current research interests include 
networked dynamical systems and glocal control.  
Dr. Hara received many awards including 
the George S. Axelby Outstanding Paper Award from the IEEE CSS in 2006. 
He was the President of SICE, Japan in 2009, a Vice President of the IEEE CSS in 2009-2010, and 
an IFAC Council member in 2011-2017. He is a Fellow of IFAC, IEEE, ACA and SICE. 
\end{IEEEbiography}

\vspace{-10mm}

\begin{IEEEbiography}%
{Yutaka Hori}
received the B.S degree in engineering, and the M.S. and Ph.D. degrees in information science and technology from the University of Tokyo 
in 2008, 2010 and 2013, respectively. He held a postdoctoral appointment at California Institute of Technology from 2013 to 2016. In 2016, 
he joined Keio University, where he is currently an associate professor. His research interests lie in feedback control theory and its 
applications to synthetic biomolecular systems. He is a member of IEEE, SICE, and ISCIE. 
\end{IEEEbiography}

\vspace{-10mm}

\begin{IEEEbiography}%
{Tetsuya Iwasaki} (M'90-SM'01-F'09) received B.S.\ and M.S.\
degrees %
from the Tokyo Institute of Technology in 1987 and 1990, %
and his Ph.D.\ degree %
from Purdue University in 1993. He held faculty positions at Tokyo Tech and
University of Virginia before joining the UCLA.
His research interests include {\ti control, oscillation, locomotion, pattern formation,
and neurodynamics.} 
He has received several awards from NSF, SICE, IEEE, and ASME.
He has served as Senior/Associate Editor of several control journals.
\end{IEEEbiography}

\vspace{-10mm}
\begin{IEEEbiography}%
{Chung-Yao Kao} 
received the Sc.D. degree from the Massachusetts Institute of Technology, Cambridge, USA, in 2002.
From 2002 to 2004, he held postdoctoral research positions at the Lund Institute of Technology, Lund, Sweden, the 
Mittag-Leffler Institute, Djursholm, Sweden, and the Royal Institute of Technology (KTH), Stockholm, Sweden. 
From 2004 to 2009, he was a Senior Lecturer with 
the Department of Electrical and Electronic Engineering, University of Melbourne, Australia. In 2009, he 
joined the Department of Electrical Engineering, National Sun Yat-Sen University, Kaohsiung, Taiwan, where he is 
currently a Professor. His research interests include robust control, systems theory, and optimization. 
\end{IEEEbiography}

\vspace{-10mm}

\begin{IEEEbiography}%
{Sei Zhen Khong}
received the Bachelor of Electrical Engineering degree (with first class honours) and the Ph.D. degree from The 
University of Melbourne, Australia,
in 2008 and 2012, respectively. He has held research positions at the Department of Electrical and Electronic Engineering, 
The University of
Melbourne, Australia, the Department of Automatic Control, Lund University, Sweden, the Institute for Mathematics and 
its Applications, The University
of Minnesota, Twin Cities, USA, and the Department of Electrical and Electronic Engineering, The University of Hong Kong, 
China. His research interests
include network control, robust control, systems theory, and extremum seeking control.
\end{IEEEbiography}


\begin{thebibliography}{1}
%
\bibitem{Elowitz2000}
M.~B. Elowitz and S. Leibler, ``A synthetic oscillatory network of transcriptional regulators,''
\emph{Nature}, vol. 403, no. 6767, pp. 335--338, 2000.

\bibitem{FHNmodel}
R. FitzHugh, ``Impulses and physiological states in theoretical models of nerve membrane,''
{\em Biophysical Journal}, Vol.1, pp.445-466, 1961.

\bibitem{ijspeert:08} %
A.J. Ijspeert,
\newblock ``Central pattern generators for locomotion control in animals and robots: A review,''
\newblock {\em Neural Networks}, vol.~21, pp.642-653, 2008.

\bibitem{YMKH:MBMC2015}
Y.~Hori, H.~Miyazako, S.~Kumagai, S.~Hara, 
\newblock ``Coordinated spatial pattern formation in biomolecular communication networks,''
\newblock {\em IEEE Trans. on Molecular, Biological and Multi-Scale Communications}, 
vol.~1 no. 2, pp. 111--121, 2015.

\bibitem{hara2022instability}
S.~Hara, T.~Iwasaki, and Y.~Hori, ``Instability margin analysis for
  parametrized {LTI} systems with application to repressilator,'' {\em
  Automatica}, vol.~136, no. 2, 110047, 2022.

\bibitem{KHHIK:IFAC23}
C.-Y. Kao, S. Hara, Y. Hori, T. Iwasaki, and S.Z. Khong,
``On phase change rate maximization with practical applications,''
{\em IFAC World Congress}, vol.~56, no. 2, pp.5805-5810, 2023.

\bibitem{hara2023exact}
S.~Hara, C.-Y. Kao, S.~Khong, T.~Iwasaki, and Y.~Hori, ``Exact instability margin analysis and minimum-norm strong 
stabilization -- phase change rate maximization,'' {\em IEEE Transactions on Automatic Control},
vol.~69, no.~4, pp.~2084--2099, 2024.
%

\bibitem{Youla:Automatica1974}
D.C. Youla, J.J. Bongiorno, Jr., and C.N. Lu, 
``Single-loop feedback-stabilization of linear multivariable dynamical plants,''
{\em Automatica}, vol.10, pp.159-173, 1974.

\bibitem{hara:19}
S.~Hara, T.~Iwasaki, and Y.~Hori.
\newblock Robust stability analysis for {LTI} systems with generalized
  frequency variables and its application to gene regulatory networks.
\newblock {\em Automatica}, 105:96--106, 2019.

{\red 
\bibitem{NetworkRIR_short} 
S.~Hara, et al.  ``Robust Instability Analysis for Networked Dynamical Systems: Upper and Lower Bounds,'' 
{\em IEEE Transactions on Automatic Control}, (to be submitted)
}

\bibitem{Niederholtmeyer2015}
H. Niederholtmeyer, Z. Z. Sun, Y. Hori, E. Yeung, A. Verpoorte, R. M. Murray and S. J. Maerkl, ``Rapid Cell-free Forward Engineering of Novel Genetic Ring Oscillators,'' {\em eLife}, vol.4, e09771, 2015.

\bibitem{Vinnicombe2001}
G.~ Vinnicombe, 
``Uncertainty and Feedback: $H_{\infty}$ loop-shaping and the $\nu$-gap metric,'' 
{\em Imperial College Press}, 2001.

%
%
%
%

\bibitem{Lucas:1879}
F.~Lucas, ``Sur une application de la m\'{e}canique rationnelle \`{a} la   th\'{e}orie des \'{e}quations,'' 
\emph{C. R. Acad. Sci. Paris} 89, 224–226, 1879.

%
%
%
%
%
%

\bibitem{CZ_1995}
R.~F.~Curtain, and H.~Zwart, 
``An Introduction to Infinite-Dimensional Linear Systems Theory,''
\emph{Springer}, 1995.

\bibitem{Inoue:ECC2013}
M. Inoue, et al., 
``An instability condition for uncertain systems toward robust bifurcation analysis,'' 
{\em Proc. of the European Control Conf. 2013}, pp. 3264-3269, 2013.

\bibitem{Hori2011}
Y. Hori, T.-H. Kim, S. Hara, ``Existence criteria of periodic oscillations in cyclic gene regulatory networks,'',
{\em Automatica},vol. 47, no. 6, pp. 1203--1209, 2011.

\end{thebibliography}
\end{document}